\newtheorem{theorem}{Theorem}[section]
\newtheorem{lemma}[theorem]{Lemma}
\newtheorem{proposition}[theorem]{Proposition}
\theoremstyle{definition}
\newtheorem{definition}[theorem]{Definition}
\newtheorem{remark}{Remark}
\def\R{\mathbb{R}}
\let\e=\varepsilon
\let\var=\varepsilon
\def\pa{\partial}
\def\square{\hbox{$\sqcap\kern-7pt\sqcup$}}
\def\dev{\operatorname{div}}
\newcommand{\dsp}{\displaystyle}
\newcommand{\ints}{\int_{\R^3}}
\newcommand{\intd}{\iint_{\R^3\times \R^3}}
\title{Polynomial propagation of moments and global existence for a Vlasov-Poisson system with a point charge}
\author{Laurent Desvillettes, Evelyne Miot, Chiara Saffirio}
\def\adresse{
\begin{description}
\item[L.  Desvillettes:] CMLA, ENS Cachan \& CNRS,
 61 Av. du Pdt. Wilson, 94235 Cachan Cedex, France.
Email: \texttt{desville@cmla.ens-cachan.fr}

\item[E. Miot:] Universit\'e Paris-Sud, D\'epartement de math\'ematiques, B\^atiment 425, 91405 Orsay, France.
Email: \texttt{Evelyne.Miot@math.u-psud.fr}
\item[C. Saffirio:] Institute for Applied Mathematics, University of Bonn, 53115 Bonn, Germany
Email: \texttt{chiara.saffirio@hausdorff-center.uni-bonn.de}
\end{description}
}
\date{\today}
\begin{document}

\maketitle

\begin{abstract}
In this paper, we extend to the case of initial data constituted of a Dirac mass plus a bounded density (with finite moments) the theory of Lions and Perthame \cite{LP} for the Vlasov-Poisson equation. Our techniques also provide polynomially growing in time estimates for moments of the bounded density.
\end{abstract}

\section{Introduction}

The aim of this article is to propose new estimates for moments of the three--dimensional Vlasov--Poisson system, which enable us to improve the existing theory in the following directions: first the possibility of considering initial data which are the sum of a Dirac mass and a continuous density which does not vanish in the vicinity of the Dirac mass; secondly, the opportunity of bounding polynomially (in an explicit computable way) with respect to time the high order moments of the distribution (when the initial datum is not assumed to be compactly supported).   
The three-dimensional Vlasov-Poisson system  for initial data containing one Dirac mass writes 
\begin{equation}
\label{syst:VP}\begin{cases}
\dsp  \partial_t f+v\cdot \nabla_x f+(E+F)\cdot \nabla_v f=0 
\vspace*{0.5em}\\
\dsp E(t,x)=\int_{\R^3} \frac{x-y}{|x-y|^3}\rho(t,y)\,dy\vspace*{0.5em}\\
\dsp \rho(t,x)=\int_{\R^3} f(t,x,v)\,dv\vspace*{0.5em}\\
\dsp F(t,x)=\frac{x-\xi(t)}{|x-\xi(t)|^3}.\end{cases}
\end{equation}
Here  $f:=f(t,x,v)\in L^\infty(\R_+\times \R^3\times \R^3)$ corresponds to a nonnegative density of charged particles in a plasma, subjected to a self-induced electric force field $E:=E(t,x)$. The plasma interacts with a point charge, located at $\xi(t)$ with
velocity $\eta(t)$, which induces the singular electric field $F:=F(t,x)$. The evolution of the charge is itself given by 
 \begin{equation}\label{syst:VP-2}
\begin{cases}
\dsp \dot{\xi}(t)=\eta(t),\\
\dsp \dot{\eta}(t)=E(t,\xi(t)).
 \end{cases}
\end{equation}
The initial conditions associated to \eqref{syst:VP}-\eqref{syst:VP-2} are
 \begin{equation}\label{syst:VP-3}
 (\xi(0),\eta(0))=(\xi_0,\eta_0),\quad f(0,x,v)=f_0(x,v).
\end{equation}
When there is no point charge, then \eqref{syst:VP}-\eqref{syst:VP-2} reduces to \eqref{syst:VP} with $F=0$, which is the standard Vlasov-Poisson system. As announced, \eqref{syst:VP}-\eqref{syst:VP-2} can also be thought of as the standard Vlasov-Poisson system for the total density $f(t)+\delta_{\xi(t)}\otimes \delta_{\eta(t)}$. 

In this paper we will focus on weak solutions to \eqref{syst:VP}, which we define as couples $(f,\xi)$ such that the first equation in \eqref{syst:VP} is satisfied in the sense of distributions and such that \eqref{syst:VP-2} holds in the classical sense.
Our main global existence and propagation of moments result of such solutions with finite moments is stated in the following
\begin{theorem}
\label{thm:main}
There exists $0<\lambda\leq 1$ with the following property : If $f_0\in L^1\cap L^\infty(\R^3\times \R^3)$ is 
nonnegative,  $(\xi_{0},\eta_{0})\in \R^3 \times \R^3$, and
\begin{itemize}
\item[(i)] $\dsp \mathcal{M}_0 =\iint_{\R^3\times \R^3} f_0(x,v)\,dx\,dv<\lambda$;
\item[(ii)] There exists $m_0>6$ such that for all $ m< m_0$
\begin{equation*}
 \iint_{\R^3\times \R^3}  \left( |v|^2+\frac{1}{|x-\xi_{0}|}\right)^{m/2}\,f_0(x,v)\, dx\, dv<+\infty;
\end{equation*}
\end{itemize}

Then there exists a global weak solution $(f,\xi)$ to the system
\eqref{syst:VP}--\eqref{syst:VP-3}, with $f\in C(\R_+,L^p(\R^3\times \R^3))\cap L^\infty(\R_+,L^\infty(\R^3\times \R^3))$ for any $1\leq p<+\infty$, $E\in L^\infty([0,T],L^\infty(\R^3))$ for all $T>0$ and $\xi\in C^2(\R_+)$.

Moreover, for all $t\in \R_+$ and for all $m<\min(m_0,7)$,
\begin{equation*}
\iint_{\R^3\times \R^3}  \left( |v|^2+\frac{1}{|x-\xi(t)|^{}}\right)^{m/2}
\,f(t,x,v)\, dx\, dv\leq C\,(1+t)^c,
\end{equation*}
where $C$ and $c$ only depend on $(f_0,(\xi_0,\eta_0))$ and can be estimated explicitly (cf. Remark \ref{rk:explicit-const}). 
\end{theorem}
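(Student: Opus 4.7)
I would replace the singular kernel in $F$ by a smooth cutoff (e.g.\ $F^\e(t,x)=(x-\xi^\e(t))/(|x-\xi^\e(t)|^2+\e^2)^{3/2}$) and mollify $f_0$ into $f_0^\e\in C_c^\infty$. The resulting coupled smooth system for $(f^\e,\xi^\e,\eta^\e)$ is a smooth perturbation of classical Vlasov--Poisson and admits a global classical solution. Conservation of mass, of all $L^p$ norms of $f^\e$, and of the total energy provide uniform $L^1\cap L^\infty$ bounds on $f^\e$ and a polynomial-in-$t$ bound on $|\eta^\e(t)|$, uniform in $\e$.

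\textbf{Pseudo-energy weight.} The crux is to propagate the moment associated with
\begin{equation*}
W^\e(t,x,v)=|v-\eta^\e(t)|^2+\frac{2}{|x-\xi^\e(t)|},
\end{equation*}
which is the mechanical energy of a test particle in the Coulomb field of a static charge at $\xi^\e(t)$. Using $\dot\xi^\e=\eta^\e$ and $\dot\eta^\e=E^\e(t,\xi^\e)$, one verifies the cancellation identity
\begin{equation*}
(\pa_t+v\cdot\nabla_x+(E^\e+F^\e)\cdot\nabla_v)W^\e = 2(v-\eta^\e)\cdot\big[E^\e(t,x)-E^\e(t,\xi^\e(t))\big],
\end{equation*}
in which the singular $F^\e$-contribution disappears exactly, leaving only the regular field difference. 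Multiplying by $(m/2)(W^\e)^{m/2-1}$, integrating against $f^\e$ and using the Vlasov equation yields
\begin{equation*}
\frac{d}{dt}\intd(W^\e)^{m/2}f^\e\,dx\,dv = m\intd(W^\e)^{m/2-1}(v-\eta^\e)\cdot\big[E^\e(t,x)-E^\e(t,\xi^\e(t))\big]f^\e\,dx\,dv.
\end{equation*}
Since $W^\e$ and $|v|^2+1/|x-\xi^\e|$ differ only by quantities controlled by $|\eta^\e(t)|^2$, propagating the former moment is equivalent (up to polynomial factors in $t$) to propagating the moment in the theorem's statement.

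\textbf{Closing and limits.} The right-hand side is estimated using $\|E^\e\|_\infty$ combined with a Lions--Perthame interpolation bounding $\|\rho^\e\|_\infty$ by $\|f^\e\|_\infty$ and velocity moments of $f^\e$, together with the standard inequality $\|E^\e\|_\infty\lesssim\|\rho^\e\|_\infty^{2/3}\|\rho^\e\|_1^{1/3}$. Interpolating between the moments of order $m$ and $m-1$ and absorbing the field-difference term yields a Gronwall-type inequality
\begin{equation*}
\frac{d}{dt}\widetilde M_m^\e(t)\leq C(1+t)^a\big(\widetilde M_m^\e(t)\big)^b,\qquad b<1,
\end{equation*}
valid for $m<7$, which produces polynomial growth with explicit constants depending on the data via (i)--(ii). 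The smallness hypothesis (i) enters precisely to keep the interpolation constants under control and to prevent a runaway concentration of $\rho^\e$ onto the point charge. Once the uniform polynomial bound is in hand, standard compactness (weak-$\ast$ convergence in $L^\infty$ for $f^\e$, velocity averaging for $\rho^\e$, Arzel\`a--Ascoli for $(\xi^\e,\eta^\e)$) allows passage to the limit $\e\to 0$; the moment estimate descends by lower semicontinuity, and the limit pair $(f,\xi)$ satisfies \eqref{syst:VP}--\eqref{syst:VP-3} in the required sense.

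\textbf{Main obstacle.} The heart of the proof is the cancellation identity above: finding the weight $W^\e$ that annihilates the singular $F^\e$-term, and then extracting from the residual $E^\e(t,x)-E^\e(t,\xi^\e(t))$ enough quantitative smallness to close the differential inequality at the threshold $m<7$. A naive $L^\infty$ bound on this difference would yield a much smaller admissible range of $m$; the sharper threshold requires exploiting both the \emph{difference} structure of the field and the mixed kinetic/Coulomb form of $W^\e$, and it is precisely here that the smallness of $\mc M_0$ is used in a quantitative way.
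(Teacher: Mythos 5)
Your pseudo-energy weight and its cancellation identity are indeed the right starting point (and match the paper's $h$ from Definition~\ref{mom} and Lemmas~\ref{lemma:pointwise-energy}--\ref{lemma:est-moments}), but the proposal is missing the two mechanisms that actually carry the proof past the obvious obstructions, and it misattributes the role of hypothesis (i).

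First, after the cancellation you are left with the residual term $(v-\eta)\cdot[E(t,x)-E(t,\xi(t))]$, and the dangerous piece is $|E(t,\xi(t))|$: the field evaluated \emph{at the moving point charge}. You propose to absorb this via $\|E\|_\infty \lesssim \|\rho\|_\infty^{2/3}\|\rho\|_1^{1/3}$, but $\rho$ is not a priori in $L^\infty$ (only $f$ is), and more importantly an $L^\infty$ bound on $E$ requires velocity moments of order $>6$ — exactly what you are trying to prove — so this is circular. The paper breaks the circularity with a \emph{virial argument} (Proposition~\ref{prop:virial}): differentiating $|\boldsymbol{x}(s)-\xi(s)|$ twice along trajectories isolates $|\boldsymbol{x}-\xi|^{-2}$, and integrating against $f_0\,dx\,dv\,ds$ yields $\int_0^t|E(s,\xi(s))|\,ds \le C(1+t) + \mathcal{M}_0\int_0^t|E(s,\xi(s))|\,ds$. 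This is the \emph{only} place where (i), the smallness $\mathcal{M}_0<\lambda\le 1$, is used — to absorb the last term — not "to keep interpolation constants under control" as you suggest. Without this virial step your Gronwall scheme cannot close.

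Second, "interpolating between the moments of order $m$ and $m-1$" does not by itself give a subcritical exponent and polynomial growth. The paper instead writes the Vlasov equation with the \emph{internal} fields $E_{\text{int}}+F_{\text{int}}$ as a source and the backward flow of the \emph{external} fields as transport (\eqref{source-term}--\eqref{characteristics-2}) — not the free flow — so that the Duhamel term can be converted into a divergence plus lower order using $M,N$ (Subsection~\ref{subsec:duhamel} and the Appendix). Then it combines two incompatible estimates: a small-time estimate (Proposition~\ref{prop:small-time}) whose exponent $\frac{3(k+3)}{(m+3)^2}$ is supercritical but multiplied by $t_0^{1+\gamma+\delta}$, and a large-time estimate on $[t_0,t]$ with a subcritical exponent $\frac{1}{m+3}-\frac{4\alpha}{(3-2\alpha)(m-2)}$. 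Optimizing the split time $t_0$ in terms of $H_m(t)$ is what makes the exponent strictly less than one and is also what produces the bound $m<7$ (the sign condition $e(m)<0$). Your proposal has neither the internal/external field splitting nor the small/large-time split, so there is no concrete mechanism to obtain $b<1$ in your $\frac{d}{dt}\widetilde M_m^\e\le C(1+t)^a(\widetilde M_m^\e)^b$, nor any reason the threshold $7$ should appear.

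The regularization and compactness framework you describe is reasonable and close in spirit (the paper mollifies $f_0$ to vanish near $\xi_0$ and invokes~\cite{MMP} rather than cutting off $F$, but either would do). The gaps are the virial inequality and the Duhamel/time-splitting machinery.
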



\medskip

\begin{remark}
Since the density $f$  constructed in Theorem \ref{thm:main} has bounded moments of
order higher than 6 and belongs to $C(\R_+,L^p(\R^3\times \R^3))$, the field  $E$ has some additional regularity: it belongs to $C(\R_+,C^{0,\alpha}(\R^3))$ for some $0<\alpha<1$ (see Corollary 2 in \cite{LP}). In particular the ODE \eqref{syst:VP-2} holds in the classical sense.

\end{remark}

 We stress that if we modified the assumption $m_0>6$ in Theorem \ref{thm:main} (for example if $3<m_0<6$), then the electric field $E$ might be not continuous and we would not be able to prove the existence of the trajectory $t\mapsto (\xi(t),\eta(t))$ for all initial datum $(\xi_0,\eta_0)$.

\medskip

Note however that if we considered the standard Vlasov--Poisson system (i.e. system \eqref{syst:VP} with $F=0$), we would not need $m_0>6$ (the assumption $m_0>3$ would be enough). We would also not need assumption $(i)$, so that we would be left with the assumptions of \cite{LP}. 
In that case, we would still get the explicit, polynomial with respect to time estimate of the moments of order $m< \min(m_0,7)$, which provides an improvement of the result in \cite{LP}, where the growth with respect to time of the moments is of the form $\exp(\exp(t))$.  

Note finally that if we  fix the point charge $\xi=\xi_0$, which corresponds to the evolution of a plasma density under the influence of a stationary exterior singular field:
\begin{equation}
\label{syst:VP-bis}\begin{cases}
\dsp  \partial_t f+v\cdot \nabla_x f+(E+F_*)\cdot \nabla_v f=0 
\vspace*{0.5em}\\
\dsp E(t,x)=\int_{\R^3} \frac{x-y}{|x-y|^3}\rho(t,y)\,dy\vspace*{0.5em}\\
\dsp \rho(t,x)=\int_{\R^3} f(t,x,v)\,dv\vspace*{0.5em}\\
\dsp F_*(t,x)=C\,\frac{x-\xi_0}{|x-\xi_0|^3},\end{cases}
\end{equation}
then in this situation also, conditions $(i)$ and $(ii)$ of Theorem \ref{thm:main} are not needed and we can obtain the following
\begin{proposition}
\label{thm:main-bis}
Let $f_0\in L^1\cap L^\infty(\R^3\times \R^3)$ be nonnegative, let
$\xi_{0}\in \R^3$ such that there exists $m_0>3$ for which for all $ m< m_0$
\begin{equation*}
 \iint_{\R^3\times \R^3}  \left( |v|^2+\frac{C}{|x-\xi_{0}|}\right)^{m/2}\,f_0(x,v)\, dx\, dv<+\infty.
\end{equation*}
Then there exists a global weak solution $(f,\xi)$ to the system
\eqref{syst:VP-bis} (here $C$ can be $C=0$, that is the standard Vlasov--Poisson system, or $C>0$), with $f\in C(\R_+,L^p(\R^3\times \R^3))\cap L^\infty(\R_+,L^\infty( \R^3\times \R^3))$ and $E\in C(\R_+,L^q(\R^3))$, for all  $2\leq q<3(3+m_0)/(6-m_0)$ if $m_0<6$, and $2\leq q\leq +\infty$ if $m_0>6$.

Moreover, for all $t\in \R_+$ and for all $m<\min(m_0,7)$,
\begin{equation*}
\iint_{\R^3\times \R^3}  \left( |v|^2+\frac{1}{|x-\xi(t)|^{}}\right)^{m/2}
\,f(t,x,v)\, dx\, dv\leq C\,(1+t)^c,
\end{equation*}
where $C$ and $c$ only depend on $(f_0,\xi_0)$ and can be estimated explicitly.
\end{proposition}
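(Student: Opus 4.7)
The plan is to follow the Lions--Perthame strategy \cite{LP} with two modifications: regularizing the singular external field $F_*$ near $\xi_0$, and tracking constants carefully to obtain polynomial (rather than iterated-exponential) time growth. For $\e>0$ I would replace $|x-\xi_0|$ by $(|x-\xi_0|^2+\e^2)^{1/2}$ in $F_*$, so that the regularized field $F_*^\e$ is smooth and bounded. The classical theory then provides, for each $\e$, a global smooth solution $f^\e$ with $\|f^\e(t)\|_{L^p}$ conserved for all $1\leq p\leq\infty$.

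The heart of the proof is an a priori moment estimate uniform in $\e$. Set $H_C(x,v) := |v|^2/2 + C/|x-\xi_0|$. A direct computation along the characteristics of \eqref{syst:VP-bis} gives
\[
\frac{d}{dt} H_C(X(t),V(t)) = V(t)\cdot E(t,X(t)),
\]
because the singular drift $F_*$ cancels exactly with $\frac{d}{dt}\bigl[C/|X-\xi_0|\bigr]$. Multiplying by $\frac{m}{2}H_C^{m/2-1}$, integrating against $f^\e$, and using $|v|\leq\sqrt{2H_C}$, one gets
\[
\frac{d}{dt}\iint H_C^{m/2}f^\e\,dx\,dv \leq C_m\iint H_C^{(m-1)/2}|E^\e|\,f^\e\,dx\,dv.
\]

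To close this inequality polynomially I would split $H_C^{(m-1)/2}\lesssim |v|^{m-1}+|x-\xi_0|^{-(m-1)/2}$. The velocity contribution is handled by the standard interpolation $\|\rho^\e\|_{L^{(3+k)/3}}\leq C\|f^\e\|_\infty^{k/(3+k)} M_k^{3/(3+k)}$ combined with Hardy--Littlewood--Sobolev, giving $L^q$ bounds on $E^\e$ in terms of velocity moments. The singular-weight term $|x-\xi_0|^{-(m-1)/2}$ is itself majorized by a power of $H_C$, hence controlled by the moment under study. A careful choice of H\"older exponents should then yield a differential inequality of the form $\dot M_m(t)\leq C(1+t)^a M_m(t)^b$ with $b<1$, which integrates to $M_m(t)\leq C(1+t)^c$; this is the polynomial improvement over \cite{LP}. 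The case $C=0$ is a special instance of this computation with no singular-weight term and the restriction $m_0>3$ instead of $m_0>6$.

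The final step is passage to the limit: the uniform-in-$\e$ moment bounds of order $>3$ yield uniform control on $\|\rho^\e\|_{L^p}$ for some $p>1$, hence on $\|E^\e\|_{L^q}$ in the claimed range via HLS. Standard averaging/compactness arguments extract a limit $f\in C(\R_+,L^p)$ solving \eqref{syst:VP-bis} in the sense of distributions, and the moment bound survives by lower semi-continuity. I expect the main obstacle to be the bookkeeping in step 3: arranging that the differential inequality for $M_m$ has strictly sub-linear dependence on $M_m$ itself, and identifying $m=7$ as the precise threshold beyond which the self-improvement argument collapses and one falls back into a super-polynomial regime.
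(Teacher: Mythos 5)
The broad setup is sound: using the conserved microscopic energy $H_C=|v|^2/2+C/|x-\xi_0|$, whose derivative along characteristics sees only $E$ and not the singular drift, and regularizing the data near the singularity before passing to the limit, matches the paper's structure. Since the charge is fixed, you correctly never need the analogue of Proposition~\ref{prop:virial} or the smallness condition on the mass, and $m_0>3$ suffices (the paper itself asserts this proposition follows by ``mimicking'' the proof of Theorem~\ref{thm:main} with those steps dropped).

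However, the crucial step where you ``close this inequality polynomially'' is not actually carried out, and your expectation that ``a careful choice of H\"older exponents'' will directly produce a Gronwall inequality $\dot H_m\leq C(1+t)^a H_m^b$ with $b<1$ is incorrect. Starting from the inequality $\dot{\tilde H}_m\lesssim \|E\|_{L^{m+3}}H_m^{(m+2)/(m+3)}$ (Lemma~\ref{lemma:est-moments} here), the best a direct interpolation gives is $\|E\|_{L^{m+3}}\lesssim H_b^{3/(b+3)}$ with $b=(6m+9)/(m+6)<m$, and then $H_b\lesssim H_m^{(b-2)/(m-2)}$. Plugging this in, the exponent on $H_m$ is $(m+2)/(m+3)+3(b-2)/[(b+3)(m-2)]$, which one checks is strictly greater than $1$ for all $m>3$. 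So the direct H\"older/interpolation route gives a super-linear Gronwall and no global estimate at all. The paper (following Lions--Perthame) resolves this by rewriting the equation with only the smooth, $O(R^{-2})$-small exterior field as the transport part and the interior field as a source (equation~\eqref{source-term}), expressing $\rho$ by Duhamel's formula along the associated cutoff flow, and then exploiting the extra $s$-integration to estimate $\|E(t)\|_{L^{m+3}}$ through the convolution $\int_0^t s\int(|E_{\rm int}+F_{\rm int}|f)(t-s,X(s),V(s))\,dv\,ds$. Even this only gives a usable bound for small times (Propositions~\ref{prop:duhamel}--\ref{prop:small-time}); the polynomial global estimate requires in addition a different large-time estimate and a moment-dependent splitting time $t_0$ chosen to balance the two (Proposition~\ref{prop:global-estimate}). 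None of this machinery appears in your outline, and without it the argument stops at the a priori step. Your observation that the hard part is ``arranging that the differential inequality for $M_m$ has strictly sub-linear dependence'' is exactly right, but the sub-linearity is not a matter of bookkeeping H\"older exponents---it comes from the Duhamel/cutoff-flow argument, which is the core of the proof and needs to be supplied.
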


\medskip

The  Cauchy problem for the Vlasov-Poisson system \eqref{syst:VP}, with or without point charge, has been the object of several works in the last decades. For the pure Vlasov-Poisson system without charge, namely $F=0$, global existence and uniqueness of classical solutions where obtained by Ukai and Okabe \cite{UO} in two dimensions. The three dimensional case is more delicate;  global weak solutions with finite energy were first built by Arsenev \cite{A}.

Global  existence and, in some cases, uniqueness, of more regular solutions were then separately established by Lions and Perthame \cite{LP} and by Pfaffelmoser \cite{Pf} by different techniques. In both works the main issue consists  in controlling the large plasma velocities for all time in order to propagate  regularity properties of the solution.

In \cite{LP}, this is achieved by constructing weak solutions with finite velocity moments of order higher than three
$$\iint_{\R^3\times \R^3} |v|^m f(t,x,v)\,dx\,dv<\infty,\quad m>3,$$
which, by Sobolev embeddings, implies further bounds on the spatial density and on the electric field. In particular, if the solution admits finite moments of order $m>6$ then the electric field is uniformly bounded and uniqueness holds under some additional regularity assumptions on the initial density. On the other hand, the theory of DiPerna and Lions \cite{DiLi} ensures that 
such solutions are constant along the trajectories of a "generalized flow" defined in a weak sense.

 In contrast with the eulerian approach of \cite{LP},  the strategy of \cite{Pf} relies on a careful analysis of the characteristics to control the growth of the velocity support and thereby obtain global existence and uniqueness of classical compactly supported solutions, which moreover propagate the regularity of the initial condition.

We refer to the  further improvements and developments by Schaeffer \cite{Sch}, Wollman \cite{W}, Castella \cite{castella}, Loeper \cite{L}, Chen and Zhang \cite{CZ}. Moreover, Gasser, Jabin and Perthame \cite{GJP} established propagation of the velocity moments for $m>2$ with an additional assumption on the space moments, and in \cite{Sal}, Salort proved existence and uniqueness of weak solutions even if $m<6$. Finally, Pallard \cite{Pallard} recently combined eulerian and lagrangian points of view to establish existence of solutions propagating  velocity moments for $m>2$.

\medskip

The study of the modified Vlasov-Poisson system with macroscopic point charges was initiated more recently by Caprino 
and  Marchioro \cite{CM}. In two dimensions, they proved global existence and uniqueness of solutions \`a la Pfaffelmoser. This  was then extended to the three-dimensional case by Marchioro, Miot and Pulvirenti \cite{MMP}. The results of \cite{CM} and \cite{MMP} hold for initial plasma densities that do not overlap the charge. Thanks to the repulsive nature of the plasma-charge interaction,  this property remains true at later times so that the field induced by the charge is bounded on the support of the density and the velocities of the plasma particles do not blow up.  The analysis of \cite{CM} and \cite{MMP} exploits the notion of microscopic energy, defined in this context by

$$h(t,x,v)=\frac{|v-\eta(t)|^2}{2}+\frac{1}{|x-\xi(t)|}.$$ It turns out that the variation of the energy along the plasma characteristics is controlled by the electric field (see Lemma \ref{lemma:pointwise-energy}), exactly as for the velocity in the absence of charge. On the other hand, the energy controls both the velocity and the distance to the charge. This makes it possible to adapt Pfaffelmoser's arguments by replacing the largest velocity of the plasma particles by the largest energy $\sup_{ \text{supp}(f(t))}h$, which by assumption is initially finite. Note that when the plasma density overlaps the charge this last quantity is not finite, so that the method fails. 

\medskip

In order to treat densities with unbounded energy, which is the purpose of the present paper, we adapt the PDE point of view from \cite{LP} and we show existence of a solution propagating the energy moments (see Definition \ref{mom} hereafter). In particular, since the energy moments control the velocity moments, we recover all additional regularity properties on the electric field which have been established in \cite{LP}.
 The new estimates that we provide are somehow based on an iteration of some of the arguments used in \cite{LP}, which leads to the improvement of both the assumptions made on the initial datum (that is the possibility of introducing a Dirac mass) and the dependence with respect to large time of the moments estimates (which become polynomial and explicit). Moreover, we also tried to provide fully detailed  proofs for the most technical part of \cite{LP}, for the sake of readability of our paper.  

We emphasize the fact that Theorem 
\ref{thm:main} allows for initial densities that do not necessarily vanish in a neighborhood of the charge. They nevertheless have to decay close to it somehow; unfortunately it does not include
the ``generic'' densities that are constant close to the charge.

On the other hand, our techniques do not  enable to obtain uniqueness because of the singularity of $F$ in the neighborhood of the charge.  Finally, we believe that the limitation $m_0<7$ (appearing in the proof of Proposition \ref{prop:global-estimate}) is purely technical. We also hope to extend Theorem \ref{thm:main} to the case of several point charges being all positively charged, as is the case in \cite{CM} and \cite{MMP}.

Among the most interesting open questions connected to this paper, we would like to mention: $i)$ the issues related to the gravitational case (which are the most interesting from the modeling point of view), where $|v|^2+\frac{1}{|x-\xi|}$ is replaced by $|v|^2-\frac{1}{|x-\xi|}$, which makes much more difficult the use of this quantity (see \cite{CMMP} for the two-dimensional case); $ii)$ the study of the Cauchy problem associated to system \eqref{syst:VP} with general initial densities around the charge, for instance constant, (note that this situation appears as a limit critical case of the present analysis); $iii)$ the improvement of large time estimates for moments, indeed in our work the polynomials appearing in the estimates have a very high degree (typically from 100 to 1000).

The remainder of this paper is organized as follows. Next Section is devoted to the proof of Theorem \ref{thm:main}. The general idea, which follows the lines of \cite{LP},
consists in deriving a priori estimates for the moments for a sequence
 of smooth solutions to \eqref{syst:VP}--\eqref{syst:VP-3} obtained by regularizing the initial density in order to obtain a global solution by compactness arguments. Concerning Proposition \ref{thm:main-bis} we omit the proof, since it can be obtained by mimicking the steps of the proof of Theorem \ref{thm:main}.

In Subsection \ref{subsec:general} we gather some basic facts and a priori estimates for the modified Vlasov-Poisson system \eqref{syst:VP}.  We also derive some first estimates for the energy moments. In Subsection \ref{subsec:duhamel} we introduce the flow associated to a relevant cutoff of the field $E+F$, which enables to express the solution of  \eqref{syst:VP} by means of Duhamel's formula with a suitable source term. Then, in Subsection \ref{subsection:small-time} we establish intermediate a priori estimates for the moments, which as a byproduct ensure that the moments are uniformly bounded for small times. These estimates are exploited to show that the moments are uniformly bounded for all times in Subsection \ref{subsection:large-time}.  They eventually provide a global solution satisfying the assumptions of Theorem \ref{thm:main}, as explained in Subsection \ref{sec:passing-limit}. An Appendix is also devoted to the proof of technical estimates on the flow.

\medskip

\section{Proof of Theorem \ref{thm:main}}

\textbf{Notations.}  Throughout the paper, $|z|$ denotes the usual (euclidian) norm of $z\in \R^3$ and $|A|$ is the operator norm of the matrix $A\in \mathcal{M}_3(\R)$. When $F: [0,T]\times \R^3\times \R^3\to  \R^3$ (or $\mathcal{M}_3(\R)$) and $1\leq p,q\leq +\infty$, we set
$\|F\|_{L^p([0,T],L^q(\R^3\times \R^3))}=\| \,|F|\, \|_{L^p([0,T],L^q(\R^3\times \R^3))}.$

\medskip

The notation $C$ will refer to a constant depending only on the quantities $\mathcal{H}(0)$, $\mathcal{M}(0)$, 
$\|f_0\|_{\infty}$, $|\xi_0|$, $|\eta_0|$, $m_0$, and $H_m(0)$, for $m<m_0$ (those
moments are defined in Subsection \ref{subsec:general},
cf. formula (\ref{ineq:energy-velo})), but not on $T$ (when a constant depends on another parameter, we state it explicitly).
Under the assumptions of Theorem \ref{thm:main}, all these quantities are finite (in the approximation  process, they will be bounded with respect to the regularization parameter).

\subsection{Interpolation estimates}

\medskip

We first recall a 
collection of well-known interpolation and Sobolev (or Young's) inequalities
that we shall apply later to the solutions of \eqref{syst:VP}--\eqref{syst:VP-3}.
All of them may be found in \cite{LP}.

\begin{proposition} \label{interp}
Let $f := f(x,v)\geq 0$. Let $b>a \ge 0$. Then for all $x\in\R^3$,
\begin{equation}
\label{estimate-moment}
\int_{\R^3}  |v|^{a}f(x,v)\,dv\leq C \|f\|_{L^{\infty}}^{\frac{b-a}{3+b}}\left( \int_{\R^3}
|v|^b f(x,v)\,dv\right)^{\frac{3+a}{3+b}},      
\end{equation}
with $C$ depending only on $a,b$.
In particular, setting $\rho(x)=\int_{\R^3} f(x,v)\,dv$, we have for any $b>0$,
\begin{equation} \label{int2}
 \|\rho\|_{L^{\frac{b+3}{3}}} \le C \|f\|_{L^{\infty}}^{\frac{b}{3+b}}\bigg(\intd |v|^b \,
f(x,v)\, dx\, dv \bigg)^{\frac3{3+b}}.
\end{equation}
\end{proposition}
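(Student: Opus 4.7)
Both inequalities are classical kinetic-theory interpolation estimates, so the strategy is to prove \eqref{estimate-moment} by a truncation argument in the velocity variable and then to deduce \eqref{int2} as a direct pointwise consequence followed by integration in $x$.

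For \eqref{estimate-moment}, I would fix $x\in\R^3$ and a radius $R>0$ (to be chosen later) and split
\[
\int_{\R^3} |v|^a f(x,v)\,dv = \int_{|v|\leq R} |v|^a f(x,v)\,dv + \int_{|v|>R} |v|^a f(x,v)\,dv.
\]
The first term is bounded using the pointwise inequality $f\leq \|f\|_{L^\infty}$ together with $\int_{|v|\leq R}|v|^a\,dv = c_a R^{a+3}$, which produces a contribution of order $\|f\|_{L^\infty} R^{a+3}$. The second term is controlled by the elementary inequality $|v|^a \leq R^{a-b}|v|^b$ valid on $\{|v|>R\}$ (since $b>a\geq 0$ makes $|v|^{a-b}$ decreasing), giving a contribution of order $R^{a-b}\int |v|^b f(x,v)\,dv$. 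The two bounds are balanced by choosing
\[
R^{b+3} \sim \frac{1}{\|f\|_{L^\infty}}\int |v|^b f(x,v)\,dv,
\]
which yields precisely the exponents $(b-a)/(b+3)$ on $\|f\|_{L^\infty}$ and $(3+a)/(b+3)$ on the moment integral announced in the statement.

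For \eqref{int2}, I would apply \eqref{estimate-moment} with $a=0$ to obtain the pointwise bound
\[
\rho(x) \leq C\,\|f\|_{L^\infty}^{b/(b+3)}\left(\int_{\R^3} |v|^b f(x,v)\,dv\right)^{3/(b+3)}.
\]
Raising this to the power $(b+3)/3$, so that the $(3/(b+3))$-th power becomes a bare integral, then integrating in $x\in\R^3$ and finally taking the $(3/(b+3))$-th root gives \eqref{int2} immediately.

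I do not anticipate any real obstacle: the whole argument rests on the standard truncate-and-optimize technique, and the only care required is bookkeeping the exponents, which is dictated by scaling (the identity $(a+3)+(b-a)=b+3$ is exactly what makes the balancing work).
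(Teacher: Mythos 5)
Your proposal is correct and is exactly the paper's own argument: split the velocity integral at radius $R$, bound the near region by $\|f\|_{L^\infty}R^{a+3}$ and the far region by $R^{a-b}\int|v|^b f\,dv$, optimize $R$, and then derive \eqref{int2} by setting $a=0$ and taking the $L^{(b+3)/3}$ norm in $x$. No differences worth noting.
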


\begin{proof}
For all $R\ge 0$
\begin{equation} \label{intR}
 \ints  |v|^{a}f(x,v)\,dv\le R^{a-b} \ints |v|^b \, f(x,v)\, dv + C R^{3+a} \|f\|_{\infty},
\end{equation}
and  estimate \eqref{estimate-moment} is obtained by optimizing $R>0$ (cf. also
the
proof of estimate (14) in \cite{LP}). Setting $a=0$ and taking the $L^{\frac{b+3}3}$ norm in
the $x$-variable, we obtain \eqref{int2}.

\end{proof}

\begin{proposition}\label{sobo} Let $f\geq 0$ be in $L^{1}(\R^3\times \R^3)$, 
assume that  $\rho(x) =\int_{\R^3} f(x,v)\,dv \in L^s(\R^3)$ (for some $1\leq s\leq +\infty$),
 and consider $E=\rho\ast (x\mapsto x/|x|^3)$. Then
for $1<s<3$, 
\begin{equation} \label{Sob}
 \| E\|_{L^{\frac{3s}{3-s}}} \le C\, \| \rho\|_{L^s},
\end{equation}
and for $s>3$,
\begin{equation} \label{Sobi}
 \| E\|_{L^{\infty}} \le C \| \rho\|_{L^s},
\end{equation}
where $C$ only depends on $s$.
\end{proposition}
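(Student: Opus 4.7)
The plan is to view $E$ as a convolution of $\rho$ with the kernel $K(x) = x/|x|^3$, whose magnitude $|K(x)| = |x|^{-2}$ is exactly the Riesz kernel of order $1$ on $\R^3$ (up to a multiplicative constant, $|x|^{-2} = |x|^{\alpha - n}$ with $\alpha = 1$, $n = 3$). With that observation in mind, I would treat the two ranges of $s$ separately.

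For $1 < s < 3$, I would simply invoke the Hardy--Littlewood--Sobolev inequality (equivalently, Young's convolution inequality applied to $|x|^{-2}$, which belongs to the weak Lebesgue space $L^{3/2,\infty}(\R^3)$). Convolution with such a kernel is bounded from $L^s$ to $L^q$ precisely when $\tfrac{1}{q} = \tfrac{1}{s} - \tfrac{1}{3}$, i.e., $q = \tfrac{3s}{3-s}$, which matches the target exponent; the constant depends only on $s$ (through the HLS constant).

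For $s > 3$, I would work with the pointwise bound
\begin{equation*}
|E(x)| \le \int_{\R^3} \frac{\rho(y)}{|x-y|^2}\,dy,
\end{equation*}
and split the integral according to $|x-y| < R$ and $|x-y| \ge R$ for a free parameter $R > 0$. On the near part, Hölder's inequality with exponents $(s,s')$ yields a bound by $\|\rho\|_{L^s}$ times $(\int_{|y|<R} |y|^{-2s'}\,dy)^{1/s'}$; the latter is finite exactly when $2s' < 3$, i.e., $s > 3$, and scales like $R^{1-3/s}$. On the far part, the natural estimate uses that $\rho \in L^1(\R^3)$ (inherited from $f \in L^1$ via $\rho = \int f\,dv$), producing a bound of order $R^{-2}\|\rho\|_{L^1}$. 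Optimizing $R > 0$ balances the two pieces and gives the desired $L^\infty$ estimate.

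The main subtlety is in the $s > 3$ case: the kernel $|x|^{-2}$ fails to lie in $L^{s'}$ at infinity when $s' < 3/2$, so one cannot close the estimate by Hölder against $\|\rho\|_{L^s}$ alone on the far part. The cleanest remedy is to use the $L^1$ bound on $\rho$, interpreting the stated constant as depending on $s$ (and implicitly on the total mass of $\rho$, i.e.\ of $f$). I expect the rest of the argument to be routine optimization, and indeed the analogous ingredients appear already in the proof of Proposition~\ref{interp}.
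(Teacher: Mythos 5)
Your proof is correct, and it follows a more hands-on route than the paper's. The paper's own proof is a one-liner: it writes $E=4\pi\nabla_x\Delta_x^{-1}\rho$ and cites "Sobolev inequalities", i.e.\ it combines Calder\'on--Zygmund bounds on $D^2\Delta^{-1}\rho$ with the Gagliardo--Nirenberg--Sobolev embedding ($1<s<3$) resp.\ the Morrey embedding ($s>3$). You instead estimate the Riesz kernel $|x|^{-2}$ directly, via Hardy--Littlewood--Sobolev for $1<s<3$ and a near/far split with optimization of the radius $R$ for $s>3$. These are two standard faces of the same potential-theoretic coin, so the difference is one of style rather than substance, but your version is self-contained and makes the constants traceable. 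You are also right to flag that for $s>3$ the estimate \eqref{Sobi} cannot hold with a constant depending only on $s$: by the scaling $\rho_\lambda(x)=\rho(x/\lambda)$ one gets $E_\lambda(x)=\lambda E(x/\lambda)$, hence
\begin{equation*}
\frac{\|E_\lambda\|_{L^\infty}}{\|\rho_\lambda\|_{L^s}}=\lambda^{1-3/s}\,\frac{\|E\|_{L^\infty}}{\|\rho\|_{L^s}}\longrightarrow\infty\quad\text{as }\lambda\to\infty,
\end{equation*}
so a dependence on $\|\rho\|_{L^1}$ (supplied by the hypothesis $f\in L^1$, and conserved in the applications as $\mathcal{M}_0$) is genuinely required; your remark that the constant implicitly depends on the mass of $\rho$ is a real catch rather than a cosmetic caveat. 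Nothing downstream in the paper is affected, since $\|\rho(t)\|_{L^1}=\mathcal{M}_0$ is always under control, but the statement of the Proposition is slightly imprecise as written.
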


\begin{proof}
The inequalities are direct consequences of Sobolev inequalities and the fact that $E = 4\pi \nabla_x \Delta_x^{-1} \rho$. 
\end{proof}

\subsection{First dynamical estimates}
\label{subsec:general}

We now turn to the study of system \eqref{syst:VP}--\eqref{syst:VP-3}.  In the remainder of this article, we fix $T>0$. We will call \emph{classical solution} on $[0,T]$ any  solution  $(f,\xi)$  of \eqref{syst:VP}--\eqref{syst:VP-3} on $[0,T]$, with initial condition $(f_0,(\xi_0,\eta_0))$ satisfying the assumptions in Theorem \ref{thm:main}, such that \emph{moreover} $f_0$ is $C^1$, compactly supported, and vanishes in a neighborhood of $\xi_0$, which satisfies
$f\in C_c^1([0,T]\times \R^3\times \R^3)$ and $\xi\in C^2([0,T])$. Moreover,
$$f(t,\boldsymbol{x}(t,x,v),\boldsymbol{v}(t,x,v))=f_0(x,v),\quad  t\in \R_+,$$ where for all $t\in \R_+$, the map $(x,v)\in \R^3\setminus\{\xi_0\}\times \R^3\mapsto (\boldsymbol{x}(t,x,v),\boldsymbol{v}(t,x,v))\in \R^3\setminus\{\xi(t)\}\times \R^3$ is invertible and:

\noindent (i) For all $(x,v)\in \R^3\setminus\{\xi_0\}\times \R^3$,  $t\mapsto (\boldsymbol{x}(t,x,v),\boldsymbol{v}(t,x,v))\in C^1(\R_+)$ is the solution of
\begin{equation}
\label{syst:ode}
 \begin{cases}
\dsp \frac{d}{dt}{\boldsymbol{x}}(t,x,v)=\boldsymbol{v}(t,x,v)\\\vspace*{0.5em} 
\dsp \frac{d}{dt}{\boldsymbol{v}}(t,x,v)=E(t,\boldsymbol{x}(t,x,v))+\frac{\boldsymbol{x}(t,x,v)-\xi(t)}{|\boldsymbol{x}(t,x,v)-\xi(t)|^3},\quad (\boldsymbol{x},\boldsymbol{v})(0,x,v)=(x,v).
 \end{cases}
\end{equation}
In particular, we have for all $x\neq \xi_0$ and $v\in \R^3$,
 $$|\boldsymbol{x}(t,x,v)-\xi(t)|>0,\quad \forall t\in \R_+.$$
\noindent (ii) For all $t\in \R_+$, the map $(x,v)\mapsto (\boldsymbol{x}(t,x,v),\boldsymbol{v}(t,x,v))$ preserves the Lebesgue's measure on $\R^3\times \R^3$.

In the following we will sometimes write $(\boldsymbol{x}(t),\boldsymbol{v}(t))=(\boldsymbol{x}(t,x,v),\boldsymbol{v}(t,x,v))$.

 The existence (and uniqueness) of classical solutions corresponding to such initial data is ensured by \cite{MMP}.
Our 
purpose is to establish relevant a priori estimates for  $(f,\xi)$ on $[0,T]$, which will eventually lead
to the existence of a solution 
to \eqref{syst:VP}--\eqref{syst:VP-3} by compactness.
Such a priori estimates will concern the moments of order $ m<m_0$, which are defined in Definition~\ref{mom} below.

\medskip

 We start with a few basic properties of the Vlasov-Poisson system.
\begin{proposition}\label{eap_de_base}
Let $(f,\xi)$ be a 
classical solution of \eqref{syst:VP}-\eqref{syst:VP-3} on $[0,T]$.

Then, the norms
\begin{equation*}\|f(t)\|_{L^p({\R^3\times \R^3}) },\quad 1\leq p\leq \infty,
\end{equation*}
and the energy
\begin{equation*}
\begin{split}
\mathcal{H}(t) & =\frac12\iint_{\R^3\times \R^3}  |v|^2 f (t,x,v)\, dv\, dx + \frac12 | \eta(t)|^2\\
&+ \frac12 \iint_{\R^3\times \R^3}  \frac{\rho(t,x)\, \rho(t,y)}{|x-y|} \, dx dy   + 
\int_{\R^3}  \frac{\rho(t,x)}{|x - \xi(t)|} \, dx    
\end{split}
\end{equation*}
are conserved in time. 
In particular, the mass
\begin{equation*}
\mathcal{M} (t) = \iint_{\R^3\times \R^3}   f (t,x,v)\, dx\, dv
\end{equation*}
is conserved in time.
\end{proposition}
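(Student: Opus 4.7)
The $L^p$ norms and the mass follow at once from the particle picture provided by the characteristics. Since $(\boldsymbol{x}(t,\cdot,\cdot),\boldsymbol{v}(t,\cdot,\cdot))$ is a measure-preserving bijection of $\R^3\setminus\{\xi_0\}\times\R^3$ onto $\R^3\setminus\{\xi(t)\}\times\R^3$ and satisfies $f(t,\boldsymbol{x}(t,x,v),\boldsymbol{v}(t,x,v))=f_0(x,v)$, the change of variables $(x,v)\mapsto(\boldsymbol{x}(t,x,v),\boldsymbol{v}(t,x,v))$ immediately yields
\[
\iint_{\R^3\times \R^3} f(t,y,w)^p\,dy\,dw \;=\; \iint_{\R^3\times \R^3} f_0(x,v)^p\,dx\,dv
\]
for every $1\le p<+\infty$, and the $L^\infty$ case follows from the same bijection. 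Mass conservation is then the particular case $p=1$.

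For the energy I will differentiate $\mathcal H(t)$ term by term, using the smoothness and compact support of $f$ together with the fact that $\supp f(t,\cdot,\cdot)$ stays away from $\xi(t)$ (guaranteed by the classical-solution assumption) to justify each integration by parts and ensure absolute convergence. Setting $J(t,x):=\int v\,f(t,x,v)\,dv$, the Vlasov equation and an integration by parts in $v$ give
\[
\frac{d}{dt}\Big[\tfrac12 \iint |v|^2 f\,dv\,dx\Big] \;=\; \int (E+F)\cdot J\,dx,
\]
while $\dot{\eta}=E(\xi(t))$ makes the charge kinetic term contribute $\eta(t)\cdot E(\xi(t))$. Integrating the Vlasov equation in $v$ gives the continuity equation $\partial_t\rho + \dv_x J = 0$; inserting it into the plasma self-interaction and using $E=-\nabla_x(\rho\ast|\cdot|^{-1})$ together with the symmetry of the Coulomb kernel produces
\[
\frac{d}{dt}\Big[\tfrac12 \iint \frac{\rho(x)\rho(y)}{|x-y|}\,dx\,dy\Big] \;=\; -\int E\cdot J\,dx.
\]
For the plasma--charge coupling I split the derivative into a $\partial_t\rho$ piece, which after one integration by parts becomes $-\int J\cdot F\,dx$, and a $\partial_t(1/|x-\xi(t)|)$ piece, which, using $\dot\xi=\eta$ together with the identity $\int \rho(x)(x-\xi)/|x-\xi|^3\,dx = -E(\xi(t))$, becomes $-\eta\cdot E(\xi(t))$.

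Summing the four contributions, the $\int E\cdot J$, $\int F\cdot J$ and $\eta\cdot E(\xi(t))$ terms cancel pairwise, so that $\frac{d}{dt}\mathcal H(t)\equiv 0$. The main point requiring care is sign bookkeeping in the plasma--charge coupling: one must correctly compute $\partial_t(1/|x-\xi(t)|) = \eta(t)\cdot (x-\xi(t))/|x-\xi(t)|^3$ (the sign coming from $\dot\xi=\eta$) and observe that pairing against $(x-\xi)$ rather than $(\xi-x)$ flips a sign relative to the definition of $E(\xi(t))$. Without those two sign reversals the $\eta\cdot E(\xi(t))$ terms would not cancel, and the computation would (incorrectly) produce $2\eta\cdot E(\xi(t))$ instead of zero.
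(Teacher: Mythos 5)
Your proof is correct and follows essentially the same approach as the paper: conservation of $L^p$ norms via the measure-preserving characteristics, and energy conservation by direct term-by-term differentiation, integration by parts, and pairwise cancellation of $\int E\cdot J$, $\int F\cdot J$, and $\eta\cdot E(\xi(t))$. The sign bookkeeping you highlight in the plasma--charge coupling (both the sign from $\partial_t|x-\xi(t)|^{-1}$ and the orientation of the difference $x-\xi$ versus $\xi-x$) is exactly the delicate point of the computation, and you have handled it correctly.
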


\begin{proof}
The conservation of the $L^{p}$ norms is an immediate consequence of the fact that $f$ is constant along the trajectories of a Lebesgue's measure preserving flow.

We only detail the computation for the energy:
\begin{equation*}\begin{split}
&\frac{d}{dt} \bigg\{ \iint_{\R^3\times \R^3}  f\, \frac{|v|^2}2 \, dx\, dv + \frac12 |\eta|^2
+ \frac12 \iint_{\R^3\times \R^3}  \frac{\rho(x)\, \rho(y)}{|x-y|} \, dx \,dy
 + \int_{\R^3} \frac{\rho(x)}{|x - \xi|} \, dx\bigg\}\\ 
&= \iint_{\R^3\times \R^3}  v\cdot (E + F)\, f\, dx\,dv +  \eta \cdot 
E(\xi)   \\
& - \iint_{\R^3\times \R^3}  \frac{\nabla_x \cdot (\ints vf\,dv)}{|x-y|} \, \rho(y)\, dx\,dy
- \int_{\R^3} \frac{ \nabla_x \cdot \int v\,f\,dv}{|x- \xi|} \,
dx  - \int_{\R^3} \rho(x) \, \eta \cdot \frac{\xi - x}{|\xi
- x|^3} \, dx\\
& = 0.\end{split} \end{equation*}
\end{proof}

For the initial data $(f_0,(\xi_0,\eta_0))$  considered in the setting of Theorem~\ref{thm:main}, the energy is initially finite; indeed Proposition \ref{interp} yields $\rho_0\in L^1\cap L^{5/3}$, so that $\iint \rho(x)\rho(y)/|x-y|\,dx\,dy$ is finite by H\"older estimates; on the other hand the other terms are clearly finite by assumption $(ii)$. So we immediately get the
\begin{proposition}\label{prop:dist}
Let $(f,\xi)$ be a classical solution of \eqref{syst:VP}-\eqref{syst:VP-3} on $[0,T]$.
We have
\begin{equation}\label{prop:velocities}
\sup_{t\in [0,T]}|\eta(t)|\leq \sqrt{2\mathcal{H}(0)}
\end{equation}
and
\begin{equation}\label{prop:dist2}
\sup_{t\in [0,T]}|\xi(t)|\leq |\xi_0|+\sqrt{2\mathcal{H}(0)}\,T.
\end{equation}
\end{proposition}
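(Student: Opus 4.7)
The plan is to exploit the conservation of the energy $\mathcal{H}$ established in Proposition~\ref{eap_de_base} and the fact that $\mathcal{H}$ is a sum of nonnegative contributions, so that each summand can be controlled by the total energy.

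First I would note that each of the four terms appearing in $\mathcal{H}(t)$ is nonnegative: the kinetic terms $\iint |v|^2 f\,dx\,dv/2$ and $|\eta(t)|^2/2$ are obviously so, while the self-interaction $\iint \rho(t,x)\rho(t,y)/|x-y|\,dx\,dy$ and the plasma-charge coupling $\int \rho(t,x)/|x-\xi(t)|\,dx$ are nonnegative because their integrands are. Discarding the other three terms from the identity $\mathcal{H}(t)=\mathcal{H}(0)$ yields
\begin{equation*}
\tfrac{1}{2}|\eta(t)|^2 \leq \mathcal{H}(t) = \mathcal{H}(0),
\end{equation*}
which immediately gives \eqref{prop:velocities}.

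For the bound on $\xi(t)$, I would integrate the first ODE of \eqref{syst:VP-2}, namely $\dot{\xi}(s)=\eta(s)$, from $0$ to $t\in[0,T]$. Using the initial condition $\xi(0)=\xi_0$ together with the triangle inequality and the pointwise bound on $|\eta|$ just obtained,
\begin{equation*}
|\xi(t)| \leq |\xi_0| + \int_0^t |\eta(s)|\,ds \leq |\xi_0| + \sqrt{2\mathcal{H}(0)}\,t \leq |\xi_0|+\sqrt{2\mathcal{H}(0)}\,T,
\end{equation*}
which is \eqref{prop:dist2}. There is no real obstacle here: the only thing one has to be careful about is to observe the sign of the potential terms in $\mathcal{H}$ (they are repulsive, hence nonnegative), so that truncating the energy to keep only $|\eta|^2/2$ is legitimate; if the interaction were attractive this truncation would fail, which is precisely the issue mentioned in the introduction about the gravitational case.
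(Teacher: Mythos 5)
Your proof is correct and matches the paper's argument exactly: drop the other nonnegative terms from the conserved energy to bound $|\eta|$, then integrate $\dot\xi=\eta$. The remark about the repulsive sign being essential is apt and consistent with the paper's discussion of the gravitational case.
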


\begin{proof}
The first inequality is a consequence of the conservation
of the energy. The second one comes out of the integration w.r.t. time of
the first one.
\end{proof}

Another well-known consequence of the conservation of the energy is the following
\begin{proposition}\label{prop:norm-star}
Let $(f,\xi)$ be a classical solution of \eqref{syst:VP}-\eqref{syst:VP-3} on $[0,T]$.
We have
\begin{equation*}
\sup_{t\in [0,T]}\|\rho(t)\|_{L^{5/3}}\leq C,
\end{equation*}
and for all $\frac{3}{2}<r\leq \frac{15}{4}$,
\begin{equation*}
\sup_{t\in [0,T]}\|E(t)\|_{L^r}\leq C,
\end{equation*}
with $C$ a constant depending only on $\mathcal{H}(0)$, $\|f_0\|_{\infty}$ and $r$.

\end{proposition}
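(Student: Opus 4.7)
The plan is a direct application of the interpolation/Sobolev inequalities of Propositions~\ref{interp} and \ref{sobo}, using the conservation laws from Proposition~\ref{eap_de_base}.

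First I would extract a bound on the pure kinetic energy $\iint |v|^2 f(t,x,v)\,dx\,dv$. Since the two interaction terms $\tfrac12\iint \rho(x)\rho(y)/|x-y|\,dx\,dy$ and $\int \rho(x)/|x-\xi(t)|\,dx$ appearing in $\mathcal{H}(t)$ are nonnegative, and $|\eta(t)|^2/2\ge 0$, the conservation of $\mathcal{H}$ yields
\begin{equation*}
\iint_{\R^3\times\R^3} |v|^2 f(t,x,v)\,dx\,dv \le 2\mathcal{H}(0).
\end{equation*}
Combined with the conservation of $\|f(t)\|_{L^\infty}=\|f_0\|_{L^\infty}$, estimate \eqref{int2} of Proposition~\ref{interp} applied with $b=2$ gives
\begin{equation*}
\|\rho(t)\|_{L^{5/3}}\le C\,\|f_0\|_{L^\infty}^{2/5}\Bigl(\iint |v|^2 f(t,x,v)\,dx\,dv\Bigr)^{3/5}\le C,
\end{equation*}
uniformly in $t\in[0,T]$, which gives the first claim.

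Next, for the field bound I would interpolate $\rho$ between $L^1$ and $L^{5/3}$: conservation of mass yields $\|\rho(t)\|_{L^1}=\mathcal{M}(0)$, so by H\"older $\|\rho(t)\|_{L^s}\le C$ for every $s\in[1,5/3]$. Then I apply inequality \eqref{Sob} of Proposition~\ref{sobo}, noting that the map $s\mapsto 3s/(3-s)$ sends $[1,5/3]$ onto $[3/2,15/4]$. For any $r\in(3/2,15/4]$ there exists a unique $s\in(1,5/3]$ with $r=3s/(3-s)$, and then
\begin{equation*}
\|E(t)\|_{L^r}\le C\,\|\rho(t)\|_{L^s}\le C,
\end{equation*}
which is the desired estimate. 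The constant depends only on $\mathcal{H}(0)$, $\|f_0\|_{L^\infty}$, $\mathcal{M}(0)$ and $r$ (through the Sobolev constant and the interpolation exponent).

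There is no real obstacle here; the only subtlety worth spelling out is that the kinetic energy alone is controlled by $\mathcal{H}(0)$ because the Coulomb interaction energy and the plasma/charge interaction energy are both nonnegative in the repulsive setting considered here. Once this is observed, the two statements of the proposition follow from the general inequalities collected in the previous subsection.
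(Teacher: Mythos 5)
Your proof is correct and follows the same route as the paper: bound the kinetic energy by $\mathcal{H}(0)$ using nonnegativity of the interaction terms, apply \eqref{int2} with $b=2$ for the $L^{5/3}$ bound on $\rho$, then interpolate $\rho$ between $L^1$ and $L^{5/3}$ and apply \eqref{Sob} for the field bound. You also correctly note that the constant depends on $\mathcal{M}(0)$ through the $L^1$ bound on $\rho$, which the proposition's statement omits but is covered by the paper's blanket convention on $C$ set out at the start of Section 2.
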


\begin{proof}
The first estimate is a consequence of \eqref{int2} with $b=2$ and the fact that the velocity moment of order $2$ is controlled by the energy. The second estimate is deduced from the first one and \eqref{Sob}, using the fact that $\rho\in L^\infty([0,T], L^1\cap L^{5/3}(\R^3))$.

\end{proof}

We now give our definition of energy moments.
\begin{definition}\label{mom}
We define the energy function
\begin{equation*}
h(t,x,v) =\frac{|v-\eta(t)|^2}{2}+\frac{1}{|x-\xi(t)|}+\mathcal{H}(0)+(\mathcal{M}_0)^{-1}+1.
\end{equation*}
 In view of Proposition~\ref{prop:dist}, we have 
\begin{equation*}
|v|\leq 2\sqrt{h}(t,x,v)\quad \forall (t,x,v)\in[0,T]\times \R^3\times \R^3.
\end{equation*}

Then we set for $k\in \R_+$
\begin{equation}\label{energy}
\tilde{H}_{k}(t) =\iint_{\R^3\times \R^3}  h(t,x,v)^{k/2} f(t,x,v) dx\, dv,
\end{equation}
and
\begin{equation} \label{deffina}
H_k(t) =\sup_{s\in[0,t]}\tilde{H}_{k}(s)=\sup_{s\in[0,t]} \iint_{\R^3\times \R^3}  
h(s,x,v)^{k/2} f(s,x,v) dx\, dv.
\end{equation}
In particular 
\begin{equation*} 
H_k(t) \geq 1.
\end{equation*}

\end{definition}

A first basic observation is that the energy moments $H_k$ control the velocity moments $M_k$ defined in \cite{LP}, namely
\begin{equation}\label{ineq:energy-velo}
M_k(t) := \sup_{s\in[0,t]} \iint_{\R^3\times \R^3}  |v|^k \, f(s,x,v) \, dx\,dv\leq 2^k H_k(t).
\end{equation}

\medskip

\begin{lemma}\label{lemma:pointwise-energy}
Let $(f,\xi)$ be a classical solution of \eqref{syst:VP}-\eqref{syst:VP-3} on $[0,T]$. For all plasma trajectory  $(\bold{x}(t),\bold{v}(t))=(\bold{x}(t,x,v),\bold{v}(t,x,v))$ solution of \eqref{syst:ode}, we have
\begin{equation*}
\frac{d}{dt}\sqrt{h}(t,\bold{x}(t),\bold{v}(t))\leq |E(t,\bold{x}(t))|+|E(t,\xi(t))|.
\end{equation*}
\end{lemma}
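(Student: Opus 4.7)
The plan is to differentiate $h$ along the characteristics, observe a crucial cancellation of the singular terms coming from the point charge, and then convert the bound on $\frac{d}{dt}h$ into a bound on $\frac{d}{dt}\sqrt{h}$ using the defining inequality $|v-\eta|\le \sqrt{2h}$.

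First I would write out $\frac{d}{dt}h(t,\bold{x}(t),\bold{v}(t))$ explicitly. Recall
$h(t,x,v)=\tfrac12|v-\eta(t)|^2+|x-\xi(t)|^{-1}+\mathcal{H}(0)+\mathcal{M}_0^{-1}+1$, so the time derivative along $(\bold{x}(t),\bold{v}(t))$ produces two pieces: one from $\tfrac12|\bold{v}-\eta|^2$, which via \eqref{syst:ode} and \eqref{syst:VP-2} equals $(\bold{v}-\eta)\cdot\bigl[E(t,\bold{x})+\tfrac{\bold{x}-\xi}{|\bold{x}-\xi|^3}-E(t,\xi)\bigr]$, and one from $|\bold{x}-\xi|^{-1}$, which equals $-\tfrac{(\bold{x}-\xi)\cdot(\bold{v}-\eta)}{|\bold{x}-\xi|^3}$. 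The key observation is that the singular contribution $(\bold{v}-\eta)\cdot\tfrac{\bold{x}-\xi}{|\bold{x}-\xi|^3}$ coming from the self-force of the point charge cancels exactly against the derivative of the potential $|\bold{x}-\xi|^{-1}$. This cancellation is the heart of the argument and is really the whole reason the energy function $h$ is the right object in the plasma–charge setting.

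After the cancellation we are left with
$\frac{d}{dt}h(t,\bold{x}(t),\bold{v}(t))=(\bold{v}(t)-\eta(t))\cdot\bigl[E(t,\bold{x}(t))-E(t,\xi(t))\bigr]$,
whence $\bigl|\tfrac{d}{dt}h\bigr|\le |\bold{v}-\eta|\,(|E(t,\bold{x})|+|E(t,\xi)|)$. Dividing by $2\sqrt{h}$ and using $|\bold{v}-\eta|\le\sqrt{2h}$ (which follows from the very definition of $h$) yields $\frac{d}{dt}\sqrt{h}\le \tfrac{1}{\sqrt{2}}(|E(t,\bold{x})|+|E(t,\xi)|)$, which in particular gives the claimed bound. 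Note that $h\ge 1$ by construction, so $\sqrt{h}$ is bounded away from zero and the division by $2\sqrt{h}$ causes no issue.

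The only subtlety is to make sure each step is legitimate at the level of a classical solution: the trajectory $(\bold{x}(t),\bold{v}(t))$ is $C^1$ and stays away from $\xi(t)$ by the definition of classical solution given before \eqref{syst:ode}, so $h(t,\bold{x}(t),\bold{v}(t))$ is $C^1$ and all the pointwise manipulations above are justified. There is no hard estimate here; the main conceptual step is simply the cancellation of the singular self-force term, which replaces what would ordinarily be a bound of the form $\frac{d}{dt}|v|\le |E|$ in the chargeless case.
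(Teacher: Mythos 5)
Your proof is correct and follows exactly the argument the paper has in mind: you differentiate $h$ along the characteristics, observe the cancellation of the singular self-force term against the derivative of the potential $|\bold{x}-\xi|^{-1}$, and then pass from $\frac{d}{dt}h$ to $\frac{d}{dt}\sqrt{h}$ using $|\bold{v}-\eta|\le\sqrt{2h}$. The paper's proof is just a terser version of the same computation, stating the identity $\frac{d}{dt}h=(\bold{v}-\eta)\cdot(E(t,\bold{x})-E(t,\xi))$ and remarking on the disappearance of $F$.
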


\begin{proof}
A simple computation using \eqref{syst:ode} and \eqref{syst:VP-2}  yields
$$\frac{d}{dt}{h}(t,\bold{x}(t),\bold{v}(t))=(\bold{v}(t)-\eta(t))\cdot \big( E(t,\bold{x}(t))- E(t,\xi(t))\big).
$$ Note that the singular field $F$ does not appear in the equality.
\end{proof}

\begin{lemma} 
Let $(f,\xi)$ be a classical solution of \eqref{syst:VP}-\eqref{syst:VP-3} on $[0,T]$.
We have for all $t\in [0,T]$ and for all $k\in \mathbb{R}_+$
 \begin{equation} \label{evol_mom_diff}
\frac{d}{dt}\tilde{H}_{k}(t)\leq C(k)\,
\Big( \| E(t)\|_{L^{k+3}}+\left|E(t,\xi(t))\right|\Big)
H_k(t)^{\frac{k+2}{k+3}}
\end{equation}
and therefore,
\begin{equation} \label{evol_mom_integ}
H_{k}(t)\leq C(k)
\left\{ H_{k}(0)+\left( \int_0^t \Big\{\| E(s)\|_{L^{k+3}}+
\left|E(s,\xi(s))\right|\Big\}\,ds\right)^{k+3}\right\}.
\end{equation}
\label{lemma:est-moments}
\end{lemma}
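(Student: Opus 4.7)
The strategy is to differentiate $\tilde{H}_k(t)$ in time along the measure-preserving flow \eqref{syst:ode}. Since $f$ is conserved along characteristics, a change of variables gives
$$\frac{d}{dt}\tilde{H}_k(t) = \iint_{\R^3\times\R^3} \frac{d}{dt}\!\left[h^{k/2}\bigl(t,\boldsymbol{x}(t,x,v),\boldsymbol{v}(t,x,v)\bigr)\right] f_0(x,v)\,dx\,dv.$$
The chain rule $\frac{d}{dt}h^{k/2} = k\,h^{(k-1)/2}\frac{d}{dt}\sqrt{h}$ combined with Lemma \ref{lemma:pointwise-energy} yields a pointwise bound along each trajectory; reverting the change of variables then gives
$$\frac{d}{dt}\tilde{H}_k(t) \leq k \iint_{\R^3\times\R^3} h(t,x,v)^{(k-1)/2}\bigl(|E(t,x)|+|E(t,\xi(t))|\bigr) f(t,x,v)\,dx\,dv.$$
The crucial feature, guaranteed by Lemma \ref{lemma:pointwise-energy}, is that the singular field $F$ does not appear on the right.

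It then remains to bound each term by (its prefactor times) $H_k^{(k+2)/(k+3)}$. The piece involving $|E(t,\xi(t))|$ equals $|E(t,\xi(t))|\,\tilde{H}_{k-1}(t)$; Hölder's inequality in $(x,v)$ yields $\tilde{H}_{k-1}(t)\leq \tilde{H}_k(t)^{(k-1)/k}\mathcal{M}_0^{1/k}$, and since $H_k\geq 1$ and $(k-1)/k \leq (k+2)/(k+3)$, this is controlled by $CH_k(t)^{(k+2)/(k+3)}$. For the term with $E(t,x)$, I would apply Hölder in $x$ with conjugate exponents $k+3$ and $(k+3)/(k+2)$, reducing matters to the estimate
$$\Bigl\|\int_{\R^3} h^{(k-1)/2}(t,x,v) f(t,x,v)\,dv\Bigr\|_{L^{(k+3)/(k+2)}(\R^3)} \leq C\, H_k(t)^{(k+2)/(k+3)}.$$
This I would obtain by first using Hölder in $v$ (exponents $k/(k-1)$ and $k$) to get the pointwise bound $\int h^{(k-1)/2} f\,dv \leq \rho_k(x)^{(k-1)/k}\rho(x)^{1/k}$, where $\rho_k(x):=\int h^{k/2}f\,dv$; then raising to the power $(k+3)/(k+2)$ and applying Hölder in $x$ with conjugate exponents $k(k+2)/[(k-1)(k+3)]$ and $k(k+2)/3$. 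The $\rho_k$ factor integrates to $\tilde{H}_k$, while the $\rho$ factor produces $\|\rho\|_{L^{(k+3)/3}}^{(k+3)/(k(k+2))}$, which the Sobolev-type bound \eqref{int2} (applied with $b=k$) controls by $H_k^{3/(k(k+2))}$. The two powers of $H_k$ add up to exactly $1$, the precise cancellation required.

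These two estimates together prove \eqref{evol_mom_diff}. To deduce \eqref{evol_mom_integ}, I integrate in time and use the monotonicity of $s\mapsto H_k(s)$: bounding $H_k(s)^{(k+2)/(k+3)}$ by $H_k(t)^{(k+2)/(k+3)}$ for $s\leq t$ and then taking the supremum on the left gives
$$H_k(t) \leq H_k(0) + C(k)\,H_k(t)^{(k+2)/(k+3)} \int_0^t \bigl(\|E(s)\|_{L^{k+3}} + |E(s,\xi(s))|\bigr)\,ds.$$
Young's inequality with conjugate exponents $(k+3)/(k+2)$ and $k+3$ then absorbs $H_k(t)^{(k+2)/(k+3)}\cdot I(t)$ into $\tfrac{1}{2}H_k(t) + C'(k)\,I(t)^{k+3}$, which after transposition yields \eqref{evol_mom_integ}. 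I expect the main obstacle to be the careful bookkeeping in the interpolation step above: because $h$ couples the kinetic quantity $|v-\eta|^2/2$ with the singular potential $1/|x-\xi|$, the exponents must be tuned so that the final power of $H_k$ is \emph{exactly} one, and any mismatch would propagate and prevent \eqref{evol_mom_integ} from closing via Young.
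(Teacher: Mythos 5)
Your overall strategy mirrors the paper's: differentiate $\tilde{H}_k$ along the measure-preserving flow, invoke Lemma \ref{lemma:pointwise-energy} so that the singular field $F$ drops out, split the resulting integral into the $\|E\|_{L^{k+3}}$ term and the $|E(t,\xi(t))|$ term, interpolate, and then deduce \eqref{evol_mom_integ} by monotonicity of $H_k$ plus Young's inequality. Your estimate of the $|E(t,\xi(t))|$ term and your passage from \eqref{evol_mom_diff} to \eqref{evol_mom_integ} are exactly the paper's. For the $\|E\|_{L^{k+3}}$ term you replace the paper's interpolation by a double H\"older argument: first in $v$ (exponents $k/(k-1)$ and $k$) to obtain $\int h^{(k-1)/2}f\,dv\le \rho_k^{(k-1)/k}\rho^{1/k}$, then in $x$ (exponents $k(k+2)/[(k-1)(k+3)]$ and $k(k+2)/3$), feeding $\|\rho\|_{L^{(k+3)/3}}$ into \eqref{int2}. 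The exponent bookkeeping indeed closes exactly to $(k+2)/(k+3)$, so this is a valid alternative to the paper's single cutoff-at-level-$R$-and-optimize argument, which yields the pointwise bound $\int h^{(k-1)/2}f\,dv\le C\big(\int h^{k/2}f\,dv\big)^{(k+2)/(k+3)}$ directly without any appeal to \eqref{int2}.

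There is, however, one genuine gap: the lemma asserts \eqref{evol_mom_diff}--\eqref{evol_mom_integ} for all $k\in\mathbb{R}_+$, but your H\"older exponents $k/(k-1)$ and $k(k+2)/[(k-1)(k+3)]$ only make sense for $k>1$ (they are degenerate at $k=1$ and meaningless for $k<1$). The paper handles $0\le k<1$ by a separate, elementary argument: since $h\ge 1$ one has $h^{(k-1)/2}\le 1$, so
\begin{equation*}
\frac{d}{dt}\tilde{H}_k(t)\le C(k)\Big(\|E(t)\|_{L^{k+3}}\,\|\rho(t)\|_{L^{(k+3)/(k+2)}}+\mathcal{M}_0\,|E(t,\xi(t))|\Big),
\end{equation*}
and since $1\le (k+3)/(k+2)\le 5/3$, Proposition \ref{prop:norm-star} bounds $\|\rho(t)\|_{L^{(k+3)/(k+2)}}$ uniformly, while $H_k(t)\ge 1$ lets one restore the factor $H_k(t)^{(k+2)/(k+3)}$. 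You should append this case to cover the full range $k\in\mathbb{R}_+$.
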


\begin{proof}
Since $f\in C_c^1([0,T]\times \R^3\times\R^3)$ is a classical solution of \eqref{syst:VP}, we may compute
\begin{equation*}\begin{split}
\frac{d}{dt} \tilde{H}_{k}(t)&=\frac{k}{2}\,
\iint_{\R^3\times \R^3}  h^{k/2-1}f\, \big\{\partial_t h+v\cdot \nabla_x h+(E+F )\cdot \nabla_v
h\big\}(t,x,v)\,dx\,dv\\
&=\frac{k}{2}
\iint_{\R^3\times \R^3}   h^{k/2-1} f\,\big\{(v-\eta(t))\cdot \big( E(t,x)- E(t,\xi(t))\big)\big\}\,dx\,dv.
\end{split}\end{equation*}
We remark again that the choice of the energy function $h$ enabled to get rid of the singular field in the second equality.

Therefore
\begin{equation}
\label{ineq:est-mom-1}
\begin{split}
\frac{d}{dt} \tilde{H}_{k}(t)&\leq C(k)
\iint_{\R^3\times \R^3}  | E(t,x)|h^{(k-1)/2}f (t,x,v)\,dx\,dv\\
&+ C(k)\, |E(t,\xi)|\, \iint_{\R^3\times \R^3} 
h^{(k-1)/2}f(t,x,v)\,dx\,dv.
\end{split}\end{equation}

We assume first that $k\geq 1$.
In order to bound the first term of the right-hand side in \eqref{ineq:est-mom-1}, we use interpolation arguments from \cite{LP} that we recall here for sake of clarity.
First, we have thanks to H\"older's inequality
\begin{equation*}
\iint_{\R^3\times \R^3} | E(t,x)|h^{(k-1)/2}f(t,x,v)\,dx\,dv\leq \|E(t)\|_{L^{k+3}} \left \|  \int_{\R^3} h^{(k-1)/2}
 f (t,\cdot,v) \,dv   \right \|_{L^{\frac{k+3}{k+2}}}.
\end{equation*}
Next, we have for $x\in \R^3$ and for $R>0$,
\begin{equation*}
\begin{split}
 \int_{\R^3} h^{(k-1)/2}   f(t,x,v)  \,dv &= \int_{h^{1/2}\leq R}  h^{(k-1)/2}   f (t,x,v) \,dv
+ \int_{h^{1/2}\geq R} h^{(k-1)/2}   f(t,x,v)  \,dv \\
 &\leq R^{k-1}\int_{|v|\leq CR}f(t,x,v)\,dv+R^{-1} \int_{h^{1/2}\geq R} h^{k/2}
f (t,x,v) \,dv\\
 &\leq C \|f(t)\|_{L^{\infty}}R^{k+2}+R^{-1} \int_{\R^3}  h^{k/2}   f(t,x,v)  \,dv.
\end{split}
\end{equation*}
We used the fact  that $|v|\leq Ch^{1/2}$ in the
second inequality. Now, optimizing w.r.t. $R$, and using the identity
$\|f(t)\|_{L^{\infty}}=\|f_0\|_{L^{\infty}}$, we find
\begin{equation*}
 \int_{\R^3} h^{(k-1)/2}   f(t,x,v)  \,dv
 \leq C(k)\, \left( \int_{\R^3} h^{k/2}   f (t,x,v) \,dv\right)^{(k+2)/(k+3)}.
\end{equation*}
So finally, integrating in $x$, we obtain
\begin{equation*}
 \left\|\int_{\R^3} h^{(k-1)/2}   f (t,\cdot,v) \,dv\right\|_{L^{\frac{k+3}{k+2}}}
 \leq C(k)\,  H_{k}(t)^{(k+2)/(k+3)},
\end{equation*}
and we are led to
\begin{equation}\label{ineq:interpol1}
\iint_{\R^3\times \R^3}  | E(t,x)|h^{(k-1)/2}f(t,x,v)\,dx\,dv\leq C(k)\, \|E(t)\|_{L^{k+3}}\,H_{k}(t)^{(k+2)/(k+3)}.
\end{equation}

\medskip

We next estimate the second term in \eqref{ineq:est-mom-1}. Applying again H\"older's inequality yields
\begin{equation*}
 \iint_{\R^3\times \R^3}  h^{(k-1)/2} \,  f(t,x,v)  \,dx\,dv
 \leq  \left(   \iint_{\R^3\times \R^3}     f(t,x,v)  \,dx\,dv  \right) ^{1/k} \left(   \iint_{\R^3\times \R^3}   h^{k/2}   \,f(t,x,v)  \,dx\,dv  \right) ^{(k-1)/k}
\end{equation*}
so that, since $\mathcal{M}(t)=\mathcal{M}_0$,
\begin{equation*}
\iint_{\R^3\times \R^3}  h^{(k-1)/2}f(t,x,v)\,dx\,dv\leq C(k)\, H_{k}(t)^{(k-1)/k}  .
\end{equation*}
Since $(k-1)/k\leq (k+2)/(k+3)$ and $H_k(t)\geq 1$, it follows that
\begin{equation}\label{ineq:interpol2}
|E(t,\xi)|\iint_{\R^3\times \R^3}  h^{(k-1)/2}f(t,x,v)\,dx\,dv\leq C(k)\, |E(t,\xi)|  H_{k}(t)^{(k+2)/(k+3)}  .
\end{equation}

Gathering estimates \eqref{ineq:interpol1} and \eqref{ineq:interpol2} we
are led to the conclusion
of Lemma \ref{lemma:est-moments}.

\medskip

When $k<1$, since $h(t,x,v)\geq 1$ we directly obtain 
\begin{equation*}
\begin{split}
\frac{d}{dt} \tilde{H}_{k}(t)&\leq C(k)\left(\|E(t)\|_{L^{k+3}} \left \|  \rho(t)\  \right \|_{L^{\frac{k+3}{k+2}}}
+ \mathcal{M}_0 |E(t,\xi)|\right)\\
&\leq C(k)\Big(\|E(t)\|_{L^{k+3}}+  |E(t,\xi)|\Big)\leq C(k)\Big( \| E(t)\|_{L^{k+3}}+\left|E(t,\xi)\right|\Big)
H_k(t)^{\frac{k+2}{k+3}},
\end{split}\end{equation*}
because (since $1\leq (k+2)/(k+3)\leq 5/3$)
we know that $\|\rho(t)\|_{L^{\frac{k+2}{k+3}}}\leq C$ by Proposition \ref{prop:norm-star},
 and $H_k(t)\geq 1$.

\end{proof}

\bigskip

In order to exploit  Lemma \ref{lemma:est-moments} to estimate the moments $H_k(t)$ by a Gronwall inequality, we now need to control the electric fields
$\|E(t)\|_{L^{k+3}}$ and  $ |E(t,\xi)|$.  The next Subsection will be devoted to the control of $\|E(t)\|_{L^{k+3}}$. On the other hand,  when $\mathcal{M}_0$ is not too large,
 one can control the quantity $|E(t,\xi)|$ by the following virial-type argument.

 \begin{proposition}
 \label{prop:virial}
Let $(f,\xi)$ be a classical solution of \eqref{syst:VP}-\eqref{syst:VP-3} on $[0,T]$.
We have
 \begin{equation*}
 \int_0^t \left|E(s,\xi(s))\right|\,ds\leq C (1+t),
 \end{equation*}
and 
\begin{equation*}
 \int_0^t \iint_{\R^3\times \R^3} \frac{f(s,x,v)}{|x-\xi(s)|^2}\,dx\,dv\,ds\leq C(1+t).
\end{equation*}
 \end{proposition}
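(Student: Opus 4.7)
The strategy is a Morawetz/virial-type estimate based on the moment
\[
I(t)=\iint_{\R^3\times \R^3} f(t,x,v)\,|x-\xi(t)|\,dx\,dv,
\]
which, for a classical solution, is finite (and $C^2$ in time) because $f$ has compact support and vanishes in a neighborhood of $\xi$. Note also that, because of the Coulomb shape of $E$, the second inequality in the statement implies the first:
\[
|E(s,\xi(s))|\le \int_{\R^3} \frac{\rho(s,y)}{|y-\xi(s)|^2}\,dy = \iint_{\R^3\times\R^3} \frac{f(s,y,v)}{|y-\xi(s)|^2}\,dy\,dv,
\]
so it is enough to control the second integral.

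First I would use the Vlasov equation \eqref{syst:VP} and the ODE \eqref{syst:VP-2} to compute
\[
\dot I(t)=\iint_{\R^3\times \R^3} f(t,x,v)\,(v-\eta(t))\cdot \frac{x-\xi(t)}{|x-\xi(t)|}\,dx\,dv,
\]
and then the second derivative $\ddot I(t)$. The crucial observations in the computation of $\ddot I$ are: (a) the contribution of the singular field $F$ is precisely
\[
\iint_{\R^3\times\R^3} f\,F\cdot\frac{x-\xi}{|x-\xi|}\,dx\,dv=\iint_{\R^3\times\R^3} \frac{f}{|x-\xi|^2}\,dx\,dv,
\]
which is the quantity we want to control; (b) the terms involving $v$, $\eta$, $\dot\xi=\eta$ and $\dot\eta=E(\xi)$ recombine into
\[
\iint f\,\frac{|v-\eta|^2\,|x-\xi|^2-\bigl((v-\eta)\cdot(x-\xi)\bigr)^2}{|x-\xi|^3}\,dx\,dv\ \ge\ 0,
\]
which is the non-negative angular-momentum-like contribution; and (c) the remaining term is $\iint f\,(E-E(\xi))\cdot(x-\xi)/|x-\xi|\,dx\,dv$. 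Dropping the non-negative angular term gives
\[
\iint_{\R^3\times \R^3}\frac{f(t,x,v)}{|x-\xi(t)|^2}\,dx\,dv \ \le\ \ddot I(t)+\iint_{\R^3\times \R^3} f(t,x,v)\,|E(t,x)-E(t,\xi(t))|\,dx\,dv.
\]

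I would then integrate this inequality on $[0,t]$. The boundary term $\dot I(t)-\dot I(0)$ is uniformly bounded by the mass and the energy: $|\dot I(t)|\le \iint f\,|v-\eta|\,dx\,dv\le \sqrt{\mathcal M_0}\,\bigl(\iint f|v-\eta|^2\bigr)^{1/2}\le C$, using Proposition \ref{eap_de_base} and Proposition \ref{prop:dist}. For the remaining integrand I would split $|E(s,x)-E(s,\xi(s))|\le |E(s,x)|+|E(s,\xi(s))|$ and estimate each:
\[
\int_0^t\iint f\,|E(s,x)|\,dx\,dv\,ds=\int_0^t\int \rho\,|E|\,dx\,ds\le \int_0^t \|\rho(s)\|_{L^{5/3}}\,\|E(s)\|_{L^{5/2}}\,ds\le Ct,
\]
by Proposition \ref{prop:norm-star} (with $r=5/2\in(3/2,15/4]$), and
\[
\int_0^t\iint f\,|E(s,\xi(s))|\,dx\,dv\,ds=\mathcal M_0\int_0^t|E(s,\xi(s))|\,ds\le \mathcal M_0\int_0^t\iint\frac{f}{|x-\xi|^2}\,dx\,dv\,ds,
\]
by the Coulomb bound recalled above.

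The main obstacle — and precisely where the smallness assumption (i) of Theorem \ref{thm:main} enters — is closing this last estimate. Combining the previous inequalities yields
\[
(1-\mathcal M_0)\int_0^t\iint_{\R^3\times \R^3}\frac{f(s,x,v)}{|x-\xi(s)|^2}\,dx\,dv\,ds\le C(1+t),
\]
and choosing $\lambda<1$ in Theorem \ref{thm:main} so that $\mathcal M_0<\lambda<1$ gives the desired bound
\[
\int_0^t\iint_{\R^3\times \R^3}\frac{f(s,x,v)}{|x-\xi(s)|^2}\,dx\,dv\,ds\le \frac{C}{1-\lambda}(1+t),
\]
from which the bound on $\int_0^t|E(s,\xi(s))|\,ds$ follows at once. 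This also explains why the small-mass hypothesis (i) on the plasma density appears in Theorem \ref{thm:main}: without it the absorbing argument on the right-hand side does not close.
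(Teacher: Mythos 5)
Your proof is correct and is essentially the paper's argument: the paper writes the virial identity pointwise along plasma characteristics (a second derivative of $|\boldsymbol{x}(s)-\xi(s)|$) and then integrates against $f_0(x,v)\,dx\,dv$, which is precisely your Eulerian moment $I(t)=\iint f\,|x-\xi|\,dx\,dv$ after the Liouville change of variables, with the same decomposition (non-negative angular term dropped, singular $F$-contribution giving $\iint f/|x-\xi|^2$, and the $E-E(\xi)$ remainder), the same bound $|\dot I|\le C$ via mass and energy, the same $L^{5/3}$--$L^{5/2}$ duality to handle $\iint\rho|E|$, and the same absorbing step using $\mathcal M_0<\lambda\le 1$. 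The only difference is presentational (Lagrangian pointwise vs. Eulerian integrated), which does not change the substance.
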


\begin{remark} This is the only point in the proof of Theorem \ref{thm:main} where we have to
use assumption $(i)$.\end{remark}

\begin{proof}Let $(\boldsymbol{x}(s),\boldsymbol{v}(s))=(\boldsymbol{x}(s,x,v),\boldsymbol{v}(s,x,v))$ be a plasma trajectory on $[0,T]$. Using the system of ODE  \eqref{syst:VP-2} and  \eqref{syst:ode}, we compute 
\begin{equation*}\begin{split}
\frac{d^2}{ds^2}|\boldsymbol{x}(s)-\xi(s)|&=\frac{|\boldsymbol{v}(s)-\eta(s)|^2}{|\boldsymbol{v}(s)-\xi(s)|}+\frac{1}{|\boldsymbol{x}(s)-\xi(s)|^2}\\&
+\frac{\big(\boldsymbol{x}(s)-\xi(s)\big)\cdot \big(E(s,\boldsymbol{x}(s))-E(s,\xi(s))\big)}{|\boldsymbol{x}(s)-\xi(s)|}
-
\frac{[(\boldsymbol{x}(s)-\xi(s))\cdot (\boldsymbol{v}(s)-\eta(s))]^2}{|\boldsymbol{x}(s)-\xi(s)|^3}.\end{split}
\end{equation*}

Therefore
\begin{equation}\label{ineq:virial}
\frac{1}{|\boldsymbol{x}(s)-\xi(s)|^2}\leq \frac{d^2}{ds^2}|\boldsymbol{x}(s)-\xi(s)|+|E(s,\boldsymbol{x}(s))|+|E(s,\xi(s))|.
\end{equation}

On the other hand, since $f$ is constant along the trajectories of the measure-preserving flow $(\boldsymbol{x},\boldsymbol{v})$, we have by changing variable
\begin{equation*}\begin{split}
\left|E(s,\xi(s))\right|&\leq \iint_{\R^3\times \R^3}  \frac{f(s,x,v)}{|x-\xi(s)|^2}\,dx\,dv= \iint_{\R^3\times \R^3}  \frac{f_0(x,v)}{|\boldsymbol{x}(s,x,v)-\xi(s)|^2}\,dx\,dv.
\end{split}\end{equation*}
Therefore inserting \eqref{ineq:virial} we get
\begin{equation}\label{ineq:virial-2}
\begin{split}
\int_0^t \left|E(s,\xi(s))\right|\,ds
&\leq \iint_{\R^3\times \R^3}  f_0(x,v)\left(\int_0^t\frac{d^2}{ds^2}|\boldsymbol{x}(s)-\xi(s)|\,ds\right)\,dx\,dv\\&+\int_0^t \left(\iint_{\R^3\times \R^3}  f_0(x,v)|E(s,\boldsymbol{x}(s,x,v))|\,dx\,dv\right)\,ds+\mathcal{M}_0\int_0^t |E(s,\xi(s))|\,ds.
\end{split}
\end{equation}

For the first term in the right-hand side of \eqref{ineq:virial-2}, we have
\begin{equation*}\begin{split}
\iint_{\R^3\times \R^3}  f_0(x,v)\left(\int_0^t\frac{d^2}{ds^2}|\boldsymbol{x}(s)-\xi(s)|\,ds\right)\,dx\,dv&
=\iint_{\R^3\times \R^3}   f_0(x,v)\left[\frac{d}{ds}|\boldsymbol{x}-\xi|\right]_{s=0}^{s=t}\,dx\,dv\\
&\leq \iint_{\R^3\times \R^3}   f_0(x,v)\left(\Big|\frac{d}{ds}|\boldsymbol{x}-\xi|\Big|(t)+\Big|\frac{d}{ds}|\boldsymbol{x}-\xi|\Big|(0)\right)\,dx\,dv\\
&\leq 2\sup_{t\in[0,T]} \iint_{\R^3\times \R^3}  f_0(x,v)|\boldsymbol{v}(t,x,v)-\eta(t)|\,dx\,dv\\
&= 2\sup_{t\in[0,T]}\iint_{\R^3\times \R^3}  f(t,x,v)|v-\eta(t)|\,dx\,dv.
\end{split}
\end{equation*}
Hence, by H\"older's inequality, we obtain
\begin{equation}\label{ineq:asugar}\begin{split}
\iint_{\R^3\times \R^3}  f_0(x,v)\left(\int_0^t\frac{d^2}{ds^2}
|\boldsymbol{x}(s)-\xi(s)|\,ds\right)\,dx\,dv&\leq C
 \sup_{t\in[0,T]} \mathcal{M}(t)^{1/2}\,(\mathcal{H}(t) + \mathcal{H}(t)\,  \mathcal{M}(t))^{1/2}\leq C.
\end{split}\end{equation}

We turn to the second term in \eqref{ineq:virial-2}. We have
 by changing variable backwards
\begin{equation}\label{ineq:pao}
\begin{split}
\int_0^t \left(\iint_{\R^3\times \R^3}  f_0(x,v)|E(s,\boldsymbol{x}(s,x,v))|\,dx\,dv\right)\,ds&=\int_0^t \left(\iint_{\R^3\times \R^3}  f(s,x,v)|E(s,x)|\,dx\,dv\right)\,ds\\
&=\int_0^t \left(\int_{\R^3} \rho(s,x)|E(s,x)|\,dx\right)\,ds\\
&\leq C\int_0^t \|\rho(s)\|_{L^{5/3}}\|E(s)\|_{L^{5/2}}\,ds\leq C t.
\end{split}
\end{equation}
We used Proposition \ref{prop:norm-star} in the last inequality.

Therefore coming back to \eqref{ineq:virial-2}, we find
\begin{equation*}
\begin{split}
\int_0^t \left|E(s,\xi(s))\right|\,ds
&\leq C (1+t)+\mathcal{M}_0\int_0^t \left|E(s,\xi(s))\right|\,ds.
\end{split}
\end{equation*}
The first inequality of Proposition \ref{prop:virial} then follows from the assumption $(i)$ on $\mathcal{M}_0$ since $\lambda\leq 1$.
For the second inequality, we come back to \eqref{ineq:virial}, integrate with respect to the measure $f(0,x,v)\,dx\,dv\,ds$, and use \eqref{ineq:asugar} and \eqref{ineq:pao}. The conclusion follows.
\end{proof}

\subsection{A Duhamel formula}

\label{subsec:duhamel}

The purpose of this Subsection is to establish estimates for the norms $\|E(t)\|_{k+3}$.

\medskip

Let $\chi_0$ be a smooth cutoff function such that $\chi_0(x)=1$ on $B(0,1)$, $\chi_0(x)=0$ on $B(0,2)^c$ 
and $0<\chi_0(x)<1$ on $\R^3$, $\| \nabla\chi\|_{L^{\infty}} \le 2$, $\| \nabla\nabla\chi\|_{L^{\infty}} \le 20$. 
Let $R:=R(T)>1$ be sufficiently large (depending only on  $\|f_0\|_{L^1}$
 and $T$). Such an $R:=R(T)$ will be determined by the condition  \eqref{cond:R} in the Appendix, by
 $$ R(T)= 24^{1/3}\,K_0^{1/3} (1 + \|f_0\|_{L^1})^{1/3}\, (1+T), $$
with $K_0$ a given number such that $K_0\geq 100$.

We set  $\chi_R(z)=\chi_0(z/R)$. 

We decompose the electric field and the force field into two parts:
\begin{equation*}
\begin{split}
E=E_{\text{int}}+E_{\text{ext}}, \qquad F = F_{\text{int}}+F_{\text{ext}},
\end{split}
\end{equation*}
where 
\begin{equation*}
E_{\text{int}}=\rho\ast ( x\mapsto \chi_R(x)\, x/|x|^3), \qquad
F_{\text{int}}(t,x) = F(t,x)\,\chi_R(x-\xi(t)).
\end{equation*}
 
We have (denoting by superscripts the components of the fields)
\begin{equation}\label{estimate:Eext-0}
 \max_{i=1,2,3}\|E^{(i)}_{\text{ext}}+F^{(i)}_{\text{ext}}\|_{L^\infty}\leq \frac{1+\|f_0\|_{L^1}}{R^2},
\end{equation}
\begin{equation}\label{estimate:Eext-1}
\max_{i,j=1,2,3}\|\pa_{x_j} (E^{(i)}_{\text{ext}}+F^{(i)}_{\text{ext}})\|_{L^\infty}\leq 6\,\frac{1+\|f_0\|_{L^1}}{R^3},
\end{equation}
and
\begin{equation}\label{estimate:Eext-2}
\max_{i,j,k=1,2,3} \|\pa_{x_jx_k}^2 (E^{(i)}_{\text{ext}}+F^{(i)}_{\text{ext}})\|_{L^\infty}
\leq  60\,\frac{1+\|f_0\|_{L^1}}{R^4}.
\end{equation}

\medskip

As in \cite{LP}, we  write the Vlasov-Poisson equation using the
internal part of $E$ and $F$ as a source term: 
\begin{equation}\label{source-term}
\partial_t f+v\.\nabla_x f+(E_{\text{ext}}+F_{\text{ext}})\.\nabla_v f=-(E_{\text{int}}+F_{\text{int}})\.\nabla_v f.
\end{equation}
The reason why we do not consider the free transport (namely we do not consider the full field as a source term) will appear in Subsection \ref{subsection:large-time}.

\medskip

In all this Subsection, we fix $t\in[0,T]$. We define the flow map $(x,v)\mapsto (X^t,V^t)(x,v)$ such that for all 
$s\in[0,t]$,
\begin{equation}\label{characteristics-2}
\left\{
\begin{array}{ll}
\dsp \frac{d}{ds}X^t(s,x,v) =-V^t(s,x,v),&X^t(0,x,v)=x, \\
\dsp \frac{d}{ds}V^t(s,x,v)=-(E_{\text{ext}}+F_{\text{ext}})(t-s,X^t(s,x,v)),&V^t(0,x,v)=v.
\end{array}
\right.
\end{equation}
This is the backward flow associated to the field $E_{\text{ext}}+F_{\text{ext}}$. More precisely, if $(X_f,V_f):\R_+\times \R^3\times \R^3\to \R^3$ is the forward flow of $E_{\text{ext}}+F_{\text{ext}}$, defined by $\frac{d}{dt}{X}_f(t,x,v)=V_f(t,x,v), \frac{d}{dt}{V}_f(t,x,v)= (E_{\text{ext}}+F_{\text{ext}})(t, X_f(t,x,v))$ and $(X_f,V_f)(0,x,v)=(x,v)$,
 then for all $t\in \R_+$, we have $(X_f,V_f)(t,\cdot,\cdot)^{-1}=(X^t,V^t)(t,\cdot,\cdot)$.

\medskip

In the sequel we shall omit the dependence upon $t$ in the notations. Moreover,
  we shall sometimes write $$(X(s),V(s)) =(X^t(s,x,v),V^t(s,x,v))=(X(s,x,v),V(s,x,v)).$$
 Then, $(X(s),V(s))$ preserves the Lebesgue's measure on $\R^3\times \R^3$. Note that if the external field vanished we would obtain the free flow $X(s,x,v)=x-vs,V(s,x,v)=v$, and if we considered the total field $E+F$ in \eqref{characteristics-2}, we would obtain 
$(X,V)(t)=(\boldsymbol{x},\boldsymbol{v})(t)^{-1}$.

Using the invertibility properties of the flow listed in the Appendix,
 one can establish the analogue of Proposition \ref{interp}:
\begin{proposition}
\label{interp-modified}Let $(f,\xi)$ be a classical solution of \eqref{syst:VP}-\eqref{syst:VP-3} on $[0,T]$. Let $t\in[0,T]$ and let $(\tau,s)\in[0,t]^2$. Let $b>a \ge 0$. Then
\begin{equation}
\label{estimate-moment-modified}
\int_{\R^3}  |v|^{a}f(\tau,X(s,x,v),V(s,x,v))\,dv\leq C(a,b)\, 
\|f_0\|_{L^{\infty}}^{\frac{b-a}{3+b}}\left( \int_{\R^3}
|v|^b f(\tau,X(s,x,v),V(s,x,v))\,dv\right)^{\frac{3+a}{3+b}}      .
\end{equation}

Moreover, setting  $\tilde{\rho}(\tau,s,x)=\int_{\R^3} f(\tau,X(s,x,v),V(s,x,v))\,dv$, we have
\begin{equation} \label{int2-modified}
\|\tilde{\rho}(\tau,s,\cdot)\|_{L^{ \frac{b+3}{3}}}
\leq C(b)\,(1+s)^{\frac{3b}{3+b}} \|f_0\|_{L^{\infty}}^{\frac{b}{3+b}}\bigg(1+\iint_{\R^3\times \R^3}  |v|^b \,
f(\tau,x,v)\, dx\, dv \bigg)^{\frac3{3+b}} .
\end{equation}
\end{proposition}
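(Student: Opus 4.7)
The plan is to mimic the proof of Proposition \ref{interp} applied to the composed density $v\mapsto f(\tau, X(s,x,v), V(s,x,v))$, exploiting two facts: the $L^\infty$ bound $\|f(\tau,\cdot,\cdot)\|_{L^\infty}=\|f_0\|_{L^\infty}$ from Proposition \ref{eap_de_base}, together with the measure-preserving character of $(x,v)\mapsto(X(s,x,v),V(s,x,v))$ and the uniform bound \eqref{estimate:Eext-0} on $E_{\text{ext}}+F_{\text{ext}}$ used to compare $v$ with $V(s,x,v)$.

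For the first inequality \eqref{estimate-moment-modified}, I would fix $\tau,s,x$ and set $g(v)=f(\tau, X(s,x,v), V(s,x,v))$. Since $\|g\|_{L^\infty(\R^3_v)}\leq \|f_0\|_{L^\infty}$, the standard splitting at $|v|=R$ gives
\[
\int_{\R^3}|v|^a g(v)\,dv\leq C R^{3+a}\|f_0\|_{L^\infty}+R^{a-b}\int_{\R^3}|v|^b g(v)\,dv,
\]
and optimizing in $R>0$ reproduces \eqref{estimate-moment-modified} exactly as in the proof of Proposition \ref{interp}.

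For the second inequality, I would apply the first one with $a=0$, raise to the power $(b+3)/3$ to obtain the pointwise bound
\[
\tilde\rho(\tau,s,x)^{(b+3)/3}\leq C(b)\,\|f_0\|_{L^\infty}^{b/3}\int_{\R^3}|v|^b f(\tau, X(s,x,v), V(s,x,v))\,dv,
\]
and integrate in $x\in\R^3$. The key step is then the change of variables $(x',v')=(X(s,x,v),V(s,x,v))$, which preserves the Lebesgue measure on $\R^3\times \R^3$ (invoking the invertibility and Jacobian-one property of the flow listed in the Appendix). Integrating the second ODE of \eqref{characteristics-2} yields
\[
v = V(s,x,v)+\int_0^s (E_{\text{ext}}+F_{\text{ext}})(t-\sigma, X(\sigma,x,v))\,d\sigma,
\]
so by \eqref{estimate:Eext-0} and $R\geq 1$ one has $|v|\leq |v'|+C s$, and therefore $|v|^b\leq C(b)(|v'|^b+s^b)$. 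After the change of variable the right-hand side is bounded by $C(b)(1+s)^b\bigl(1+\iint|v|^b f(\tau,x,v)\,dx\,dv\bigr)$ using conservation of mass, and taking the $3/(b+3)$-th root yields \eqref{int2-modified}.

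The only delicate point I expect is the change of variables step: rigorously, it relies on the global existence, invertibility, and measure-preservation of the flow $(X,V)(s,\cdot,\cdot)$ on $\R^3\times\R^3$, which follow from the Lipschitz and $L^\infty$ control on $E_{\text{ext}}+F_{\text{ext}}$ encoded in \eqref{estimate:Eext-0}--\eqref{estimate:Eext-2} and are established in the Appendix. Once these flow properties are granted, the rest is a routine adaptation of \cite{LP}; the genuinely new feature is the polynomial drift $|v - V(s,x,v)|\leq Cs$, which produces the $(1+s)^{3b/(b+3)}$ factor absent from the original estimate \eqref{int2}.
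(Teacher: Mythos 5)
Your proposal is correct and follows essentially the same route as the paper: the first inequality is the same splitting-and-optimizing argument as Proposition~\ref{interp} applied to $v\mapsto f(\tau,X(s,x,v),V(s,x,v))$ with the conserved bound $\|f(\tau)\|_{L^\infty}=\|f_0\|_{L^\infty}$, and the second follows by applying it with $a=0$, integrating in $x$, using that $(X(s),V(s))$ preserves Lebesgue measure, and invoking the drift bound $|V(s,x,v)-v|\le Cs$ from \eqref{nnntrois} (you rederive it from \eqref{estimate:Eext-0}, which is where \eqref{nnntrois} also comes from); the only implicit step is that $\mathcal{M}_0\le 1$ (assumption (i)) so the mass term is absorbed into the additive $1$.
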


\begin{proof}
 For the first inequality this is exactly the same proof as for Proposition
\ref{interp}, estimate \eqref{estimate-moment}. The second inequality is obtained thanks to the bound $|V(s,x,v)-v|\leq s$ (see \eqref{nnntrois}), and using the fact that $(X(s),V(s))$ preserves the Lebesgue's measure. Recall also that $\|f(\tau)\|_{L^\infty}=\|f_0\|_{L^{\infty}}$.
\end{proof}

A consequence of Propositions \ref{sobo} and \ref{interp-modified} is
\begin{proposition}\label{sobo-modified} Let $(f,\xi)$ be a classical solution of \eqref{syst:VP}-\eqref{syst:VP-3} on $[0,T]$. Let $t\in[0,T]$ and let $(\tau,s)\in[0,t]^2$.
 Let  $\tilde{\rho}(\tau,s,x) =\int_{\R^3} f(\tau,X(s,x,v),V(s,x,v))\,dv$ and $\tilde{E}(\tau,s, \cdot)=\tilde{\rho}(\tau,s, \cdot)\ast ( x \mapsto x/|x|^3)$. If $0<m <6$, we have
\begin{equation} \label{ineq:norm-field-modified}
 \| \tilde{E}(\tau,s)\|_{L^{\frac{3(m+3)}{6-m}} } \leq C(1+s)^{\frac{3m}{3+m}}\left(1+\iint_{\R^3\times\R^3} |v|^mf(\tau,x,v)\,dx\,dv\right)^{\frac{3}{m+3}}\leq C\, (1+s)^{\frac{3m}{3+m}}H_m(\tau)^{\frac{3}{m+3}},
\end{equation}
and if $m>6$,
\begin{equation} \label{ineq:bound-field-modified}
 \| \tilde{E}(\tau,s)\|_{L^{\infty}} \le C\, (1+s)^{\frac{3m}{3+m}}\left(1+\iint_{\R^3\times\R^3} |v|^mf(\tau,x,v)\,dx\,dv\right)^{\frac{3}{m+3}}\leq C  (1+s)^{\frac{3m}{3+m}} H_m(\tau)^{\frac{3}{m+3}},
\end{equation}
where $C>0$ depends only on $\|f_0\|_{L^{\infty}}$ and $m$.
\end{proposition}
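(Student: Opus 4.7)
The plan is to combine the velocity-moment/$L^p$-density inequality \eqref{int2-modified} in Proposition \ref{interp-modified} with the standard Sobolev-type bound on the Coulomb convolution in Proposition \ref{sobo}, applied now to the modified density $\tilde\rho(\tau,s,\cdot)$ and the modified field $\tilde E(\tau,s,\cdot)$.

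First I would apply \eqref{int2-modified} with $b=m$, which gives
\begin{equation*}
\|\tilde\rho(\tau,s,\cdot)\|_{L^{(m+3)/3}} \leq C(m)\,(1+s)^{\frac{3m}{3+m}}\,\|f_0\|_{L^\infty}^{\frac{m}{3+m}}\left(1+\iint_{\R^3\times\R^3}|v|^m f(\tau,x,v)\,dx\,dv\right)^{\frac{3}{3+m}}.
\end{equation*}
Setting $q=(m+3)/3$, the range $0<m<6$ corresponds exactly to $1<q<3$, while $m>6$ corresponds to $q>3$. In the first case the Sobolev exponent of Proposition \ref{sobo} is $\frac{3q}{3-q}=\frac{3(m+3)}{6-m}$, so \eqref{Sob} yields \eqref{ineq:norm-field-modified}; in the second case \eqref{Sobi} gives directly the $L^\infty$ bound \eqref{ineq:bound-field-modified}. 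The last inequality in each of the two assertions is then obtained by invoking \eqref{ineq:energy-velo}, which says $\iint |v|^m f(\tau,x,v)\,dx\,dv\leq 2^m H_m(\tau)$, together with the fact that $H_m(\tau)\geq 1$ so that $1+\iint |v|^m f\,dx\,dv\leq C\,H_m(\tau)$.

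I do not expect any genuine obstacle here: the content of the statement is precisely the composition of two estimates that are already in place. The only points requiring minor care are (a) verifying that the exponent $q=(m+3)/3$ lies in the right Sobolev range in each case so that Proposition \ref{sobo} actually applies, and (b) noting that Proposition \ref{sobo} is formulated for a generic density $\rho\in L^1\cap L^s$, so I should quickly check that $\tilde\rho(\tau,s,\cdot)\in L^1$ uniformly in $(\tau,s)$; this follows from the measure-preserving property of the flow $(X,V)$ stated after \eqref{characteristics-2}, since $\|\tilde\rho(\tau,s,\cdot)\|_{L^1}=\iint f(\tau,X(s,x,v),V(s,x,v))\,dx\,dv=\|f(\tau)\|_{L^1}=\|f_0\|_{L^1}$. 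Once these two checks are in place the proof is essentially two lines.
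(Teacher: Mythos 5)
Your proof is correct and matches the paper's own argument exactly: apply \eqref{int2-modified} with $b=m$ to bound $\|\tilde\rho(\tau,s,\cdot)\|_{L^{(m+3)/3}}$, then apply \eqref{Sob} (for $0<m<6$) or \eqref{Sobi} (for $m>6$) with $s=(m+3)/3$, and conclude via \eqref{ineq:energy-velo} and $H_m\geq 1$. The extra check that $\tilde\rho(\tau,s,\cdot)\in L^1$ by measure preservation is a reasonable (if implicit in the paper) sanity verification.
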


\begin{proof}
We apply \eqref{Sob} and \eqref{Sobi} with $s=(m+3)/3$, and conclude thanks to \eqref{int2-modified} 
with $b=m$.
\end{proof}

\medskip

Using Duhamel formula we can express the solution of \eqref{source-term} as follows:
\begin{equation}\begin{split}\label{eq:duhamel}
f(t,x,v)=f_0(X(t,x,v),V(t,x,v))-\int_{0}^t\big( \dev_v [(E_{\text{int}}+F_{\text{int}})\,f]\big)(t-s,X(s,x,v),V(s,x,v))\, ds.
\end{split}
\end{equation}

\begin{proposition}\label{prop:duhamel-0} 
Let $(f,\xi)$ be a classical solution of \eqref{syst:VP}-\eqref{syst:VP-3} on $[0,T]$.
We have for  $3\leq m <m_0$,  
\begin{equation} \label{eqE}
\|E(t)\|_{L^{m+3}}\leq C\, (1+T)^{\sup(2, \frac{3m}{3+m})} +C\, \left\|\int_0 ^t s
\int_{\R^3}(|E_{\text{int}}+F_{\text{int}}|f)(t-s,X(s),V(s))\,dv \, ds\right\|_{L^{m+3}},
\end{equation}
where $C$ depends on $m$ and the initial data.
\end{proposition}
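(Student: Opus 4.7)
The plan is to start from the Duhamel representation \eqref{eq:duhamel}, integrate in $v$ to split $\rho(t,x)=I_1(t,x)+I_2(t,x)$, where
$$I_1(t,x)=\int_{\R^3} f_0(X(t,x,v),V(t,x,v))\,dv,$$
$$I_2(t,x)=-\int_0^t\int_{\R^3} \dev_v\bigl[(E_{\text{int}}+F_{\text{int}})f\bigr](t-s,X(s),V(s))\,dv\,ds,$$
and correspondingly write $E(t)=E_1(t)+E_2(t)$ with $E_i=I_i\ast(y\mapsto y/|y|^3)$. The bound on $\|E(t)\|_{L^{m+3}}$ then splits into two pieces with different treatments, corresponding respectively to the two terms on the right-hand side of \eqref{eqE}.

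For $E_1$, I would apply \eqref{int2-modified} of Proposition \ref{interp-modified} with $\tau=0$, $b=m$ to bound $\|I_1(t)\|_{L^{(m+3)/3}}$ by a constant times $(1+T)^{3m/(3+m)}$, the moments of $f_0$ of order up to $m$ being finite by assumption (ii). Combining this with the trivial mass bound $\|I_1(t)\|_{L^1}\le\|f_0\|_{L^1}$ and interpolating, then invoking the Hardy--Littlewood--Sobolev inequality of Proposition \ref{sobo}, one obtains an $L^{m+3}$-bound on $E_1$ of the form $C(1+T)^{\sup(2,\,3m/(3+m))}$, which accounts for the first summand in \eqref{eqE}.

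For $E_2$, the key step is to transform the $\dev_v$ appearing inside the integrand into a $\dev_x$ on the outside, so that convolution with $y/|y|^3$ can be reinterpreted via a Calder\'on--Zygmund operator. Writing the chain rule
$$\nabla_v\bigl[G(t-s,X(s),V(s))\bigr]=(\nabla_v X(s))^T(\nabla_x G)(t-s,X(s),V(s))+(\nabla_v V(s))^T(\nabla_v G)(t-s,X(s),V(s))$$
for $G:=(E_{\text{int}}+F_{\text{int}})f$, integrating by parts in $v$ (the total $\nabla_v$ of a composite integrates to zero against compactly supported data), and solving the resulting linear system for $(\nabla_v G)(X(s),V(s))$, one arrives at
$$\int_{\R^3}\dev_v G(t-s,X(s),V(s))\,dv=\dev_x\int_{\R^3}\bigl(A(s,x,v)\,G\bigr)(t-s,X(s),V(s))\,dv,$$
where $A(s,x,v)$ is (up to sign) $\nabla_v X(s)\,(\nabla_v V(s))^{-1}$. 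The Appendix estimates on the cutoff flow, together with the smallness \eqref{estimate:Eext-0}--\eqref{estimate:Eext-2} guaranteed by the choice of $R(T)$, ensure that $\nabla_v V(s)=I+O(\text{small})$ and $\nabla_v X(s)=-sI+O(\text{small})$, so that $|A(s,x,v)|\le Cs$.

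Finally, $E_2=I_2\ast(y/|y|^3)=-4\pi\nabla\Delta^{-1}I_2$ becomes a Riesz transform applied to $\int_0^t s\int_{\R^3} (|E_{\text{int}}+F_{\text{int}}|f)(t-s,X(s),V(s))\,dv\,ds$, and boundedness of Riesz transforms on $L^{m+3}(\R^3)$ (valid since $1<m+3<\infty$) produces the second summand of \eqref{eqE}. The main obstacle is precisely the chain-rule/integration-by-parts manipulation: one must verify that the perturbative corrections $\nabla_v V(s)-I$ and $\nabla_v X(s)+sI$ coming from the external cutoff field really yield lower-order contributions that can be absorbed into the two terms displayed in \eqref{eqE}, and this is exactly what the flow estimates deferred to the Appendix accomplish.
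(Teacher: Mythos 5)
Your plan follows the same architecture as the paper's proof: Duhamel decomposition $\rho=\rho_1+\rho_2$, interpolation via Propositions \ref{interp-modified} and \ref{sobo-modified} for $\rho_1$, and a change of variables from $\dev_v$ to $\dev_x$ for $\rho_2$ followed by Calder\'on--Zygmund. But there is a real gap in the treatment of $\rho_2$, and a related misattribution in the $(1+T)$-power.

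The central claim you make for $\rho_2$ --- that
\begin{equation*}
\int_{\R^3}\dev_v G(t-s,X(s),V(s))\,dv=\dev_x\int_{\R^3}\bigl(A(s,x,v)\,G\bigr)(t-s,X(s),V(s))\,dv
\end{equation*}
with $A$ the $O(s)$ matrix you describe --- is not an identity. Solving the chain-rule system for $(D_v g)(X,V)$ gives
\begin{equation*}
(D_v g)(X,V)=\bigl[(D_v h)-(D_x h)(D_xX)^{-1}(D_vX)\bigr]M,\qquad M=\bigl[D_vV-(D_xV)(D_xX)^{-1}(D_vX)\bigr]^{-1},
\end{equation*}
and taking traces then integrating in $v$, the $\dev_v(Mh)$ term does vanish, but you must pull $M$ and $N=(D_xX)^{-1}(D_vX)M$ through the respective divergences. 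Since $M$ and $N$ depend on $(x,v)$, this produces the zero-order remainder $\rho_{2,2}(t,x)=\int_0^t\int_{\R^3}\bigl(\dev_v({}^tM)-\dev_x({}^tN)\bigr)\cdot(E_{\text{int}}+F_{\text{int}})f\,dv\,ds$ in addition to the $\dev_x$ piece $\rho_{2,1}$. This remainder is not a divergence in $x$, so it cannot be absorbed into the Riesz-transform argument; it must be estimated separately. In the paper, $\|\rho_{2,2}(t)\|_{L^{m+3}}$ is bounded by the same quantity as the second summand (using the appendix estimates $\|\dev_v({}^tM)\|_\infty,\|\dev_x({}^tN)\|_\infty\le Cs$), but controlling $\|\rho_{2,2}(t)\|_{L^1}$ requires the virial estimate of Proposition \ref{prop:virial} to handle the $F_{\text{int}}$ contribution, and yields the $Ct(1+t)\le C(1+T)^2$ term. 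This is precisely where the exponent $2$ in $\sup(2,\tfrac{3m}{3+m})$ comes from; your proposal incorrectly attributes it to $E_1$, which is in fact bounded by $C(1+T)^{3m/(3+m)}$ alone after interpolation between $\|\rho_1\|_{L^{5/3}}$ and $\|E_1\|_{L^{3(m+3)/(6-m)}}$. So you need to (a) keep track of the non-divergence remainder $\rho_{2,2}$, (b) invoke Proposition \ref{prop:virial} (hence assumption (i)) for its $L^1$ norm, and (c) reassign the $(1+T)^2$ to that remainder rather than to $E_1$.
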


\begin{proof}
By \eqref{eq:duhamel}, we have
\begin{equation} \label{Srho}\begin{split}
 \rho(t,x) &= \int_{\R^3} f_0(X(t,x,v), V(t,x,v))\, dv \\-  \int_0^t& \int_{\R^3} \big(\dev_v
((E_{\text{int}}+F_{\text{int}})\,f)\big)(t-s,X(s,x,v),V(s,x,v))\, dv\,ds\\
& :=\rho_1(t,x)+\rho_2(t,x).\end{split}
\end{equation}

We set $E_1(t)=\rho_1(t)\ast (x\mapsto x/|x|^3)$. We apply Propositions \ref{interp-modified} and \ref{sobo-modified}  with $\tau=0$ and $s=t$: first, since
$H_m(0)$ is finite, we have (with $\frac{3(m+3)}{6-m}$ replaced by $\infty$ if $m>6$) 
\begin{equation*}
\sup_{t\in[0,T]}\left\|E_1(t)\right\|_{L^{\frac{3(m+3)}{6-m}}}\leq C\, (1+T)^{\frac{3m}{3+m}}.
\end{equation*}
On the other hand, by \eqref{int2-modified}, we have $\|\rho_1(t)\|_{L^{5/3}}\leq C\, (1+T)^{6/5}\, (1+H_2(0)^{3/5})
\leq C\, (1+T)^{6/5}$. Therefore by interpolation, observing that $5/3\leq m+3\leq 3(m+3)/(6-m)$,
 we get
\begin{equation}
\label{eq:E1estiamte}
\sup_{t\in [0,T]}\left\|E_1(t)\right\|_{{m+3}}\leq C(1+T)^{\frac{3m}{3+m}}.
\end{equation}

For the term $\rho_2$ and the corresponding field $E_2$,
 we have to work more. More precisely, we will establish in the Appendix, see \eqref{nrho}, that
\begin{equation}\label{eq:rho2}
\begin{split}
&\rho_2(t,x) =\dev_x  \int_0^t\int_{\R^3} N(s,x,v) [(E_{\text{int}}+F_{\text{int}})f](t-s,X(s,x,v),V(s,x,v))\,dv\,ds\\
&+ \int_0^t\int_{\R^3} ( \dev_v (^tM)(x,v)-\dev_x (^tN)(x,v))
[ (E_{\text{int}}+F_{\text{int}})f](t-s,X(s,x,v),V(s,x,v))\,dv\,ds\\
& :=\rho_{2,1}(t,x)+\rho_{2,2}(t,x)\\
& :=\dev_x a_{2,1}(t,x)+\rho_{2,2}(t,x),
\end{split}
\end{equation}
where
\begin{equation*}\begin{split}
M&=M(s,x,v)=[D_v V- D_xV(D_x X)^{-1}D_v X]^{-1}, \\ 
N&=N(s,x,v)=(D_x X)^{-1}(D_v X) M(s,x,v).
\end{split}
\end{equation*}

We first estimate $E_{2,1}(t)=\rho_{2,1}(t)\ast (x \mapsto x/|x|^3)$. By elliptic regularity we have
\begin{equation*}
\|E_{2,1}(t)\|_{L^{m+3}}\leq C\|a_{2,1}(t)\|_{L^{m+3}}.
\end{equation*}
Next, by \eqref{nnnun} we have $\|N(s)\|_{L^{\infty}(\R^3\times\R^3)}\leq C s$, so that
\begin{equation}\label{eq:rho12}
\|E_{2,1}(t)\|_{L^{m+3}}\leq C\,
\left\|\int_0^t s \int_{\R^3} (|E_{\text{int}} + F_{\text{int}}|\,f)(t-s, X(s,x,v), V(s,x,v)) \, dv\, ds\right\|_{L^{m+3}}.\end{equation}

We next estimate $E_{2,2}(t)=\rho_{2,2}(t)\ast  (x \mapsto  x/|x|^3) $. By interpolation inequality \eqref{Sob} with $s=3(m+3)/(m+6)\in ]1,3[$,
 we have $\|E_{2,2}(t)\|_{L^{m+3}}\leq C\|\rho_{2,2}(t)\|_{L^{\frac{3(m+3)}{m+6}}}$ and,
 since $1<3\,(m+3)/(m+6)<m+3$,
  we obtain
\begin{equation*}
\|E_{2,2}(t)\|_{L^{m+3}}\leq  
C\,(\|\rho_{2,2}(t)\|_{L^{m+3}}+\|\rho_{2,2}(t)\|_{L^1}).
\end{equation*}
On the other hand, by \eqref{ineq:inverse-M} and \eqref{dvp5}  (Subsection \ref{sub:appendix-flow} in the 
Appendix), we have
$M^{-1}=I_3+sP_5$ with  $\max_{i}\|\pa_{v_i} P_5\|_{L^{\infty}([0,t]\times\R^3\times\R^3)}\leq C$ and with $\|M\|_{L^{\infty}([0,t]\times\R^3\times\R^3)}\leq 2$ by \eqref{nnnun-M}, so that
\begin{equation}\label{nnnquatre} 
\begin{split}
 \| \dev_v (^t M)(s)\|_{L^\infty(\R^3\times\R^3)}\leq C s,\quad \forall s\in[0,t].
\end{split}
\end{equation}
Similarly, by \eqref{ineq:inverse-N} and \eqref{dxp6} we have $N=sP_6$ with  $\max_i \|\pa_{x_i} P_6\|_{L^{\infty}([0,t]\times\R^3\times\R^3)}\leq C$, hence
\begin{equation}\label{nnnquatre-bis} 
\begin{split}
 \|\dev_x (^tN)(s)\|_{L^{\infty}(\R^3\times\R^3)}\leq Cs,\quad \forall s\in[0,t].
\end{split}
\end{equation}
Therefore 
\begin{equation}\label{eq:rho22-1}
\|\rho_{2,2}(t)\|_{L^{m+3}}\leq 
C\left\|\int_0^t s \int_{\R^3} (|E_{\text{int}} + F_{\text{int}}|\,f)(t-s, X(s,x,v), V(s,x,v)) \, dv\, ds\right\|_{L^{m+3}}.\end{equation}
Next, we have by changing variable,
\begin{equation}\label{eq:rho22-2}
\begin{split}
\|\rho_{2,2}(t)\|_{L^1}&\leq 
C\int_0^t s \iint_{\R^3\times \R^3} 
 [(|E_{\text{int}}| + |F_{\text{int}}|)f](t-s, x, v) \,dx\, dv\, ds.
\end{split}
\end{equation}
By an argument similar to that of \eqref{ineq:pao},
 we have for $s\in[0,t]$
\begin{equation*}
\begin{split}
\iint_{\R^3\times \R^3} (|E_{\text{int}}|\,f)(s, x, v) \,dx\, dv\leq \| E_{\text{int}}(s)\|_{L^{5/2}}\,
\|\rho(s)\|_{L^{5/3}} \le C.\end{split}\end{equation*}
On the other hand, Proposition \ref{prop:virial} entails that
\begin{equation*}
\begin{split}
\int_0^t s \iint_{\R^3\times \R^3} (|F_{\text{int}}|\,f)(s, x, v) \,dx\, dv\,ds\leq C\,t(1+t).\end{split}\end{equation*}
We deduce from \eqref{eq:rho22-1}-\eqref{eq:rho22-2} and the two inequalities above the estimate for $E_{2,2}$:
\begin{equation}\label{eq:rho22-3}
\|E_{2,2}(t)\|_{L^{m+3}}\leq C\,t(1+t) +
C\left\|\int_0^t s \int_{\R^3} (|E_{\text{int}} + F_{\text{int}}|\,f)(t-s, X(s,x,v), V(s,x,v)) \, dv\, ds\right\|_{L^{m+3}}.\end{equation}
Gathering \eqref{eq:rho12} and \eqref{eq:rho22-3}, we can conclude.

\end{proof}

Next sections are devoted to the control of the
right-hand side of \eqref{eqE}.


\subsection{Intermediate small time estimates for the moments}
\label{subsection:small-time}

\medskip

The purpose of this paragraph is to obtain estimates for the moments on small but uniform intervals of time in $[0,T]$.

\begin{proposition}
\label{prop:duhamel} 
Let $(f,\xi)$ be a classical solution of \eqref{syst:VP}-\eqref{syst:VP-3} on $[0,T]$.
Let $t\in [0,\min(1,T)]$ and let $m\geq 3$.

For any $\gamma \in ]0,1[$, we have
\begin{equation*}\begin{split}
\left\|\int_0 ^t\ s  \int_{\R^3}(|E_{\text{int}}+F_{\text{int}}|f)(t-s,X(s),V(s)) \,dv \, ds\right\|_{L^{m+3}}
\leq C(\gamma,m)\, t^{\delta} \, H_{k}(t)^{\frac{1}{m+3}},
\end{split}\end{equation*}
where $k$ is defined by $k+3=(m+3)(1+\gamma)$ (note that $k>m$),
 and where $0<\delta <1$  is defined by $\delta=\frac{\gamma}{1+(\gamma+1)(m+3)}$.
\end{proposition}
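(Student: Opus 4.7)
My plan is to follow the Lions--Perthame blueprint \cite{LP}, adapted to the modified backward flow $(X,V)$ associated to $E_{\text{ext}}+F_{\text{ext}}$. The key ingredients will be Minkowski in time, Hölder in $v$, a change of variable $y=X(s,x,v)$ along the flow (controlled by the Jacobian lower bound from the Appendix), and a careful matching of Hölder exponents with the density estimate of Proposition~\ref{interp-modified}.

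By Minkowski's inequality it suffices to estimate $\|G(s,\cdot)\|_{L^{m+3}}$ for each fixed $s\in[0,t]$, where
$$G(s,x):=\int_{\R^3}(|E_{\text{int}}+F_{\text{int}}|f)(t-s,X(s,x,v),V(s,x,v))\,dv.$$
For fixed $s$, I would apply Hölder in $v$ with conjugate exponents $q,q'$; using $\|f(t-s)\|_\infty=\|f_0\|_\infty$ yields
$$G(s,x)\leq \|f_0\|_\infty^{1/q}\Bigl(\int_{\R^3}|E_{\text{int}}+F_{\text{int}}|^q(t-s,X)\,dv\Bigr)^{1/q}\tilde\rho(t-s,s,x)^{1/q'}.$$
The change of variable $y=X(s,x,v)$, combined with the Appendix bound $|\det\partial_vX(s,x,v)|\geq c\,s^3$ (which comes from $\partial_vX=-sI+O(s^2)$), bounds the first factor by $Cs^{-3/q}\|E_{\text{int}}+F_{\text{int}}(t-s)\|_{L^q}$. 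I would then select $q=3/(2-\gamma)$ and $q'=3/(1+\gamma)$ so that $(m+3)/q'=(k+3)/3$; \eqref{int2-modified} with $b=k$ and $1+s\leq 2$ for $s\leq t\leq 1$ then gives $\|\tilde\rho(t-s,s,\cdot)^{1/q'}\|_{L^{m+3}}\leq CH_k(t)^{1/(m+3)}$ because $3/((k+3)q')=1/(m+3)$.

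The bulk of the remaining work is to control $\|E_{\text{int}}+F_{\text{int}}(t-s)\|_{L^q}$ for $q\in(3/2,3)$. The $E_{\text{int}}$-part is standard: since $E_{\text{int}}=\rho\ast K$ with $K(x)=\chi_R(x)\,x/|x|^3\in L^1$, Young's inequality gives $\|E_{\text{int}}\|_{L^q}\leq C\|\rho\|_{L^q}$, which is controlled by interpolation between $\|\rho\|_{L^1}\leq\mathcal{M}_0$ and $\|\rho\|_{L^{(k+3)/3}}\leq CH_k^{3/(k+3)}$. The delicate part is $F_{\text{int}}$, which fails to belong to $L^q$ precisely because $q\geq 3/2$: a naive use of a smaller exponent $q_F<3/2$ (for which $\|F_{\text{int}}\|_{L^{q_F}}\leq C_R$) produces the nonintegrable singularity $s^{1-3/q_F}$ with $1-3/q_F<-1$. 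To work around this I would split off the $F_{\text{int}}$-contribution and exploit the pointwise energy bound $|F_{\text{int}}(t-s,X(s,x,v))|\leq h(t-s,X,V)^2$, so that $|F_{\text{int}}|(X)f(X,V)$ is dominated by $h^2(X,V)f(X,V)$; after Hölder in $v$ the relevant quantity becomes a density-weighted moment of $h$, whose $L^1$ mass after integration in $x$ equals $\tilde H_{k'}(t-s)\leq H_k(t)$ for a suitable $k'\leq k$, and whose higher integrability is recovered by the same interpolation machinery.

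Finally, the $s$-integral for the $E_{\text{int}}$-contribution reduces to $\int_0^t s\cdot s^{-3/q}\,ds=\int_0^t s^{\gamma-1}\,ds=t^\gamma/\gamma$, and an analogous (slightly more delicate) estimate controls the $F_{\text{int}}$-contribution through the $h$-moment trick. Since $\delta=\gamma/(1+(\gamma+1)(m+3))<\gamma$ and $t\leq 1$ imply $t^\gamma\leq t^\delta$, the claimed bound $C(\gamma,m)\,t^\delta\,H_k(t)^{1/(m+3)}$ follows. The main obstacle I anticipate is reconciling the Coulomb singularity of $F_{\text{int}}$ with the scaling window of the Lions--Perthame argument; the pointwise energy domination $|F_{\text{int}}|\leq h^2$ is what enables one to stay inside the framework without exceeding the prescribed $H_k^{1/(m+3)}$ weight.
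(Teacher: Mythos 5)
Your treatment of the $E_{\text{int}}$ contribution matches the paper's Step 1: Minkowski in time, Hölder in $v$ with exponent $r_1=3/(2-\gamma)$, the change of variable $v\mapsto y=X(s,x,v)$ with the Appendix bound $|\det D_vX|^{-1}\le 8/s^3$, and Proposition \ref{interp-modified} to close the loop. That part is sound.

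The $F_{\text{int}}$ part, however, has a genuine gap. You propose to dominate $|F_{\text{int}}(t-s,X)|$ pointwise by $h(t-s,X,V)^2$ and then invoke "the same interpolation machinery." This does not close. First, after the change of variable $v\mapsto y=X(s,x,v)$ the velocity becomes a function $\tilde V(y)$ of $y$, so $h^2$ no longer separates into an $x$-free $L^q$ norm of a field times a density factor the way $|E_{\text{int}}|^q$ does; the interpolation of Proposition \ref{interp-modified} is formulated for weights $|v|^a$, not for $1/|X-\xi|^2$. Second, and more fundamentally, if you simply bound the inner integral by
\[
\int_{\R^3}\frac{f(t-s,X,V)}{|X-\xi(t-s)|^2}\,dv
\ \leq\ \frac{8}{s^{3}}\,\|f_0\|_{L^\infty}\int_{|y-\xi|\le 2R}\frac{dy}{|y-\xi|^2}
\ \leq\ \frac{C\,R}{s^3},
\]
then the outer integral $\int_0^t s\cdot s^{-3}\,ds$ diverges at $s=0$; and any cruder $L^\infty$-type bound has the same problem. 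To get a subcritical $s$-power one must use a Hölder exponent $q<3/2$ on the field, which reinstates exactly the non-integrable $s^{1-3/q}$ singularity you correctly flagged as the obstruction. The paper's actual resolution is a power-redistribution trick: one writes the prefactor $s$ as $s^{1+\varepsilon}\cdot(|X(s)-x|/s)^{\varepsilon}\cdot|X(s)-x|^{-\varepsilon}$, applies Hölder with $r_2=3/(2+\varepsilon/2)<3/2$, and uses the Appendix flow bound $|X(s)-x|\le s(|v|+C)$; the extra factor $|X(s)-x|^{-\varepsilon r_2}$ tempers the Riesz convolution $\int |y|^{-\varepsilon r_2}|x-\xi-y|^{-2r_2}\,dy\sim|x-\xi|^{3-\varepsilon r_2-2r_2}$, while $(|X(s)-x|/s)^{\varepsilon r_2'}$ turns into a harmless $(1+|v|^{\varepsilon r_2'})$ weight handled by Proposition \ref{interp-modified}. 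A further local/exterior split $B(0,3R_0)$ vs.\ $B(0,3R_0)^c$ and an additional Hölder in $x$ with exponent $p=2/(\varepsilon(m+3))$ produce the integrable $s^{3/r_2-2}=s^{-1+\varepsilon/2}$ and ultimately the $t^\delta H_{k_2}^{1/(m+3)}$ bound. The energy domination $|F_{\text{int}}|\le h^2$ discards precisely the spatial Riesz structure that this mechanism exploits, so the "$h$-moment trick" as sketched cannot produce the stated $t^\delta H_k^{1/(m+3)}$ estimate.
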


\begin{remark}
Taking $\gamma$ small in Proposition \ref{prop:duhamel},  we realize that $k>m$ may be chosen as close as we want to $m$, therefore the estimate $\|E(t)\|_{L^{m+3}}\leq C H_m(t)^{1/(m+3)}$, which in view of Lemma \ref{lemma:est-moments} would be enough to obtain an estimate on $H_m(t)$, is close to be achieved.
\end{remark}

\begin{proof} 

\noindent \textbf{Step 1: estimate for $E_{\text{int}}$.}

By Proposition \ref{prop:norm-star}, the field $E_{\text{int}}$ belongs to $L^\infty([0,T],L^{r_1}(\R^3))$ for
all $3/2<r_1\leq 15/4$. We can proceed as in the proof of the estimates (31)-(32) and (28')-(40) of \cite{LP} to get for all $3/2<r_1\leq 15/4$
\begin{equation}\label{ineq:adapt}
\begin{split}
\left\|\int_0 ^t s\ \int_{\R^3}(|E_{\text{int}}|f)(t-s,X(s),V(s))
\,dv\, ds \right\|_{L^{m+3}}&\leq
C(r_1, m) \,t^{2-\frac{3}{r_1}} \, M_{k_1}(t)^{\frac{1}{m+3}},
\end{split}
\end{equation}where $k_1>m$ is defined by $k_1+3=(m+3)(3-3/r_1)$,
which by \eqref{ineq:energy-velo} yields
\begin{equation}\label{estimate:step1}
\begin{split}
\left\|\int_0 ^t s\ \int_{\R^3}(|E_{\text{int}}|f)(t-s,X(s),V(s))
\,dv\, ds \right\|_{L^{m+3}}&\leq
C(r_1, m) \,t^{2-\frac{3}{r_1}} \, H_{k_1}(t)^{\frac{1}{m+3}}.
\end{split}
\end{equation}

Since (31)-(32) and (28')-(40) of \cite{LP} are not exactly stated as above, 
 we provide the full proof of estimate \eqref{ineq:adapt} here for the sake of completeness.
Let us consider the quantity
\begin{equation}\label{eq:E-int-loc}
\left\|\int_0^t  s\,\int_{\R^3} (|E_{\rm int}| f)(t-s, X(s), V(s))\,dv\,ds \right\|_{L^{m+3}}\;.
\end{equation}
By adapting the argument in \cite{LP}, we bound \eqref{eq:E-int-loc} applying H\"older's inequality with exponents $r_1$ and $r_1'$ such that $\frac{1}{r_1'}=1-\frac{1}{r_1}$ (and such that $3/2<r_1\leq 15/4$): 
\begin{equation*}
 \left\|\int_0^t  s\left(\int_{\R^3} |E_{\rm int}|^{r_1}(t-s,y)|{\rm det}(D_v X)|^{-1}dy\right)^{\frac{1}{r_1}}\left(\int_{\R^3} f(t-s, X(s), V(s))\,dv\right)^{\frac{1}{r_1'}} ds \right\|_{L^{m+3}}\|f\|_{L^{\infty}}^{\frac{r_1'-1}{r_1'}}\;,
\end{equation*} 
where we perform the change of variable $X(s,x,v)\mapsto y$, with Jacobian $|\text{det}(D_v X)|^{-1}$. Thanks to estimate \eqref{nnquatre-det} in the Appendix, we can bound the above quantity by:
\begin{equation*}
\begin{split}
 &\left\|\int_0 ^t s\,\sup_{\tau\in[0,t]} \left(\int_{\R^3} |E_{\rm int}|^{r_1}(\tau,y)\,\frac{8}{s^3}\,dy\right)^{\frac{1}{r_1}}\,\left(\int_{\R^3} f(t-s, X(s), V(s))\,dv\right)^{\frac{1}{r_1'}} \:ds\right\|_{L^{m+3}}\|f\|_{L^{\infty}}^{\frac{r_1'-1}{r_1'}}\\
& \le  C\,\sup_{\tau\in [0,t]}\|E_{\rm int}(\tau)\|_{L^{r_1}} \left\|\int_0 ^t s^{1 - 3/r_1}
  \,\left(\int_{\R^3} f(t-s, X(s), V(s))\,dv\right)^{\frac{1}{r_1'}} \:ds\right\|_{L^{m+3}}\\
& \le  C\,\sup_{\tau\in [0,t]}\|E_{\rm int}(\tau)\|_{L^{r_1}} \int_0 ^t s^{1 - 3/r_1}  \left\|
  \,\left(\int_{\R^3} f(t-s, X(s), V(s))\,dv\right)^{\frac{1}{r_1'}} \right\|_{L^{m+3}} \:ds\\
&\leq C\,t^{2-\frac{3}{r_1}}\,\sup_{\tau\in [0,t]}\|E_{\rm int}(\tau)\|_{L^{r_1}}\,
\sup_{(\tau,s)\in[0,t]^2}
\left\|\int f(\tau, X(s), V(s))\,dv  \right\|_{L^{\frac{m+3}{r_1'}}}^{\frac{1}{r_1'}}\;.
\end{split}
\end{equation*}
 Applying Proposition \ref{prop:norm-star}, we obtain (since $3/2<r_1\leq 15/4$)
$$\sup_{\tau\in [0,t]}\|E_{\rm int}(\tau)\|_{L^{r_1}}\leq C(r_1).$$
 We are left with the estimate of the second factor in the above equation. We apply the formula \eqref{int2-modified}
 in Proposition \ref{interp-modified}, observing that $t\leq 1$, where $b$ is chosen in  such
a way that $b+3=\frac{3(m+3)}{r_1'}$. Note that $b\geq 0$ because $m\geq r'_1$ since $m\geq 3$ and $r'_1\leq 3$. So we get
\begin{equation*}
\sup_{(\tau,s)\in[0,t]^2}
\left\|\int_{\R^3} f(\tau, X(s), V(s))\,dv\right\|_{L^{\frac{m+3}{r_1'}}}^{\frac{1}{r_1'}}\leq C \|f_0\|_{L^{\infty}}^{\frac{b}{r'_1\,(b+3)}}\,(1+M_{k_1}(t))^{\frac{1}{m+3}}\;,
\end{equation*}
where $k_1=b$. By using eq. \eqref{ineq:energy-velo}, we obtain the bound \eqref{estimate:step1}. We observe that (since $k_1+3=\frac{3(m+3)}{r_1'}$ and $\frac{1}{r_1'}=1-\frac{1}{r_1}$),
we have the identity $k_1=(m+3)(3-3/r_1)-3$.

\medskip

\medskip

\noindent \textbf{Step 2: estimate for $F_{\text{int}}$.}

Let
\begin{equation*}
I(x)=\int_0^t s\, \int_{\R^3}(|F_{\text{int}}| f)(t-s,X(s),V(s))\,dv\, ds.
\end{equation*}
\medskip

In the following we will write $\xi$ instead of $\xi(t-s)$ when not
misleading.

\medskip

\noindent \textbf{\textbullet\: Local estimate for $I$.}

We recall that  by Proposition \ref{prop:dist}, there exists $R_b\geq 4$ 
such that $\sup_{t\in [0, \min(1,T)]}|\xi(t)|\leq R_b.$ We set $B=B(0,3R_0)$, where $R_0 := \sup(R_b, 2R(1))$ (where
$R(T)$ is defined in the Appendix, formula \eqref{cond:R}).

\medskip

Let
$0<\e<2/(m+3)$ be a small parameter and let us pick $3/(2+\e)<r_2<3/2$. By H\"older's inequality,
 we get for all $x\in B$
\begin{equation*}
\begin{split}
&I(x)\\
&= \int_0^t s^{1+\e}\,  \ints \frac{|F_{\text{int}}|(t-s,X(s))}{ |X(s)-x|^{\e}}
\,\left(\frac{|X(s)-x|}{s}\right)^{\e} f(t-s,X(s),V(s))\,dv\, ds\\
&\leq \|f_0\|_{L^\infty}^{1-\frac{1}{r'_2}}\,\int_0^t s^{1+\e}\,\left(\ints\frac{|F_{\text{int}}|^{r_2}(t-s,X(s))}{ |X(s)-x|^{\e r_2}}\,dv  \right)^{\frac{1}{r_2}}
\left(\ints  \left(\frac{|X(s)-x|}{s}\right)^{\e r_2'}
f(t-s,X(s),V(s))\,dv\right)^{\frac{1}{r_2'}} ds\;.
\end{split}
\end{equation*}
For all fixed $x\in B$ and $0\leq s\leq \min(1,T)$ we perform the change of variable $ v \mapsto y$
with 
 $$y=x-X(s,x,v).$$ 
Then, by formula \eqref{nnndeux} in the Appendix,
$|y|\leq s\, |v| + C\,\frac{s^2}{R(1)^2} \le  s\,(|v|+C)$.
Moreover we have
$$\ints \frac{|F_{\text{int}}|^{r_2}(t-s,X(s))}{ |X(s)-x|^{\e r_2}}\,dv
=\ints \frac{|F_{\text{int}}|^{r_2}(t-s,x-y)}{ |y|^{\e r_2}}
|\text{det}(D_v X(s))|^{-1}\,dy$$
and, according to estimate (\ref{nnquatre-det}) in the Appendix,
 we have $|\text{det}(D_v X(s))|^{-1}\leq 8/s^3$.

So we obtain
\begin{equation*}
\begin{split}
&\|I\|_{L^{m+3}(B)}\\
&\leq C\left\|\int_0 ^t \ s^{1-\frac{3}{r_2}+\e}\
\left(\ints \frac{|F_{\text{int}}|^{r_2}(t-s,x-y)}{ |y|^{\e r_2}}\,dy\right)^{\frac{1}{r_2}}\left(\ints
(1+ |v|^{\e r_2'})
f(t-s,X(s),V(s))\,dv\right)^{\frac{1}{r_2'}} \, ds\right\|_{L^{m+3}(B)}\\
&\leq C\int_0^t \ s^{1-\frac{3}{r_2}+\e}J(s)^{\frac{1}{m+3}}\,ds,
\end{split}\end{equation*}
where
\begin{equation*}\begin{split}
&J(s)\\&=\int_{|x|\leq 3R_0}
\left(\int_{|x-\xi-y|\leq 2R(1)}\frac{dy}{|y|^{\e
r_2}|x-\xi-y|^{2r_2}}\right)^{\frac{m+3}{r_2}}
\left(\ints (1+ |v|^{\e r_2'})
f(t-s,X(s),V(s))\,dv\right)^{\frac{(m+3)}{r_2'}} \, dx \\
&\leq C(r_2, \e) \,\int_{|x|\leq 3R_0} \
|x-\xi|^{(\frac{3}{r_2}-2-\e)(m+3)}\left(\ints
(1+|v|^{\e r_2'})
f(t-s,X(s),V(s))\, dv\right)^{\frac{(m+3)}{r_2'}}\, dx.
\end{split}
\end{equation*}
Note that in the estimate above, we used the fact that
$$ \int_{\R^3} \frac{dy}{|y|^{\e r_2}\,|x-\xi-y|^{2r_2}} = C(r_2, \e) \, 
|x-\xi|^{3 - \e r_2 - 2r_2}. $$

We now set $$r_2=\frac{3}{2+\e/2}$$ 
and we define $$p = \frac{2}{\e\,(m+3)}.$$ 
Note that $p>1$ and 
 \begin{equation}\label{cond:r2}
 \frac{3}{2+\frac{1}{m+3}}<r_2 <\frac{3}{2}.
 \end{equation}
  Moreover
\begin{equation*}
-(\frac{3}{r_2}-2-\e)(m+3)p =\frac{\e}{2}(m+3)p = 1 <3.
\end{equation*}

Applying H\"older's inequality, we obtain
\begin{equation*}
\begin{split}
&J(s)\\&\leq C\,
\left(\int_{|x|\leq 3R_0}
|x-\xi|^{-\frac{\e}{2}(m+3)p}\,dx\right)^{1/p}
\left(\int_{|x|\leq 3R_0}  \left(\ints  (1+|v|^{\e r_2'})
f(t-s,X(s),V(s))\,dv \right)^{\frac{(m+3)}{r_2'}p'}\,dx\right)
^{1/p'} ,\end{split}\end{equation*}
therefore
\begin{equation*}
\begin{split}
\|I\|_{L^{m+3}(B)}&\leq C \, R_0^{\var}\, t^{\frac{\e}{2}}
\sup_{\tau,s\in [0,t]}\left\{\ints  \left(\ints(1+  |v|^{\e
r_2'})f(\tau,X(s),V(s))\,dv\right)^{\frac{(m+3)}{r_2'}p'}
\,dx\right\}^{\frac{1}{(m+3)p'}}.
\end{split}
\end{equation*}

We now focus on the right-hand side
\begin{equation*}
\begin{split}
\sup_{\tau,s\in[0,t]}\left\{\ints  \left(\ints (1+|v|^{\e
r_2'})f(\tau,X(s),V(s))\,dv\right)^{\frac{(m+3)}{r_2'}p'}\,dx
\right\}^{\frac{1}{(m+3)p'}}.
\end{split}
\end{equation*}

Let us introduce $k_2$ such that
\begin{equation*}
\left(\frac{3+\e r_2'}{3+k_2}\right)\left(\frac{m+3}{r_2'}p'\right)=1
\end{equation*}
and apply \eqref{estimate-moment-modified} and \eqref{int2-modified} with this choice and $a=\e r_2'$, 
$b= k_2$. Note that $b>a$ since 
$$ \frac{(m+3)}{r_2'}p' > 1 \iff (m+3) (\frac{1}{3} - \frac{\e}{6}) > 1 - \e\,\frac{m+3}{2}
\iff \frac{1}{3} > \frac{1}{m+3} - \frac{\e}{3}, $$ which holds as soon as $m>0$ (remember that 
$m\geq 3$ in this Proposition).

 We obtain
\begin{equation*}
\begin{split}
\sup_{\tau,s\in [0,t]}&
\left\{\ints \ \left(\ints \ |v|^{\e
r_2'}f(\tau,X(s),V(s))\,dv\right)^{\frac{(m+3)}{r_2'}p'}
\,dx\right\}^{\frac{1}{(m+3)p'}}\\&\leq C\sup_{\tau,s\in [0,t]}\left\{\intd \ |v|^{k_2}
f(\tau,X(s),V(s))\,dx\,dv\right\}^{\frac{1}{(m+3)p'}}.
\end{split}
\end{equation*}

Similarly, we introduce $k'_2$ such that
\begin{equation*}
\left(\frac{3}{3+k'_2}\right)\left(\frac{m+3}{r_2'}p'\right)=1,
\end{equation*}
and apply \eqref{estimate-moment-modified} and \eqref{int2-modified} with this choice and $a=0$, 
$b= k'_2>a$.
Since $k'_2<k_2$, we obtain
\begin{equation*}
\begin{split}
\sup_{\tau,s\in [0,t]}&
\left\{\ints \ \left(\ints f(\tau,X(s),V(s))\,dv\right)^{\frac{(m+3)}{r_2'}p'}
\,dx\right\}^{\frac{1}{(m+3)p'}}\\
&\leq C \sup_{\tau,s\in [0,t]}\left\{\intd \ |v|^{k'_2}
f(\tau,X(s),V(s))\,dx\,dv\right\}^{\frac{1}{(m+3)p'}}\\
&\leq C\sup_{\tau,s\in [0,t]}\left\{\intd (1+\ |v|^{k_2})
f(\tau,X(s),V(s))\,dx\,dv\right\}^{\frac{1}{(m+3)p'}}\\
&\leq C\sup_{\tau,s\in [0,t]}\left\{\intd (1+\ |V(s)|^{k_2})
f(\tau,X(s),V(s))\,dx\,dv\right\}^{\frac{1}{(m+3)p'}}\\
&\leq C+C \sup_{\tau,s\in [0,t]}\left\{\intd \ |V(s)|^{k_2}
f(\tau,X(s),V(s))\,dx\,dv\right\}^{\frac{1}{(m+3)p'}}.
\end{split}
\end{equation*}
We have used \eqref{nnntrois} together with the condition \eqref{cond:R} in the last inequalities.

Finally, using that $(X(s),V(s))$ preserves the Lebesgue's measure we obtain
\begin{equation*}
\|I\|_{L^{m+3}(B)}\leq C\, R_0^{\var}\,t^{\frac{\e}{2}}
\,H_{k_2}(t)^{\frac{1}{(m+3)p'}}.
\end{equation*}
Since $p'\geq 1$, making explicit the dependence of the constants, we get for any
$r_2$ satisfying the condition \eqref{cond:r2},
\begin{equation}
\label{estimate:I-local}
\|I\|_{L^{m+3}(B)}\leq C(r_2,m)\, R_0^{\frac{3}{2r_2} - 1}\, t^{\frac{3}{r_2}-2}
\,H_{k_2}(t)^{\frac{1}{m+3}},
\end{equation}
where  $k_2>m$ satisfies
\begin{equation}\label{def:k2}
\begin{split}
3+k_2=(3+m)
\frac{\frac{3}{r_2}-1}{1-(m+3)\left(\frac{3}{r_2}-2\right)} 
= (3+m)
\frac{1 + \e/2}{1 - (m+3)\,\e/2} . 
\end{split}
\end{equation}

\medskip

\noindent \textbf{\textbullet \: Estimate for $I$ at infinity.}

In this step we estimate the norm of $I$ on the exterior of $B=B(0,3R_0)$. Observe that when
$|x|\geq 3R_0$ and when
$|x-\xi(t-s)-y|\leq 2R(1)$, we have
$$|y|\geq |x|-|\xi(t-s)|-2R(1)\geq 2R_0 -2R(1) \geq 2R(1)\geq 1.$$  We use again the parameters $0<\e<2/(m+3)$ and
$r_2=3/(2+\e/2)$. By similar computations we find
\begin{equation*}
\begin{split}
&\|I\|_{L^{m+3}(B^c)}\\
&\leq C\left\|\int_0 ^t  s^{1-\frac{3}{r_2}+\e}\,
\left(\int_{|x-\xi-y|\leq 2R(1)}
\frac{dy}{|y|^{\e r_2}|x-\xi-y|^{2r_2}}\right)^{\frac{1}{r_2}}
\left(\ints(1+|v|^{\e r_2'})
f(t-s,X(s),V(s))\, dv\right)^{\frac{1}{r_2'}}\right\|_{L^{m+3}(B^c)} \, ds\\
&\leq C\int_0^t s^{1-\frac{3}{r_2}+\e}
\left\{\ints 
\left(\int_{|x-\xi-y|\leq 2R(1)}
\frac{dy}{|x-\xi-y|^{2r_2}}\right)^{\frac{m+3}{r_2}}
\left(\ints 
(1+|v|^{\e r_2'})
f(t-s,X(s),V(s))\,dv\right)^{\frac{(m+3)}{r_2'}}\, dx \right\}^{\frac{1}{m+3}}  
\, ds \\
&\leq
 C\, R(1)^{3/r_2 - 2}\int_0^t \ s^{1-\frac{3}{r_2}+\e}
\left\{\ints  \left(\ints (1+\ |v|^{\e
r_2'}) f(t-s,X(s),V(s))\, dv \right)^{\frac{(m+3)}{r_2'}}\, dx \right\}^{\frac{1}{m+3}}\, ds.
\end{split}
\end{equation*}
We now introduce $k_3$ such that
\begin{equation*}
\left(\frac{3+\e r_2'}{3+k_3}\right)\left(\frac{m+3}{r_2'}\right)=1.
\end{equation*}
Note that $k_3 >  \e\,r'_2$ when $m+3> r'_2$, which is 
always true if (like in our case) $m \ge 3$ (remember that $\var < \frac2{m+3}$ so that
$r' \le \frac{3}{1 - \frac1{m+3}}$).
It follows from \eqref{estimate-moment-modified} with $a= \e\,r'_2$ and 
$b = k_3$ that
\begin{equation*}
\begin{split}
\|I\|_{L^{m+3}(B^c)}\leq C\,  R(1)^{\frac{3}{r_2} - 2}\,t^{\frac{3}{r_2}-2}
\,H_{k_3}(t)^{\frac{1}{m+3}}.
\end{split}
\end{equation*}
Since $k_3<k_2$, we have $H_{k_3}\leq CH_{{k}_2}$.
Finally, making explicit the dependence of the constant:
\begin{equation}\label{estimate:I-infinity}
\begin{split}
\|I\|_{L^{m+3}(B^c)}\leq C(r_2,m)\,  R(1)^{\frac{3}{r_2} - 2}\,  t^{\frac{3}{r_2}-2}
\,H_{{k}_2}(t)^{\frac{1}{m+3}},
\end{split}
\end{equation}
with $k_2 = \frac{(3+m)\,(3/r_2 -1)}{1 - (m+3)(3/r_2 - 2)} - 3$, for all $r_2$ such that
$\frac{3}{2+ 1/(m+3)}<r_2< 3/2$.

\medskip

\noindent \textbf{Step 3: end of the proof of Proposition \ref{prop:duhamel}.}

Gathering the estimates \eqref{estimate:I-local} and \eqref{estimate:I-infinity}, we find
\begin{equation}\label{estimate:Fint}
\begin{split}
\left\| \int_0^t s \int_{\R^3} (|F_{\text{int}}|\,f) (t-s,X(s),V(s))\,dv \, ds \right\|_{L^{m+3}}
\leq  C(r_2,m) \, R_0^{\frac{3}{2r_2} - 1} \, t^{\frac{3}{r_2}-2}\, H_{k_2}(t)^{\frac{1}{m+3}},
\end{split}\end{equation}
hence (dropping the dependence w.r.t. $R_0$ and $R(1)$)
\begin{equation}\label{ineq:duhamel} \begin{split}
\left\|\int_0 ^t\ s  \int_{\R^3} (|E_{\text{int}}+F_{\text{int}}\,|f)(t-s,X(s),V(s)) \,dv\, ds \right\|_{L^{m+3}}
&\leq C(r_1,m)\, t^{2-\frac{3}{r_1}} \, H_{k_1}(t)^{\frac{1}{m+3}}\\
&+C(r_2,m)\, t^{\frac{3}{r_2}-2} \, H_{k_2}(t)^{\frac{1}{m+3}}.
\end{split}
\end{equation}
We recall that  $3/2<r_1\leq 15/4$ and $\frac{3}{2+ 1/(m+3)}<r_2< 3/2$ can be taken
 as close as necessary  to $3/2$, and that
$k_1,k_2>m$ are defined by $k_1+3=(m+3)(3-3/r_1)$ and by
$k_2+3=(m+3)(\frac{3}{r_2}-1)/(1- (m+3)\,(3/r_2-2) )$ (see \eqref{def:k2}).

We next choose $r_1$ and $r_2$ so that $k_1=k_2$ in the following way. We consider a 
small parameter $0<\gamma<1$. We define $r_1$ so that
$$2-\frac{3}{r_1}=\gamma.$$ Note that $3/2<r_1<3\leq 15/4$ by choice of $\gamma$.

We next define $r_2$ so that
$$
\frac{3}{r_2}-2=\frac{\gamma}{1+(m+3)(\gamma+1)},$$
which implies that $k_2=k_1$. Then the condition \eqref{cond:r2} on $r_2$ is satisfied.
Noticing that $k+3=(m+3)(1+\gamma)$, and using that $t\leq 1$, \eqref{ineq:duhamel} rewrites
\begin{equation*}\begin{split}
\left\|\int_0 ^t\ s\,  \int_{\R^3}(|E_{\text{int}}+F_{\text{int}}|\, f) (t-s,X(s),V(s)) \,dv\, ds \right\|_{L^{m+3}}
&\leq C(\gamma, m)\,\Big(t^\gamma+ t^{\frac{\gamma}{1+(m+3)(\gamma+1)}}\Big)\,
H_{k}(t)^{\frac{1}{m+3}}\\
&\leq C(\gamma,m)\, t^{\frac{\gamma}{1+(m+3)(\gamma+1)}} \,
H_{k}(t)^{\frac{1}{m+3}}\\
&\leq C(\gamma,m)\, t^{\delta} \,
H_{k}(t)^{\frac{1}{m+3}}.
\end{split}\end{equation*}
The conclusion follows.

\end{proof}

\begin{proposition}[Intermediate small time estimates]\label{prop:small-time}
Let $(f,\xi)$ be a classical solution of \eqref{syst:VP}-\eqref{syst:VP-3} on $[0,T]$.
For $t\leq \inf(1,T)$ and $3<m<m_0$, the following estimate holds:
\begin{equation}\label{ineq:norm-Equatre}\begin{split}
\|E(t)\|_{m+3}&
\leq C+C \left\|\int_0 ^t\ s  \int_{\R^3} (|E_{\mathrm{int}}+F_{\mathrm{int}}|\,f)(t-s,X(s),V(s))\,dv\,ds \right\|_{L^{m+3}}\\&\leq C+Ct^{\delta}
+Ct^{1+ \gamma+\delta}
H_m(t)^{\frac{3(k+3)}{(m+3)^2}}.\end{split}
\end{equation}

Here $\gamma$ is any number in $]0,1[$ such that $\gamma<\frac{m_0-m}{m+3}$ if $m\ge 6$,
 and any number in $]0,1[$ such that
\begin{equation}
\label{cond:gamma}
 \gamma\leq \frac{m-3}{6-m},\quad \gamma<\frac{m_0-m}{m+3}\end{equation} if $m<6$. 
The parameter  $k>m$ is defined by $k+3=(m+3)(1+\gamma)$, and $0<\delta<1$ by
 $$\delta=\frac{\gamma}{1+(m+3)(\gamma+1)}$$

\end{proposition}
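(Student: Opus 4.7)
The plan is to combine Proposition \ref{prop:duhamel-0} with Proposition \ref{prop:duhamel} to obtain the first inequality of the statement together with an initial control in terms of $H_k(t)^{1/(m+3)}$, and then to close the loop by estimating the higher moment $H_k(t)$ itself in terms of $H_m(t)$ via Lemma \ref{lemma:est-moments} combined with Sobolev-type bounds on the electric field.

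First, I would invoke Proposition \ref{prop:duhamel-0}. Since $t\leq 1$, the factor $(1+T)^{\sup(2,3m/(3+m))}$ appearing in that statement in fact arises, as is clear from the proof (see \eqref{eq:E1estiamte} and the $t(1+t)$ term in \eqref{eq:rho22-3}), from quantities of the form $(1+s)^{\alpha}$ with $s\leq t\leq 1$, and is therefore absorbed in the constant $C$. This gives the first inequality of the statement. Next I would apply Proposition \ref{prop:duhamel} with the chosen $\gamma\in\,]0,1[\,$, yielding
$$\|E(t)\|_{L^{m+3}}\leq C+Ct^{\delta}\,H_k(t)^{1/(m+3)},$$
with $k+3=(m+3)(1+\gamma)$ and $\delta=\gamma/(1+(m+3)(\gamma+1))$. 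The condition $\gamma<(m_0-m)/(m+3)$ guarantees $k<m_0$, so that $H_k(0)$ is finite by hypothesis (ii) of Theorem \ref{thm:main}.

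It then remains to prove $H_k(t)^{1/(m+3)}\leq C+Ct^{1+\gamma}H_m(t)^{3(k+3)/(m+3)^2}$. For this, Lemma \ref{lemma:est-moments}, Proposition \ref{prop:virial} (which for $t\leq 1$ gives $\int_0^t |E(s,\xi(s))|\,ds\leq C$) and the finiteness of $H_k(0)$ together yield
$$H_k(t)\leq C+C\Bigl(\int_0^t \|E(s)\|_{L^{k+3}}\,ds\Bigr)^{k+3}\leq C+Ct^{k+3}\sup_{s\in[0,t]}\|E(s)\|_{L^{k+3}}^{k+3}.$$
The key step is then the pointwise-in-time bound $\|E(s)\|_{L^{k+3}}\leq CH_m(s)^{3/(m+3)}$: when $m>6$ one combines Proposition \ref{sobo} (with exponent $(m+3)/3>3$) and Proposition \ref{interp} to obtain $\|E\|_{L^\infty}\leq CH_m^{3/(m+3)}$, then interpolates with the $L^{5/3}$ bound from Proposition \ref{prop:norm-star}; when $3<m<6$ the condition $\gamma\leq(m-3)/(6-m)$ from \eqref{cond:gamma} is precisely equivalent to $k+3\leq 3(m+3)/(6-m)$, and the Sobolev bound $\|E\|_{L^{3(m+3)/(6-m)}}\leq CH_m^{3/(m+3)}$ (from Propositions \ref{sobo} and \ref{interp}) dominates the $L^{k+3}$ norm after interpolation with $L^{5/3}$, using $H_m\geq 1$ to absorb any smaller exponent. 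Inserting this pointwise bound, raising to the $1/(m+3)$ power and using $(k+3)/(m+3)=1+\gamma$ gives the desired estimate, which upon substitution in the inequality from Proposition \ref{prop:duhamel} completes the proof.

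The main obstacle is the calibration of $\gamma$: on the one hand $\gamma$ controls $k$, which must stay below $m_0$ so that $H_k(0)$ is finite, and below the threshold coming from the Sobolev embedding when $m<6$ so that $\|E(s)\|_{L^{k+3}}$ can be controlled by $H_m(s)^{3/(m+3)}$; on the other hand $\gamma$ simultaneously fixes the time exponent $\delta$ of Proposition \ref{prop:duhamel} and the exponent $3(k+3)/(m+3)^2$ of $H_m$ in the final bound. Once these constraints are reconciled as in \eqref{cond:gamma}, every remaining step is a direct consequence of the a priori ingredients already developed in the section.
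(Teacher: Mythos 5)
Your proposal is correct and follows essentially the same route as the paper's own proof: invoke Proposition~\ref{prop:duhamel-0} and Proposition~\ref{prop:duhamel} to bound $\|E(t)\|_{L^{m+3}}$ by $C+Ct^{\delta}H_k(t)^{1/(m+3)}$, then close the loop by estimating $H_k(t)$ via Lemma~\ref{lemma:est-moments} and Proposition~\ref{prop:virial}, and bound $\|E(s)\|_{L^{k+3}}$ by $CH_m(s)^{3/(m+3)}$ using the Sobolev and interpolation estimates (Propositions~\ref{interp},~\ref{sobo} and the constraint $k+3\le 3(m+3)/(6-m)$ coming from~\eqref{cond:gamma}). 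The only small imprecision is the reference to ``the $L^{5/3}$ bound from Proposition~\ref{prop:norm-star}'': that proposition gives $\|\rho\|_{L^{5/3}}\le C$ for $\rho$ and $\|E\|_{L^r}\le C$ for $3/2<r\le 15/4$, and one should interpolate $E$ between $L^r$ (any such $r$) and $L^{3(m+3)/(6-m)}$ (or $L^\infty$ when $m>6$); this is immaterial to the argument.
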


\begin{remark}
We stress that the constants depend on $k$, or equivalently, on $\gamma$
 (in fact some of them blow up
when $k\to m$).
\end{remark}
\begin{proof}
Thanks to Propositions \ref{prop:duhamel-0} and \ref{prop:duhamel}, we obtain (remember that $t<1$)
\begin{equation}\label{ineq:o1}\begin{split}
\|E(t)\|_{L^{m+3}}&\leq C + 
C \left\|\int_0 ^t\ s\,
  \int_{\R^3} (|E_{\text{int}}+F_{\text{int}}|\,f)(t-s,X(s),V(s))\,dv \, ds \right\|_{L^{m+3}}\\
&\leq C+ C t^{\delta}  \,  H_{k}(t)^{\frac{1}{m+3}},\end{split}
\end{equation}
[with $k>m$ such that $k+3=(m+3)(1+\gamma)$ and 
$\delta=\frac{\gamma}{1+(m+3)(\gamma+1)}$, and for all $0<\gamma<1$].

On the other hand, we infer from Lemma \ref{lemma:est-moments} and from Proposition \ref{prop:virial} that
\begin{equation*}
H_k(t)^{\frac{1}{m+3}}\leq C\left(H_k(0)^{\frac{1}{m+3}}+
t^{\frac{k+3}{m+3}}\sup_{s\in
[0,t]}\|E(s)\|_{L^{k+3}}^{\frac{k+3}{m+3}}+1\right).
\end{equation*}
Therefore, since $k<m_0$ because of the second condition on $\gamma$, we get
\begin{equation}\label{ineq:o2}
H_k(t)^{\frac{1}{m+3}}\leq C\left(1+ t^{\frac{k+3}{m+3}}\sup_{s\in
[0,t]}\|E(s)\|_{L^{k+3}}^{\frac{k+3}{m+3}}\right).
\end{equation}

Next,  combining Propositions \ref{interp} and \ref{sobo} we have
$\|E(s)\|_{L^{3(m+3)/(6-m)}}\leq CH_m(s)^{\frac{3}{m+3}}$ if $m<6$, and 
$\|E(s)\|_{L^{\infty}} \leq CH_m(s)^{\frac{3}{m+3}}$ if $m>6$.
Observing that $k+3\leq 3(m+3)/(6-m)$ by the first condition of \eqref{cond:gamma},
this yields
\begin{equation}\label{ineq:interpol}
\|E(s)\|_{L^{k+3}}\leq C H_m(s)^{\frac{3}{3+m}}.
\end{equation}

Therefore, we infer from \eqref{ineq:o1}, \eqref{ineq:o2} and
\eqref{ineq:interpol}, that
\begin{equation}\label{ineq:norm-Eseconde}\begin{split}
\|E(t)\|_{m+3}&\leq C+ C\left\|\int_0 ^t\ s  \int_{\R^3}(|E_{\text{int}}+F_{\text{int}}|\,f)(t-s,X(s),V(s))\,dv ds \right\|_{L^{m+3}} \\
&\leq C+Ct^{\delta}
+Ct^{1+ \gamma+\delta}
H_m(t)^{\frac{3(k+3)}{(m+3)^2}}.\end{split}
\end{equation}

This completes the
Proof of Proposition \ref{prop:small-time}.

\end{proof}


\subsection{Bound on the moments}
\label{subsection:large-time}

This paragraph is devoted to the proof of the propagation of the moments, formulated in the following 
\begin{proposition}
\label{prop:global-estimate}Let $(f,\xi)$ be a classical solution of \eqref{syst:VP}-\eqref{syst:VP-3} on $[0,T]$.
Then we have for all $0\leq m<\min(m_0,7)$
\begin{equation} \label{estipol}
H_m(T)\leq C(1+T)^{c},
\end{equation}
where $C$ only depends on the quantities $\mathcal{H}(0)$, $\mathcal{M}(0)$, 
$\|f_0\|_{\infty}$, $\xi_0$, $m_0$, $H_m(0)$ for $m<m_0$; and $c$ only depends on $m$.
\end{proposition}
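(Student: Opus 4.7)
The strategy is to iterate the short-time estimate of Proposition~\ref{prop:small-time} over a partition of $[0,T]$ and close a polynomial bootstrap. The plan proceeds in three stages.

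\emph{Stage 1: Time-translated short-time estimate.} The derivation in Subsections~\ref{subsec:general}--\ref{subsection:small-time} is invariant under time translation, modulo replacing $(f_0,\xi_0,\eta_0)$ by $(f(t_0),\xi(t_0),\eta(t_0))$: the conserved quantities $\mathcal{H}$, $\mathcal{M}$, $\|f_0\|_\infty$ are unchanged, while $|\eta(t_0)|$ is uniformly bounded and $|\xi(t_0)|$ together with the cutoff radius $R(1+t_0)$ only grow polynomially by Proposition~\ref{prop:dist}. Restarting the entire argument at any $t_0 \in [0,T-1]$ yields, for $s \in [0,1]$,
$$
\|E(t_0+s)\|_{L^{m+3}} \leq C(1+t_0)^A \Bigl(1 + s^{1+\gamma+\delta}\,[\textstyle\sup_{[t_0,t_0+s]}H_m]^{3(1+\gamma)/(m+3)}\Bigr),
$$
and analogously $\int_{t_0}^{t_0+s}|E(\sigma,\xi(\sigma))|\,d\sigma \leq C(1+t_0+s)$ via a restarted version of the virial argument of Proposition~\ref{prop:virial}.

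\emph{Stage 2: Polynomial bootstrap.} Set $\Phi(t) = K(1+t)^c$ with $K, c$ to be determined, and let $t^\ast$ be the largest time in $[0,T]$ on which $H_{m'}(s) \leq \Phi(s)$ holds for every $s \in [0,t^\ast]$ and every $m' < \min(m_0,7)$. For each $t \in (0,t^\ast]$ I apply the restarted estimate on $[t-\tau,t]$ with $\tau = \min(1,(1+t)^{-\beta})$, choosing $\beta$ large enough (depending on $c$, $m$, $\gamma$, $\delta$) that the self-referential term $\tau^{1+\gamma+\delta}[\sup H_m]^{3(1+\gamma)/(m+3)}$ is dominated by $1$ under the ansatz. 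This yields a uniform pointwise bound $\|E(t)\|_{L^{m+3}} \leq C(1+t)^{A'}$ with an explicit exponent $A' = A'(c,m,\gamma)$. Combining with Lemma~\ref{lemma:est-moments} and Proposition~\ref{prop:virial} gives
$$
H_m(t^\ast) \leq C\Bigl\{H_m(0) + \Bigl(\int_0^{t^\ast}\bigl[\|E\|_{L^{m+3}}+|E(\sigma,\xi)|\bigr]\,d\sigma\Bigr)^{m+3}\Bigr\} \leq C(1+t^\ast)^{(A'+1)(m+3)}.
$$
The bootstrap closes if $c,K$ can be chosen so that $(A'(c)+1)(m+3) < c$ and all prefactors are absorbed into $K$; continuity of $H_m$ and maximality of $t^\ast$ then force $t^\ast = T$, yielding~\eqref{estipol}.

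\emph{Main obstacle.} The crux is showing that the implicit fixed-point inequality $c \geq (A'(c)+1)(m+3)$ admits a solution for every $m < \min(m_0,7)$. The exponent $A'(c)$ is linear in $c$ with a slope governed by the interplay between $3(1+\gamma)/(m+3)$ and the dependence of the restart constants on the higher moments $H_{m'}(t_0)$ for $m' < m_0$. The constraints on $\gamma$---namely $\gamma < (m_0-m)/(m+3)$ together with \eqref{cond:gamma}---must therefore be compatible with the slope of $A'(c)$ being strictly less than $1/(m+3)$, and this is precisely what the bound $m<7$ secures. A secondary but routine point is that the ansatz must be maintained simultaneously for every moment $m' \in [0,m)$, introducing additional but straightforward bookkeeping. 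The recursive accumulation of exponents through the closure explains the large but explicit polynomial degree $c$, consistent with the $10^2$--$10^3$ range mentioned in the introduction.
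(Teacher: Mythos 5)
Your proposal replaces the paper's mechanism with a restart-and-bootstrap strategy, but there is a structural obstruction that the paper's argument is specifically designed to avoid, and which your argument does not resolve.

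When you restart the Duhamel formula at $t_0 = t-\tau$, the free-stream contribution becomes $\rho_1(t,x) = \int f(t_0, X^{(t_0)}(t-t_0,x,v), V^{(t_0)}(t-t_0,x,v))\,dv$. By Proposition~\ref{sobo-modified} this gives $\|E_1(t)\|_{L^{m+3}} \lesssim (1+\tau)^{3m/(3+m)}\, H_m(t_0)^{3/(m+3)}$. In your Stage~1 display this dependence on $H_m(t_0)$ is missing: you retain only a factor $C(1+t_0)^A$ coming from $|\xi(t_0)|$, $|\eta(t_0)|$ and $R(1+t_0)$, but the constant in Proposition~\ref{prop:duhamel-0} also depends on $H_m(0)$, which after restart becomes $H_m(t_0)$. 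Under the ansatz $H_m(s) \leq K(1+s)^c$, this forces $\|E(t)\|_{L^{m+3}} \gtrsim K^{3/(m+3)}(1+t)^{3c/(m+3)}$ already from the free-stream part alone, \emph{no matter how small $\tau$ is chosen}: shrinking $\tau$ kills the source term but not the $H_m(t_0)^{3/(m+3)}$ factor. Feeding this into the differential inequality $\frac{d}{dt}\tilde H_m \lesssim (\|E\| + |E(\xi)|)H_m^{(m+2)/(m+3)}$, or equivalently into \eqref{evol_mom_integ}, yields at best $H_m(t^\ast) \lesssim 1 + K^3(1+t^\ast)^{3c+m+3}$, and the fixed-point inequality $3c+m+3 \leq c$ (equivalently, slope $3/(m+3) < 1/(m+3)$) has no solution. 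So the bootstrap cannot close, and this failure is independent of the choice of $\gamma$, $\alpha$, or the condition $m<7$.

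The paper sidesteps this precisely by \emph{never} restarting the Duhamel formula: the flow $(X,V)$ and the free-stream density $\rho_1$ are always anchored at time $0$, so $\|E_1\|$ is controlled by the fixed constant $H_m(0)^{3/(m+3)}$. The only splitting is in the \emph{time integral of the source term}, $\int_0^t = \int_0^{t_0} + \int_{t_0}^t$, with two different estimates. The $[t_0,t]$ part is estimated crudely in terms of $H_m(t)$ with exponent $\frac1{m+3} - \frac{4\alpha}{(3-2\alpha)(m-2)} < \frac1{m+3}$ (at the price of a $t_0^{-4\alpha/(3-2\alpha)}$ factor); the $[0,t_0]$ part uses Proposition~\ref{prop:small-time} with exponent $\frac{3(k+3)}{(m+3)^2}$ and a compensating factor $t_0^{1+\gamma+\delta}$. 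Then $t_0$ is chosen \emph{as a function of $H_m(t)$} (formula \eqref{def:t}), not as an independent mesh size, so that both terms combine to an exponent strictly below $\frac1{m+3}$. The role of $m<7$ is also different from what you describe: it guarantees $\frac{2}{m+3} - \frac{1}{m-2} < 0$, which is what makes $e(m)$ negative for small $\gamma$; there is no continuation argument, no $t^\ast$, and no self-consistency condition on the polynomial degree. The polynomial bound \eqref{estipol} then comes out of a single Gronwall-type integration of $\frac{d}{dt}\tilde H_m \lesssim (1+T)^{21/10}H_m^{1-a} + |E(t,\xi(t))|H_m^{(m+2)/(m+3)}$ together with Proposition~\ref{prop:virial}.
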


\begin{remark}\label{rk:explicit-const}
The constant $c$ can be estimated in terms of $m$. Indeed, it is possible to take in (\ref{estipol})
any real number stricly bigger than
$$ c_0(m) := \frac{31}{10}\, \min_{\gamma \in ]0,1[} \max \bigg\{ \frac{m+3}{\gamma} , \frac52\,
\left[ \frac1{m-2} - \frac{2+4\,\gamma}{(m+3)\,(1 + \gamma + \gamma/(1 + (\gamma + 1)\,(m+3) ))} 
\right]^{-1}  \bigg\}. $$
One can check that $c_0(6)$ is of order $500$, while $c_0(m) \to \infty$ when $m \to 7$.

\end{remark}

\begin{proof}

\medskip

{We begin with the case $\frac{16}{3}< m <\min(m_0,7)$}.

Let $t\in[0,T]$. In view of Lemma \ref{lemma:est-moments} and Proposition \ref{prop:duhamel-0} it is enough to control the quantity 
\begin{equation*}
\left\|\int_0 ^t\ s
\int_{\R^3} (|E_{\text{int}}+F_{\text{int}}|\,f) (t-s,X(s),V(s))\,dv \, ds\right\|_{L^{m+3}}
\end{equation*}
in terms of $H_m^{1/(m+3)}$.
Unfortunately, the bound obtained in Proposition \ref{prop:small-time} does not allow to conclude, since it provides an exponent $3(k+3)/(m+3)^2$, which is much too large. In order to bypass this difficulty, we shall use, as in \cite{LP}, two kinds of estimates: for small times we will use the estimate of Proposition \ref{prop:small-time}; note indeed that the right-hand side is small when $t$ is small.  On the other hand for large times we will perform other estimates.

More precisely, let $0<t_0<\inf(1,T)$  be sufficiently small, to be
determined later on. Let $\alpha \in ]0,1/4[$.

\medskip

\textbf{First case: $t\in [t_0,T].$}

We have thanks to \eqref{nnquatre-det},
\begin{equation*}\begin{split}
& \left\|
 \int_{t_0}^t s \ints (|F_{\text{int}}|\,f)(t-s,X(s),V(s))\, dv \,ds
\right\|_{L^{m+3}} \\
& \le \left\| \int_{t_0}^t s \bigg\{ \ints |F_{\text{int}}|^{3/2 - \alpha}(t-s,X(s)) \,
dv \bigg\}^{\frac2{3-2\alpha}}
 \bigg\{ \ints f^{\frac{3 - 2\,\alpha}{1 -
2\,\alpha}}(t-s,X(s),V(s)) \, dv \bigg\}^{\frac{1 - 2\,\alpha}{3 - 2\,\alpha}} \, ds
\right\|_{L^{m+3}} \\
&\leq
C \int_{t_0}^t s \bigg\{ \ints \chi_R(y)\frac{1}{|y|^{3 -2 \alpha}} \,
\frac{dy}{s^3} \bigg\}^{\frac2{3-2\alpha}}   \,ds\,
  \sup_{\tau,\tau'\in[0,t]} \left\| \ints f^{\frac{3 - 2\,\alpha}{1 -
2\,\alpha}}(\tau,X(\tau'),V(\tau')) \, dv  \right\|_{L^{(m+3)\,\frac{1-2\,\alpha}{3 -
2\,\alpha}}}
 ^{\frac{1 - 2\,\alpha}{3 - 2\,\alpha}}  \\
&\leq
 C \, R(T)^{\frac{4\alpha}{3 - 2\alpha}} \,\|f_0\|_{L^{\infty}}^{\frac2{3-2\alpha}}\int_{t_0}^t s^{1 -
\frac{6}{3-2\,\alpha}} 
\,ds\,
\sup_{\tau,\tau'\in[0,t]} \left\| \ints f(\tau,X(\tau'),V(\tau')) \, dv
\right\|_{L^{(m+3)\,\frac{1-2\,\alpha}{3 - 2\,\alpha}}}^{\frac{1 -
2\,\alpha}{3 - 2\,\alpha}}.
\end{split}\end{equation*}

We now use Proposition \ref{interp-modified}, more precisely estimate \eqref{estimate-moment-modified},
 with $a=0$ and $b$ such that
$\frac{b+3}3=(m+3)\, \frac{1-2\,\alpha}{3 - 2\,\alpha}$. Note that
$b>0$ since $\alpha \in ]0,1/4[$ and $m>2$. We obtain
\begin{equation*}\begin{split}
 \left\|
 \int_{t_0}^t s \ints (|F_{\text{int}}|\,f)(t-s,X(s),V(s))\, dv \,ds
\right\|_{L^{m+3}} 
&\leq
 C  \, R(T)^{\frac{4\alpha}{3 - 2\alpha}} \, t_0^{-
\frac{4\alpha}{3-2\alpha}}
\, H_{\frac{m - \alpha\,(2m+4)}{1 - \frac23\,\alpha}}(t)^{\frac1{m+3}}.
\end{split}
\end{equation*}

We now apply the interpolation inequality
\begin{equation}\label{eq:energy-interpolation}
 H_{\beta} (t)\leq H_2(t)^{\frac{m-\beta}{m-2}} H_m(t)^{\frac{\beta -
2}{m-2}},\quad \beta \in[2,m],
 \end{equation}
 with the choice $\beta=\frac{m - \alpha\,(2m+4)}{1 - \frac23\,\alpha}$.
 Note that $\beta \in [2,m]$ since {$m > 16/3$} and $\alpha < 1/4$.
Since $\sup_{t \in [0,T]} H_2(t) \leq C$ thanks to the conservation
of energy, this yields
\begin{equation}
\label{ineq:close}
\begin{split}
 \left\|
 \int_{t_0}^t s \ints (|F_{\text{int}}|\,f)(t-s,X(s),V(s))\, dv \,ds
\right\|_{L^{m+3}} \leq
 C \,R(T)^{\frac{4\alpha}{3 - 2\alpha}} \, t_0^{- \frac{4\alpha}{3-2\alpha}}
\, H_{m}(t)^{\frac1{m+3}-\frac{4\alpha}{(3-2\alpha)(m-2)}}.
\end{split}\end{equation}

\medskip


We obtain an analogous estimate for the internal part of the electric field. 
Since $\rho$ belongs to $L^\infty([0,T],L^{5/3}(\R^3))$,
 the internal part $E_{\text{int}}$ is bounded in $L^\infty([0,T],L^{3/2-\alpha}(\R^3))$ for all $0< \alpha<1/4$ (thanks to a direct convolution inequality). So by exactly the same computations as before we find
\begin{equation}\label{ineq:close-2}
\begin{split}
&\left \|\int_{t_0}^{t} s \ints  (|E_{\text{int}}|\,f)(t-s,X(s),V(s))\,dv\,ds\right\|_{L^{m+3}}\\
&\leq\int_{t_0}^t  s^{1-\frac{6}{(3-2\alpha)}}\,ds
\sup_{\tau,\tau'\in[0,t]}
\left(\|E_{\text{int}}(\tau)\|_{3/2-\alpha}\left\|\ints  f(\tau,X(\tau'),V(\tau'))\,dv\right\|_{L^{(m+3)\left(\frac{1-2\alpha}{3-2\alpha}\right)}}^{\frac{1-2\alpha}{3-2\alpha}}\|f_0\|_{L^\infty}^{\frac{2}{3-2\alpha}}\right)\\
&\leq C \, R(T)^{\frac{4\alpha}{3 - 2\alpha}} \, t_0^{- \frac{4\alpha}{3-2\alpha}}
\, H_{m}(t)^{\frac1{m+3}-\frac{4\alpha}{(3-2\alpha)(m-2)}}.
\end{split}
\end{equation}

Combining \eqref{ineq:close} and \eqref{ineq:close-2},
 we are led to (for all 
$\alpha \in ]0, 1/4[$ and $m\in ]16/3, \min(m_0,7)[$),
\begin{equation}
\label{ineq:totale}
\begin{split}
& \left\|
 \int_{t_0}^t s \ints (|E_{\text{int}}+F_{\text{int}}|\,f )(t-s,X(s),V(s)) \, dv \,ds \right\|_{L^{m+3}}\\
&\quad\leq  C  \,  R(T)^{\frac{4\alpha}{3 - 2\alpha}}\, t_0^{- \frac{4\alpha}{3-2\alpha}}
\, H_{m}(t)^{\frac1{m+3}-\frac{4\alpha}{(3-2\alpha)(m-2)}}.
 \end{split}\end{equation}

\medskip

\textbf{Second case: $t\in [0,t_0]$.}
\medskip

By Proposition \ref{prop:small-time}, we have
 \begin{equation}\label{ihp}
 \begin{split}
& \left\|
 \int_{0}^t s \ints (|E_{\text{int}}+F_{\text{int}}|\,f)(t-s,X(s),V(s)) \, dv \,ds \right\|_{L^{m+3}}\\
&\quad \quad
\leq C\, \left(1 + t^{\delta}
+t^{1+ \gamma+\delta}
H_m(t)^{\frac{3(k+3)}{(m+3)^2}}\right),\end{split}
\end{equation}
 where $k+3=(m+3)(1+\gamma)$ and where
$\delta=\frac{\gamma}{1+(m+3)(\gamma+1)}$, with any
$\gamma \in ]0,1[$ such that \eqref{cond:gamma} is satisfied if $m<6$ and such that $\gamma < \frac{m_0-m}{m+3 }$ if $m\geq 6$. In particular,  
 \begin{equation}\label{ineq:totale-2}\begin{split}
& \left\|
 \int_{0}^{t_0} s \ints (|E_{\text{int}}+F_{\text{int}}|\,f)(t-s,X(s),V(s)) \, dv \,ds \right\|_{L^{m+3}}\\
&\quad \quad
\leq C\, \left(1 + t_0^{\delta}
+t_0^{1+ \gamma+\delta}
H_m(t_0)^{\frac{3(k+3)}{(m+3)^2}}\right)\;.\end{split}
\end{equation}

\medskip

We deduce from \eqref{ineq:totale}, \eqref{ineq:totale-2} and Proposition \ref{prop:duhamel-0}
  that if $t\in[t_0,T]$, 
\begin{equation*}
\begin{split}
 \|E(t)&\|_{L^{m+3}}\leq  
C\,(1+T)^{21/10}  + C \left\|\int_0 ^{t_0}\ s
\int_{\R^3} (|E_{\text{int}}+F_{\text{int}}|\,f) (t-s,X(s),V(s))\,dv \, ds\right\|_{L^{m+3}}\\
&+C \left\|\int_{t_0} ^t\ s
\int_{\R^3} (|E_{\text{int}}+F_{\text{int}}|\,f) (t-s,X(s),V(s))\,dv \, ds\right\|_{L^{m+3}}\\
&\leq C\,(1+T)^{21/10}+C\,t_0^\delta+C\,t_0^{1+ \gamma+\delta}
H_m(t_0)^{\frac{3(k+3)}{(m+3)^2}}
+C\,  R(T)^{\frac{4\alpha}{3 - 2\alpha}} \,t_0^{- \frac{4\alpha}{3-2\alpha}}
\, H_{m}(t)^{\frac{1}{m+3}-\frac{4\alpha}{(3-2\alpha)(m-2)}}\\
&\leq C\,(1+T)^{21/10}+C\, t_0^\delta+Ct_0^{1+ \gamma+\delta}
H_m(t)^{\frac{3(k+3)}{(m+3)^2}}+
  C   \,  R(T)^{\frac{4\alpha}{3 - 2\alpha}} \,t_0^{- \frac{4\alpha}{3-2\alpha}}
\, H_{m}(t)^{\frac{1}{m+3}-\frac{4\alpha}{(3-2\alpha)(m-2)}}.
 \end{split}\end{equation*}

On the other hand, if $t\in[0,t_0]$, \eqref{ihp} yields 
\begin{equation*}
\begin{split}
 \|E(t)\|_{L^{m+3}}&\leq  
C\,(1+T)^{21/10}  + C \left\|\int_0 ^{t}\ s
\int_{\R^3} (|E_{\text{int}}+F_{\text{int}}|\,f)(t-s,X(s),V(s))\,dv \, ds\right\|_{L^{m+3}}\\
&\leq C\,(1+T)^{21/10}+C\, t_0^\delta+Ct_0^{1+ \gamma+\delta}
H_m(t)^{\frac{3(k+3)}{(m+3)^2}}\;.
 \end{split}\end{equation*}
 
 In all cases, for $t\in[0,T]$, we have 
 \begin{equation}\label{ineq:ihp}
\begin{split}
 \|E(t)\|_{L^{m+3}}&  \leq C\,(1+T)^{21/10}+C\, t_0^\delta+Ct_0^{1+ \gamma+\delta}
H_m(t)^{\frac{3(k+3)}{(m+3)^2}}\\
&+
  C   \,  R(T)^{\frac{4\alpha}{3 - 2\alpha}} \,t_0^{- \frac{4\alpha}{3-2\alpha}}
\, H_{m}(t)^{\frac{1}{m+3}-\frac{4\alpha}{(3-2\alpha)(m-2)}}.
 \end{split}\end{equation}

We then specify our choice for $t_0$: we set (for example)
$$ t_0^{ 1 +\gamma+\delta }= H_m(t)^{-  \frac{3(k+3)-(m+3)(1-\gamma)}{(m+3)^2} }$$
hence
 \begin{equation}
 \label{def:t}
 t_0=H_m(t)^{-\frac{3(k+3)-(m+3)(1-\gamma)}{(1+\gamma+\delta)(m+3)^2}} \le 1\,. 
\end{equation}

It follows that
\begin{equation*}
t_0^{- \frac{4\alpha}{3-2\alpha}}
\, H_{m}(t)^{-\frac{4\alpha}{(3-2\alpha)(m-2)}}
=H_m(t)^{\frac{4\alpha}{3-2\alpha}\,e(m)},
\end{equation*}
where we denote by $e(m)$ the term appearing in the exponent above,
$$
e(m)=\frac{3(k+3)-(m+3)(1-\gamma)}{(1+\delta+\gamma)(m+3)^2}-\frac{1}{m-2}=
\frac{2+4\gamma}{(1+\delta+\gamma)(m+3)}-\frac{1}{m-2}.
$$
We recall that $0<\gamma<1$ is a parameter that can be chosen as small as wanted such that \eqref{cond:gamma} is satisfied if $m<6$ and such that $\gamma < \frac{m_0-m}{m+3 }$ if $m\geq 6$, 
and that $\delta=\gamma/(1+(\gamma+1)(m+3))$.

\medskip

We now use the assumption $m<7$.  Then 
$$\frac{2}{m+3}-\frac{1}{m-2}<0,$$
and  we can choose $\gamma>0$ sufficiently small, so that $e(m)< 0$ and condition \eqref{cond:gamma} is satisfied.

\medskip

Next, invoking Lemma \ref{lemma:est-moments} we have
\begin{equation*}
\label{ineq:totale-3}
\frac{d}{dt} \tilde{H}_m(t) \leq C\, \big(\|E(t)\|_{L^{m+3}}H_m^{-\frac{1}{m+3}}+|E(t,\xi(t))|H_m^{-\frac{1}{m+3}}\big)\, H_m(t)\;.
\end{equation*}
Therefore using \eqref{ineq:ihp} and our choice \eqref{def:t} of $t_0$, we obtain, since $4\alpha/(3-2\alpha)<21/10$ for $\alpha<1/4$,
\begin{equation*}
\begin{split}
\label{ineq:totale-4}
\frac{d}{dt} \tilde{H}_m(t) 
&\leq C\, \Big( (1+T)^{\frac{21}{10}}\,H_m^{-\frac{1}{m+3}} + H_m^{-\frac{\gamma}{m+3}}+ R(T)^{\frac{4\alpha}{3-2\alpha}}\,H_m^{\frac{4\alpha}{3-2\alpha}\,e(m)} + |E(t,\xi(t))|\,H_m^{-\frac{1}{m+3}}\Big)\, H_m(t)\\
&\leq C\Big( (1+T)^{\frac{21}{10}}\,H_m^{-\mbox{min}\{\frac{\gamma}{m+3}\,,\,\frac{4\alpha}{3-2\alpha}|e(m)|\}} + |E(t,\xi(t))|\,H_m^{-\frac{1}{m+3}}\Big)\,H_m(t).
\end{split}
\end{equation*}
Setting $$a=\mbox{min}\left\{\frac{\gamma}{m+3}\,,\,\frac{4\alpha}{3-2\alpha}|e(m)|\right\}\in(0,1),$$ we then have for  all $t\in[0,T]$
\begin{equation*}\begin{split}
 \frac{d}{dt} \tilde{H}_m &\leq C\Big((1+T)^{\frac{21}{10}}+|E(t,\xi(t))| \Big)\, H_m(t)^{1-a}\;.
\end{split}\end{equation*}
\medskip
Since $H_m(0)$ is bounded, by a Gronwall argument using Proposition \ref{prop:virial}, we conclude the Proof of Proposition \ref{prop:global-estimate} in the case $\frac{16}{3}<m<\min(m_0,7)$ by choosing $c=a^{-1}(\frac{21}{10}+1)$.

\medskip

Finally, by interpolation, we have for all $m\leq \frac{16}{3}$
\begin{equation*}
H_m(T)\leq C(1+H_{m'}(T))
\end{equation*}
for some $m'\in(\frac{16}{3},\min(m_0,7))$, and we conclude thanks to the previous step.

\end{proof}

\subsection{Proof of Theorem \ref{thm:main}}
\label{sec:passing-limit}
\newcommand{\eps}{\varepsilon}

We complete here the proof of Theorem \ref{thm:main}. Let $T>0$. We first consider a sequence of mollifiers $f_{0,\varepsilon}$ of $f_0$ that moreover vanish in a small $\varepsilon$-neighborhood of $\xi_0$. We then consider the unique  classical solution $(f_\varepsilon(t),\xi_\varepsilon(t))$ of \eqref{syst:VP}-\eqref{syst:VP-2} on $[0,T]$ provided by \cite{MMP}. We infer from the uniform estimates derived in Proposition \ref{prop:global-estimate} for moments of order larger than $6$ and from Proposition \ref{sobo-modified} (\eqref{ineq:bound-field-modified} with $s=0$ and $\tau=t$) that $(E_\eps)_{0<\eps<1}$ is uniformly bounded in $L^\infty([0,T]\times \R^3)$. By Ascoli's theorem there exists a subsequence such that $((\xi_{\varepsilon_n},\eta_{\varepsilon_n}))_{n\in\mathbb{N}}$ converges to some $(\xi,\eta)\in C^1([0,T])$ as $n\to \infty$, where 
$\dot{\xi}(t)=\eta(t)$.

On the other hand, by usual compactness arguments using the bounds for suitable norms of $f_{\varepsilon}$ and $\partial_t \rho_{\varepsilon}$ and on the moments, we can pass to the limit to obtain a weak solution $f$ to \eqref{syst:VP}, with $f\in C(\R_+,L^p(\R^3\times \R^3))\cap L^\infty(\R_+\times \R^3\times \R^3)$, and with the same polynomial bounds on the moments. In particular (extracting again if necessary), $(E_{\eps_n})_{n\in\mathbb{N}}$ converges to $E=\rho\ast (x \mapsto x/|x|^3)$ uniformly on $[0,T]\times \R^3$. Therefore we finally obtain $\dot{\eta}(t)=E(t,\xi(t)), t\in[0,T]$, and we conclude that $(f,\xi,\eta)$ satisfies all the conclusions of Theorem \ref{thm:main}.


\section{Appendix: estimates for the flow}

In the following, for a given map $F:(x,v)\in \R^3\times \R^3\mapsto F(x,v)\in \R^3$,
 we denote by $(D_x F)(x,v)$, $(D_v F)(x,v)$, the Jacobian matrices (belonging to $\mathcal{M}_3(\R)$) with respect to the variables $x$ and $v$, or, for short, $D_xF$ or $D_v F$ when not misleading, and for $F:(x,v)\in \R^3\times \R^3\mapsto F(x,v)\in \R$ we denote by $D^2_x F$ and $D^2_v F$  the Hessian matrices of $F$.

\subsection{Some estimates for the flow}\label{sub:appendix-flow}
The purpose of this paragraph is to collect a few estimates for the flow defined in 
\eqref{characteristics-2}. We consider $t \in [0,T]$.

\medskip

First we can compare $(X(s),V(s))$ with the free flow by using \eqref{estimate:Eext-0}:
\begin{equation}\label{nnndeux} 
 |  X(s,x,v) - (x-vs) |
\le \frac{(1+\|f_0\|_{L^1})}{R^2} \left(\frac{ s^2}{2}\right),\quad \forall(s,x,v)\in [0,t]\times \R^3\times \R^3,
\end{equation}
and
\begin{equation}\label{nnntrois} 
 |  V(s,x,v) - v |_{} 
\le \frac{(1+\|f_0\|_{L^1})}{R^2}\, s\quad \forall(s,x,v)\in [0,t]\times \R^3\times \R^3.
\end{equation}

From now on we choose $R:= R(T)$ large with respect to $T$ in the following way:
\begin{equation}\label{cond:R}
24\,K_0(1+\|f_0\|_{L^1})\, (1+T)^3 = R(T)^3,
\end{equation}
where $K_0 \ge 100$.
Next, differentiating \eqref{characteristics-2} with respect to $v$ we  have for $s\in[0,t]$ and $(x,v)\in \R^3\times \R^3$,
\begin{equation*}
\begin{cases}
\displaystyle
 \frac{d}{ds} (D_v X)(s,x,v) =- (D_v V)(s,x,v), \quad (D_v X)(0,x,v) = 0 \\
\displaystyle \frac{d}{ds} (D_v V)(s,x,v) = -D_x( E_{\text{ext}}
+ F_{\text{ext}})(t-s,X(s,x,v))\,D_v X(s,x,v), \quad  (D_v V)(0,x,v) = I_3.
\end{cases}
\end{equation*}
Integrating the system above and using  \eqref{estimate:Eext-1} and the definition \eqref{cond:R} of $R =R(T)$ we obtain
\begin{equation} \label{nun}
\begin{split}
\|D_v &X\|_{L^{\infty}([0,t]\times\R^3\times\R^3)}+ 
\|D_v V\|_{L^{\infty}([0,t]\times\R^3\times\R^3)}
\\&\le  T\, \exp \left( \frac{12\,(1 + \|f_0\|_{L^1})\,T}{R^3} \right) + 1 +
\frac{12\,(1 + \|f_0\|_{L^1})\,T}{R^3} \, T^2 \,\exp \left( \frac{12\,(1 + \|f_0\|_{L^1})\,T}{R^3} \right) \\&\le  4\,(1+T).
\end{split}\end{equation}
The same argument used for spatial derivatives ensures that
\begin{equation}\label{ndeux}
\|D_x X\|_{L^{\infty}([0,t]\times\R^3\times\R^3)}+ 
\|D_x V\|_{L^{\infty}([0,t]\times\R^3\times\R^3)} \le 4\,(1+T).
\end{equation}
Then, using \eqref{nun}-\eqref{ndeux} and \eqref{estimate:Eext-1}, for any $s \in [0,t]$ 
(remembering \eqref{cond:R}), we get
\begin{equation}\label{ineq:DXX}\begin{split}D_x X(s,x,v) &=I_3+\int_0^s \int_0^\sigma D_x( E_{\text{ext}}
+ F_{\text{ext}})(t-\tau,X(\tau))   \, D_x X(\tau)\,d\tau\, d\sigma\\&:=I_3+P_1(s,x,v), \end{split}
\end{equation}
where 
\begin{equation}\label{nntrois} \begin{split}
\|P_1\|_{L^\infty(\R^3\times \R^3)}
 \le 24\,(1+T)\,\frac{1+\|f_0\|_{L^1}}{R^3}\, T^2\le \frac{1}{K_0}.\end{split}
\end{equation}

Similarly,
\begin{equation}\label{ineq:DVX}\begin{split}
D_v X(s,x,v) &=s I_3+\int_0^s\int_0^\sigma D_x( E_{\text{ext}}
+ F_{\text{ext}})(t-\tau,X(\tau))   \, D_v X(\tau)\, d\tau\, d\sigma\\
&:=s\big(I_3+P_2(s,x,v)\big), 
\end{split}
\end{equation}
where
\begin{equation}\label{nnquatre} \begin{split}
\|P_2\|_{L^\infty(\R^3\times \R^3)}& \le 24\,(1+T)\,\frac{1+\|f_0\|_{L^1}}{R^3}T\leq \frac{1}{K_0}. 
\end{split}
\end{equation}
In the same way, using the definition of $R$ we obtain
\begin{equation}\label{nnun}\begin{split}
D_xV(s,x,v)&=sP_3(s,x,v),\\
 \|P_3\|_{L^\infty(\R^3\times \R^3)}&\leq
 24\,(1+T)\,\frac{1+\|f_0\|_{L^1}}{R^3}\leq \frac{1}{K_0}\min\left(1,\frac{1}{ T^2}\right)
\end{split}
\end{equation}
and
\begin{equation}\label{nndeux} \begin{split}
D_v V(s,x,v)&=I_3+sP_4(s,x,v),\quad 
 \|P_4\|_{L^\infty(\R^3\times \R^3)}\leq \frac{1}{K_0}\min\left(1,\frac{1}{ T^2}\right).
\end{split}
\end{equation}
We now fix $s\in(0,t]$ and $(x,v)\in \R^3\times \R^3$ and we omit the dependence w.r.t. $(s,x,v)$ in the matrices. 
We deduce from \eqref{ineq:DXX} and \eqref{nntrois} that (since $1/K_0<1$) $D_x X(s,x,v)$ is invertible and (since $K_0>2$),
\begin{equation}
\label{2013}
|(D_x X)^{-1}|\leq \frac{1}{1-|P_1|}\leq  2.
\end{equation}

We now recall that $K_0 \ge 100$, so that 
\begin{equation}\label{choice:K_0}|\text{det}(I_3+P)|\geq \frac{1}{2}\quad 
\text{for any }|P|\leq \frac{10}{K_0}\, \bigg( \le \frac1{10} \bigg).\end{equation}
Hence by \eqref{nnquatre}  $D_v X(s,x,v)$ is invertible  and 
\begin{equation}\label{nnquatre-det} 
 |\text{det}(D_vX)|^{-1}=s^{-3}|\text{det}(I_3+P_2)|^{-1}\leq {8}{s^{-3}}.
\end{equation}

For later purposes (see the subsection hereafter) we next show that $D_v V-(D_x V)\,(D_x X)^{-1} (D_v X)$ is invertible. Combining \eqref{ineq:DXX} -\eqref{nndeux}, we obtain
\begin{equation}\label{ineq:inverse-M}
D_v V-(D_x V)\,(D_x X)^{-1} (D_v X)=I_3+sP_5, \quad 
|P_5|_{}\leq \frac{1}{K_0}\min\left(1,\frac{1}{T^2}\right)+\frac{4T}{K_0}\min\left(1,\frac{1}{T^2}\right).
\end{equation}
Hence $T\,|P_5|<5/K_0$, so that
$(D_vV - (D_x V)\,(D_x X)^{-1} (D_v X))(s,x,v)$ is indeed invertible and 
\begin{equation}\label{nnnun-M} 
\Big\|\big[D_vV - (D_x V)\,(D_x X)^{-1} (D_v X)\big]^{-1}\Big\|_{L^\infty([0,t]\times \R^3\times \R^3)}\leq
\frac{1}{1-\|TP_5\|_{L^\infty([0,t]\times \R^3\times \R^3)}}\leq 2.
\end{equation}
On the other hand, we will also need the following estimate, which is obtained by using  \eqref{ineq:DXX} -- \eqref{ineq:DVX} and \eqref{nnnun-M}:
\begin{equation}\label{ineq:inverse-N}
(D_x X)^{-1}(D_v X) \big[D_vV - (D_x V)\,(D_x X)^{-1} (D_v X)\big]^{-1}=s P_6,\quad |P_6|\leq 8,
\end{equation}so that
\begin{equation}\label{nnnun} 
\Big\|(D_x X)^{-1}(D_v X) \big[D_vV - (D_x V)\,(D_x X)^{-1} (D_v X)\big]^{-1}\Big\|_{ L^\infty([0,s]\times \R^3\times \R^3)}\leq 8s .
\end{equation}

Finally, writing the differential system satisfied by the second-order derivatives
with respect to $x,v$ of $(X,V)$, and using \eqref{estimate:Eext-2}, we obtain 
\begin{equation}\label{ineq:derivative-2}
\max_{j,k=1,2,3}\Big(\|\pa_{(x,v)_j (x,v)_k}^2  X\|_{L^{\infty}([0,t]\times\R^3\times\R^3)}+
\|\pa_{(x,v)_j(x,v)_k}^2 V\|_{L^{\infty}([0,t]\times\R^3\times\R^3)}\Big) \le 2. 
\end{equation}
Indeed, differentiating again \eqref{characteristics-2} we have
\begin{equation*}
\begin{split}
\pa_{v_j v_k} V^{(i)}(s) &
= -\int_0^s \big[D_x^2(E_{\text{ext}}^{(i)}
+ F_{\text{ext}}^{(i)})(t-\tau,X(\tau))\pa_{v_j} X(\tau)\big]\cdot \pa_{v_k}X(\tau) \, d\tau
\\
& + \int_0^s\int_0^\tau \nabla (E_{\text{ext}}^{(i)}
+ F_{\text{ext}}^{(i)})(t-\tau,X(\tau))\cdot \partial_{v_j v_k} V(\sigma)\,d\sigma\,d\tau 
\end{split}
\end{equation*}
so that
$$ | \pa_{v_jv_k} V(s) | \le 16\, (1+T)^2 \, T\, \frac{120\,(1 + \|f_0\|_{L^1})\,T}{R^4} 
\, \exp \left(  \frac{12\,(1 + \|f_0\|_{L^1})\,T}{R^3} \right) \le 1,$$
and the other terms can be treated in the same way.

Then, we observe that
\begin{equation*}
\begin{split}
\pa_{v_i} P_4(s)=-\frac{1}{s}  \int_0^s  \pa_{v_i}\Big(D_x( E_{\text{ext}}
+ F_{\text{ext}})(t-\tau,X(\tau))   \, D_v X(\tau)\Big)\,d\tau\, 
\end{split}
\end{equation*}
so that in view of the previous estimates, we obtain
$$\max_{i=1,2,3}\|\pa_{v_i} P_4\|_{L^{\infty}([0,t]\times\R^3\times\R^3)} \le 2.$$

On the other hand, since
$$  P_2(s) = \frac1s\,\int_0^s\int_0^\sigma D_x( E_{\text{ext}}
+ F_{\text{ext}})(t-\tau,X(\tau))   \, D_v X(\tau)\, d\tau\, d\sigma, $$
by the same arguments, we get 
\begin{equation}\label{ineq:P2}\max_{i=1,2,3}\Big(\|\pa_{x_i} P_2\|_{L^{\infty}([0,t]\times\R^3\times\R^3)}+\|\pa_{v_i} P_2\|_{L^{\infty}([0,t]\times\R^3\times\R^3)}\Big) \le 2.\end{equation}

Next, recall that
$$P_5=P_4-(D_x V)(D_x X)^{-1}(I_3+P_2),$$
therefore
\begin{equation*}\begin{split}
\pa_{v_i} P_5 
&= \pa_{v_i} P_4 - \pa_{v_i}(D_{x} V) \, (D_x X)^{-1} \, (I_3 + P_2) \\&+ (D_x V)\, (D_x X)^{-1}\, \pa_{v_i}(D_{x} X)
\, (D_x X)^{-1}\, (I + P_2) - (D_x V)\, (D_x X)^{-1}\,\pa_{v_i} P_2, \end{split}\end{equation*}
and arguing in the same way for the spatial derivatives, we obtain
\begin{equation}\label{dvp5}
 \max_{i=1,2,3}\Big(\|\pa_{v_i} P_5\|_{L^{\infty}([0,t]\times\R^3\times\R^3)}+\|\pa_{x_i} P_5\|_{L^{\infty}([0,t]\times\R^3\times\R^3)}\Big) \le 8.
 \end{equation}
\par
Also,
$$P_6=(D_x X)^{-1}(I_3+P_2)(I_3+s P_5)^{-1},$$
therefore
\begin{equation*}
\begin{split}
\pa_{x_i} P_6&=-(D_x X)^{-1}\pa_{x_i}(D_x X) (D_x X)^{-1}(I_3+P_2)(I_3+s P_5)^{-1}\\
&+(D_x X)^{-1}(\pa_{x_i} P_2) (I_3+ s P_5)^{-1}\\
&-(D_x X)^{-1}(D_v X)(I_3+s P_5)^{-1} (\pa_{x_i} P_5 )(I_3+ s P_5)^{-1},
\end{split}
\end{equation*}
where we have used that $s(I_3+P_2)=D_v X.$
Therefore combining the previous estimates (in particular \eqref{nnquatre}, \eqref{2013}, \eqref{nnnun-M}, \eqref{ineq:derivative-2} \eqref{ineq:P2} and \eqref{dvp5}) we obtain
\begin{equation}\label{dxp6}
 \max_{i=1,2,3}\|\pa_{x_i} P_6\|_{L^{\infty}([0,t]\times\R^3\times\R^3)} \le 400.
 \end{equation}

\subsection{The flow and the Duhamel formula}

In this paragraph we establish the formula \eqref{eq:rho2}.
Let $g: \R^3\times \R^3\to \R^3$ be $C^1$. We set
$$h(s,x,v)=g(X(s,x,v),V(s,x,v)).$$  Then
\begin{equation*}
\begin{split}
(D_x h)(s,x,v) &=(D_xg) (X(s,x,v), V(s,x,v)) (D_x X)(s,x,v)+(D_vg) (X(s,x,v), V(s,x,v)) (D_x V)(s,x,v)\\
(D_v h)(s,x,v) &=(D_xg) (X(s,x,v), V(s,x,v)) (D_v X)(s,x,v)+(D_vg) (X(s,x,v), V(s,x,v)) (D_v V)(s,x,v).
\end{split}
\end{equation*}
We next recall that $D_xX$ and $D_vV - (D_x V)\,(D_x X)^{-1} (D_v X)$ are invertible and we set $$M=M(x,v)=(D_v V- D_xV(D_x X)^{-1}D_v X)^{-1}, \quad N=N(x,v)=(D_x X)^{-1}(D_v X) M(x,v).$$
Then, we obtain
\begin{equation*}
\begin{split}
(D_v g)( X(s,x,v), V(s,x,v))&= (D_vh)(x,v)M(x,v)-(D_xh)(x,v)N(x,v)\end{split}\end{equation*}
so that, denoting by $\dev(A)$ the column vector such that $[\dev(A)]_i=\dev(A_i)$ where $A_i$ is the $i$-th line of $A$,
\begin{equation*}
\begin{split}(\dev_v g) (X(s,x,v), V(s,x,v))&=\text{tr}((D_v g)( X(s,x,v), V(s,x,v))
=\text{tr}((D_v h) M)-\text{tr}((D_xh) N)\\
&=\dev_v(Mh)-\dev_v(^tM)\cdot h-\dev_x(Nh)+\dev_x(^tN)\cdot h. 
\end{split}
\end{equation*}
Applying the formula above with $g=(E_{\text{int}}+F_{\text{int}})f(t-s,\cdot,\cdot)$,
 we get after integrating with respect to $v$: 
\begin{equation}\label{nrho}
\begin{split}
\rho_2(t,x)& =\dev_x  \int_0^t\int_{\R^3} N(s,x,v) [(E_{\text{int}}+F_{\text{int}})f](t-s,X(s,x,v),V(s,x,v))\,dv\,ds\\
+ \int_0^t\int_{\R^3}& ( \dev_v (^tM)(x,v)-\dev_x (^tN)(x,v))
[ (E_{\text{int}}+F_{\text{int}})f](t-s,X(s,x,v),V(s,x,v))\,dv\,ds.
\end{split}
\end{equation}

\bigskip

\bigskip

\noindent
{\bf Acknowledgement:} The second author is supported by the ANR projects GEODISP ANR-12-BS01-0015-01  and SCHEQ ANR-12-JS01-0005-01.\\ 
The third author is supported by ERC Grant MAQD 240518 and has been partially supported by GEODISP ANR-12-BS01-0015-01 and by GREFI-MEFI 2011, that are gratefully acknowledged.

\adresse

\end{document}